\theoremstyle{definition}
\theoremstyle{definition}
\newcolumntype{P}[1]{>{\centering\arraybackslash}p{#1}}
\newcolumntype{M}[1]{>{\centering\arraybackslash}m{#1}}
\crefname{section}{\S}{\S\S}
\Crefname{section}{\S}{\S\S}
\Crefname{assumption}{Assumption}{Assumptions}
\Crefname{invariant}{Invariant}{Invariants}
\Crefname{observation}{Observation}{Observations}
\crefname{module}{module}{modules}
\Crefname{module}{Module}{Modules}
\newcommand{\strongVal}{\text{strong unanimity}\xspace}
\newcommand{\name}{\textsc{ErrorFreeExt}\xspace}
\newcommand{\slowext}{\textsc{SlowExt}\xspace}
\newcommand{\nameopt}{\textsc{HashExt}\xspace}
\newcommand{\reducecool}{\textsc{Reduce-COOL}\xspace}
\definecolor{lightgray}{gray}{0.90}
\renewenvironment{leftbar}[1][\hsize]
{%
\MakeFramed{\hsize#1\advance\hsize-\width\FrameRestore}%
}
{\endMakeFramed}
\algnewcommand{\BlueComment}[1]{\textcolor{blue}{\hfill\(\triangleright\) #1}}
\algnewcommand{\LineComment}[1]{\textcolor{blue}{ \(\triangleright\) #1}}
\crefname{lstlisting}{listing}{listings}
\Crefname{lstlisting}{Listing}{Listings}
\crefname{code}{line}{lines}
\Crefname{code}{Line}{Lines}
\definecolor{mygreen}{rgb}{0.254,0.572,0.294}
\definecolor{mygray}{rgb}{0.5,0.5,0.5}
\definecolor{myorange}{rgb}{1,0.35,0}
\definecolor{mymauve}{rgb}{0.58,0,0.82}
\definecolor{myblue}{rgb}{0.2,0.4,0.6}
\definecolor{rakos4orange}{RGB}{255,165,0}
\definecolor{rakos4blue}{RGB}{14,48,173}
\definecolor{rakos4lblue}{RGB}{92,172,238}
\definecolor{rakos4dgray}{RGB}{77,77,77}
\definecolor{plainred}{RGB}{211,63,63}
\definecolor{plainorange}{RGB}{221,105,41}
\lstdefinelanguage{Golang}%
  {morekeywords=[1]{package,import,struct,defer,panic,%
     recover,select,var,const,iota, class},%
   morekeywords=[2]{string,uint,uint8,uint16,uint32,uint64,int,int8,int16,%
     int32,int64,bool,float32,float64,complex64,complex128,byte,rune,uintptr,%
     error,interface,node},%
   morekeywords=[3]{map,slice,make,new,nil,len,cap,copy,close,%
     delete,append,real,imag,complex,chan,},%
   morekeywords=[4]{break,continue,goto,switch,case,fallthrough,%
    default,},%
   morekeywords=[5]{Println,Printf,Error,Send},%
   sensitive=true,%
   morecomment=[l]{//},%
   morecomment=[s]{/*}{*/},%
   morestring=[b]",%
   morestring=[s]{`}{`},%
   }
\small\color{mygray}\textnormal,
\scriptsize\color{mygray}, 
\title{Efficient Signature-Free Validated Agreement}
\titlerunning{Efficient Signature-Free Validated Agreement}
\author{Pierre Civit}{Ecole Polytechnique Fédérale de Lausanne (EPFL), Switzerland}{}{}{}
\author{Muhammad Ayaz Dzulfikar}{NUS Singapore, Singapore}{}{}{}
\author{Seth Gilbert}{NUS Singapore, Singapore}{}{}{}
\author{Rachid Guerraoui}{Ecole Polytechnique Fédérale de Lausanne (EPFL), Switzerland}{}{}{}
\author{Jovan Komatovic}{Ecole Polytechnique Fédérale de Lausanne (EPFL), Switzerland}{}{}{}
\author{Manuel Vidigueira}{Ecole Polytechnique Fédérale de Lausanne (EPFL), Switzerland}{}{}{}
\author{Igor Zablotchi}{Mysten Labs, Switzerland}{}{}{}
\authorrunning{Civit, Dzulfikar, Gilbert, Guerraoui, Komatovic, Vidigueira, and Zablotchi} 
\keywords{Validated Byzantine agreement, Bit complexity, Round complexity} 
\begin{document}

\maketitle

\begin{abstract}
Byzantine agreement enables $n$ processes to agree on a common $L$-bit value, despite up to $t > 0$ arbitrary failures. 
A long line of work has been dedicated to improving the bit complexity of Byzantine agreement in synchrony. 
This has culminated in COOL, an error-free (deterministically secure against a computationally unbounded adversary) solution that achieves $O( nL + n^2 \log n )$ worst-case bit complexity (which is optimal for $L \geq n \log n$ according to the Dolev-Reischuk lower bound).
COOL satisfies \strongVal: if all correct processes propose the same value, only that value can be decided.
Whenever correct processes do not agree \emph{a priori} (there is no unanimity), they may decide a default value $\bot$ from COOL.

Strong unanimity is, however, not sufficient for today's state machine replication (SMR) and blockchain protocols.
These systems value progress and require a decided value to always be \emph{valid} (according to a predetermined predicate), excluding default decisions (such as $\bot$) even in cases where there is no unanimity a priori.
\emph{Validated Byzantine agreement} satisfies this property (called \emph{external validity}).
Yet, the best error-free (or even signature-free) validated agreement solutions achieve only $O(n^2L)$ bit complexity, a far cry from the $\Omega(nL+n^2)$ Dolev-Reischuk lower bound.
Is it possible to bridge this complexity gap?

We answer the question affirmatively.
Namely, we present two new synchronous algorithms for validated Byzantine agreement, \nameopt and \name, with different trade-offs.
Both algorithms are (1) signature-free, (2) optimally resilient (tolerate up to $t < n / 3$ failures), and (3) early-stopping (terminate in $O(f+1)$ rounds, where $f \leq t$ denotes the actual number of failures).
On the one hand, \nameopt uses only hashes and achieves $O(nL + n^3\kappa)$ bit complexity, which is optimal for $L \geq n^2\kappa$ (where $\kappa$ is the size of a hash).
On the other hand, \name is error-free, using no cryptography whatsoever, and achieves $O\big((nL + n^2)\log n \big)$ bit complexity, which is near-optimal for any $L$.
\end{abstract}

\section{Introduction} \label{section:introduction}

Byzantine agreement~\cite{Lamport1982} is arguably the most important problem of distributed computing.
It lies at the heart of state machine replication (SMR)~\cite{adya2002farsite,CL02,kotla2004high,abd2005fault,amir2006scaling,kotla2007zyzzyva,veronese2011efficient,malkhi2019flexible,momose2021multi} and blockchain systems~\cite{luu2015scp,buchman2016tendermint,solida,chen2016algorand,abraham2016solidus,CGL18,correia2019byzantine}.
Additionally, Byzantine agreement plays an essential role in cryptographic protocols such as multi-party computation~\cite{goldreich1987play,BGW88,Keller2023,Beerliova-Trubiniova2007,Garg2019,Chandran2015}.

Byzantine agreement operates among $n$ processes, out of which up to $t > 0$ can be corrupted by the adversary.
A corrupted process is said to be \emph{faulty} and can behave arbitrarily; a non-faulty process is said to be \emph{correct} and follows the prescribed protocol.
Let $\mathsf{Value}$ denote the set of $L$-bit values.  
(As this paper is concerned with multi-valued Byzantine agreement, we set no restrictions on the cardinality of the $\mathsf{Value}$ set.)
During the agreement protocol, each process \emph{proposes} exactly one value, and eventually the protocol outputs a single \emph{decision}, as per the following interface:
\begin{compactitem}
    \item \textbf{request} $\mathsf{propose}(v \in \mathsf{Value}):$ a process proposes an $L$-bit value $v$.

    \item \textbf{indication} $\mathsf{decide}(v' \in \mathsf{Value})$: a process decides an $L$-bit value $v'$.
\end{compactitem}
Intuitively, Byzantine agreement ensures that all correct processes agree on the same \emph{admissible} value.
(We formally define the properties of Byzantine agreement in the later part of this section.)

\smallskip
\noindent \textbf{Practical notion of value-admissibility.}
A critical question in designing practical Byzantine agreement algorithms is which values should be considered admissible.  Traditionally, Byzantine agreement algorithms treated the proposals of correct processes as admissible.
Consequently, they have focused on properties like \emph{\strongVal}~\cite{abraham2019asymptotically,Chen2021,Nayak2020}: if every correct process proposes the same value $v$, then $v$ is the only possible decision. 
Notice that in such cases, if even one correct process proposes a value different from the (same) value held by all other $n-1$ processes, it is perfectly legal to decide some default ``null op'' value (e.g., $\bot$); it is also perfectly legal to decide a value that is ``nonsense'' from the perspective of the underlying application.
Thus, unless all correct processes agree \emph{a priori}, Byzantine agreement algorithms with \strongVal are not guaranteed to make any ``real'' progress.

Many modern applications may require a stronger requirement: even if correct processes propose different values, the resulting decision should still adhere to some \emph{validity} test, ensuring that the decision is not ``wasted''.
Such a condition is usually called \emph{external validity}~\cite{Cachin2001,lamport2019concurrency,LL0W20,spiegelman2020search,abraham2019asymptotically, ZZZDHWL22,GelashviliKSSX22,LL022,song2024flexbft}: any decided value must be valid according to a predetermined logical predicate.
We underline that the external validity property is prevalent in today's blockchain systems. 
Indeed, as long as a produced block is valid (e.g., no double-spending), the block can safely be added to the chain (irrespectively of who produced it).\footnote{Let us underline that real-world blockchain systems might be concerned with \emph{fairness}, thus making the question of ``who produced a block'' important. However, this work does not focus on fairness (or any similar topic \cite{GLTZ24,K0GJ20}).}

\smallskip
\noindent \textbf{Synchronous validated agreement.}
We study \emph{validated agreement}, a variant of the Byzantine agreement problem satisfying the external validity property, in the standard synchronous setting.
Formally, let $\mathsf{valid}: \mathsf{Value} \to \{ \mathit{true}, \mathit{false} \}$ be any predetermined predicate.
Importantly, correct processes propose valid values.
The following properties are guaranteed by validated agreement:
\begin{compactitem}
    \item \emph{Agreement:} No two correct processes decide different values.

    \item \emph{Integrity:} No correct process decides more than once.

    \item \emph{Termination:} All correct processes eventually decide.

    \item \emph{Strong unanimity:} If all correct processes propose the same value $v$, then no correct process decides any value $v' \neq v$.

    \item \emph{External validity:} If a correct process decides a value $v$, then $\mathsf{valid}(v) = \mathit{true}$.
\end{compactitem}
We underline that validated agreement algorithms usually do not satisfy \strongVal (but only external validity).
Additionally, we emphasize that obtaining an agreement algorithm $\mathcal{A}^{\star}$ that satisfies \emph{both} strong unanimity and external validity is straightforward given (1) an agreement algorithm $\mathcal{A}_1$ satisfying only \strongVal, and (2) an agreement algorithm $\mathcal{A}_2$ satisfying only external validity.
Indeed, to obtain $\mathcal{A}^{\star}$, processes run $\mathcal{A}_1$ and $\mathcal{A}_2$ in parallel.
Then, processes decide (1) the value of $\mathcal{A}_1$ if that value is valid, or (2) the value of $\mathcal{A}_2$ otherwise.

\smallskip
\noindent \textbf{Complexity of synchronous validated agreement.}
There exist two dominant worst-case complexity metrics when analyzing any synchronous validated agreement algorithm: (1) \emph{the bit complexity}, the total number of bits correct processes send, and (2) \emph{the round complexity}, the number of synchronous rounds it takes for all correct processes to decide (and halt).
The lower bound on the bit complexity of validated agreement is $\Omega(nL + n^2)$: (1) the ``$nL$'' term comes from the fact that each correct process needs to receive the decided value, and (2) the ``$n^2$'' term comes from the seminal Dolev-Reischuk bound~\cite{dolev1985bounds} stating that even agreeing on a single bit requires $\Omega(n^2)$ exchanged bits.
We emphasize that the $\Omega(nL + n^2)$ lower bound holds even in \emph{failure-free} executions in the signature-free world (with signatures, the bound does not hold \cite{spiegelman2020search}).
The lower bound on the round complexity is $\Omega(f+1)$~\cite{dolev1990early}, where $f \leq t$ denotes the \emph{actual} number of failures.
If an algorithm achieves $O(f+1)$ round complexity, it is said that the algorithm is \emph{early-stopping}.\footnote{We consider only \emph{asymptotic} early-stopping (as in~\cite{Lenzen2022}) instead of \emph{strict} early stopping (as in~\cite{dolev1990early}) that requires termination in exactly $f + 2$ rounds.}

\smallskip
\noindent \textbf{State-of-the-art.}
The most efficient known validated agreement algorithm is \textsc{Ada-Dare}~\cite{ada_dare_to_appear_podc24}.
\textsc{Ada-Dare} achieves $O(nL + n^2 \kappa)$ bit complexity (optimal for $L > n\kappa$), where $\kappa$ denotes a security parameter.
However, \textsc{Ada-Dare} internally utilizes threshold signatures~\cite{Shoup00}.
(We emphasize that if $t < n / 3$, some partially synchronous authenticated algorithms~\cite{YPAKT22,Camenisch2022} can trivially be adapted to achieve $O(nL + n^2\kappa)$ bit complexity in synchrony; \textsc{Ada-Dare} tolerates up to $t < n / 2$ failures.)
Perhaps surprisingly, the best \emph{signature-free} validated agreement algorithms~\cite{Lenzen2022,berman1992bit,CoanW92,Chen2021} still achieve only $O(n^2L)$ bit complexity, a far cry from the $\Omega(nL + n^2)$ lower bound.
  
The fact that no efficient signature-free validated agreement is known becomes even more surprising when considering that optimal signature-free algorithms exist for the ``traditional'' Byzantine agreement problem.
COOL~\cite{Chen2021} is a Byzantine agreement algorithm satisfying (only) \strongVal while exchanging $O(nL + n^2 \log n)$ bits.
Although it was not the goal of the COOL algorithm, COOL can trivially achieve early-stopping (by internally utilizing an early-stopping binary agreement such as~\cite{Lenzen2022}).
In addition, COOL is optimally resilient (tolerates up to $t < n / 3$ failures).
Importantly, COOL uses no cryptography whatsoever: we say that COOL is \emph{error-free} as it is deterministically secure against a computationally unbounded adversary.

Is there a fundamental complexity gap between external validity and \strongVal in the signature-free world?
Can signature-free validated agreement be solved efficiently in synchrony?
These are the questions we study in this paper.

\subsection{Contributions}

In this paper, we present the first validated agreement algorithms achieving $o(n^2L)$ bit complexity \emph{without} signatures:
\begin{compactitem}
    \item First, we introduce \nameopt, a hash-based algorithm that exchanges $O(nL + n^3\kappa)$ bits (optimal for $L \geq n^2\kappa$), where $\kappa$ denotes the size of a hash.

    \item Second, we provide \name, an error-free (i.e., cryptography-free) solution that achieves $O\big( (nL + n^2) \log n \big)$ bit complexity and is thus nearly-optimal.
\end{compactitem}
Importantly, both \nameopt and \name are (1) optimally resilient (tolerate up to $t < n / 3$ failures), and (2) early-stopping (terminate in $O(f + 1)$ synchronous rounds).
A comparison of our new algorithms with the state-of-the-art can be found in \Cref{fig:complexities_summary_short}.

\begin{table}[th]
\centering
\footnotesize
\begin{tabular}{ |P{2.7cm}|P{1.5cm}|P{3.5cm}|P{1.5cm}|P{2.1cm}|  }
 \hline
 Protocol & Validity & Bit complexity & Resilience & Cryptography \\
 \hline
 \hline
 COOL~\cite{Chen2021,Lenzen2022}                 & S & $O(nL + n^2 \log n)$ & $n > 3t$ & None \\
 Parallel COOL \cite{Chen2021,Lenzen2022}                 & IC $\to$ (S + E) & $O(n^2L + n^3 \log n)$ & $n > 3t$ & None \\

$\textsc{Ada-Dare}_{ic}$~\cite{ada_dare_to_appear_podc24}                 & IC $\to$ (S + E) & $O(n^2L + n^2\kappa)$ & $n > 2t$ & Threshold Sign. \\

 $\textsc{Ada-Dare}_{su}$~\cite{ada_dare_to_appear_podc24}                 & S + E & $O(nL + n^2\kappa)$ & $n > 2t$ & Threshold Sign. \\

\textbf{\nameopt}          & S + E      & $O(nL + n^3 \kappa)$ & $n > 3t$ & Hash \\
 
 \textbf{\name}          & S + E      & $O\big( (nL + n^2)\log n \big)$ & $n > 3t$ & None \\
 \hline
 \hline
 Lower bound~\cite{dolev1985bounds,WBAlowerBound} & Any & $\Omega(nL + n^2)$ & $t \in \Omega(n)$ & Any \\
 \hline
\end{tabular}
    \caption{Performance of deterministic synchronous agreement algorithms with $L$-bit values and $\kappa$-bit security parameter.
    S stands for ``\strongVal'', E stands for ``external validity'', and IC stands for ``interactive consistency'' (where processes agree on the proposals of all processes).
    (There exists a trivial reduction from IC to S + E, where each correct process decides the most represented valid value in the decided vector. Hence, we write that IC implies S + E.)
    All considered algorithms are early-stopping, except for $\textsc{Ada-Dare}_{ic}$ and $\textsc{Ada-Dare}_{su}$ (whose goal was not early-stopping).
    Randomized algorithms are discussed in \Cref{section:related_work_extended}.
    }
\label{fig:complexities_summary_short}
\end{table}

\subsection{Overview \& Technical Challenges} \label{subsection:technical_challenges}

\noindent \textbf{Why is efficient validated agreement hard?}
To solve the validated agreement problem (i.e., to satisfy external validity), a decided value must be valid.
Therefore, a validated agreement algorithm needs to ensure that it is operating on (or converging to) a valid value.
If the value (in its entirety) is attached to every message, satisfying external validity is (relatively) simple: each message can be individually validated and invalid messages can be ignored.
Unfortunately, attaching an $L$-bit value to each message is inherently expensive, yielding a sub-optimal bit complexity of $\Omega(n^2L)$.

To avoid attaching an $L$-bit value to each message, the most efficient solutions to validated agreement (designed for arbitrary-sized values) involve coding techniques, where an $L$-bit value is split into $n$ different shares of $O(\frac{L}{n} + \log n)$ size.  The goal is to (somehow) reach agreement on a valid value using $O(n^2)$ messages of $O(\frac{L}{n} + \log n)$ bits, for a total of $O(nL + n^2\log n)$ exchanged bits.
However, this ``coding-based'' design introduces a new challenge.
How can a process that only holds one share (or constantly many shares) know that the corresponding value is valid?
For example, to check if a split value $v$ is valid, correct processes might attempt to reconstruct it, expending $O(nL+n^2\log n)$ bits in the process (as reconstruction is expensive).  
Since there may be (in the worst case) up to $t \in \Omega(n)$ invalid values (from as many faulty processes), this reconstruction process might have to be repeated many times before a valid value is found, resulting in (say) sub-optimal $O(n^2L+n^3\log n)$ total communication.

\smallskip
\noindent \textbf{Overview of \nameopt.}
To overview \nameopt's design, we first revisit how efficient signature-based validated agreement is solved (see, e.g.,~\cite{ada_dare_to_appear_podc24}).
In the signature-based paradigm, efficient validated agreement algorithms adopt the following approach: 
(1) First, each process disseminates its value (using coding techniques) and obtains a \emph{proof of retriveability} (PoR). A PoR is a cryptographic object containing a digest (of a value) and proving that (i) the pre-image of the digest can be retrieved by all correct processes, and (ii) the pre-image of the digest is valid.
(2) Second, processes agree on a single PoR.
(3) Third, processes retrieve a value corresponding to the agreed-upon PoR.
Importantly, each PoR must be ``self-certifying'': once a correct process obtains an alleged PoR, the process must be able to determine if the PoR is valid to be sure that if this PoR gets decided in the second step, a valid value can be retrieved.
That is why PoRs are usually implemented using signatures: if a PoR contains a \emph{signature-based certificate}, processes can be confident in its validity.
Due to this ``self-certifying'' nature of PoRs, it seems challenging to adapt them to the signature-free world.

To design a hash-based validated agreement algorithm \nameopt, we (roughly) follow the aforementioned three-step approach with one fundamental difference: \nameopt utilizes \emph{implicit} (``non-self-certyfing'') PoRs.
Given any observed digest $d$, a correct process executing \nameopt can determine if (1) the pre-image $v$ of digest $d$ can be retrieved, and (2) $v$ is valid.
There is no \emph{proof} that the valid pre-image can be retrieved -- only the protocol design ensures this guarantee.

\smallskip
\noindent \textbf{Overview of \name.}
To implement \name, our error-free (cryptography-free) near-optimal solution, we rely on a recursive structure -- carefully adapting to long values the recursive design proposed by~\cite{berman1992bit,CoanW92,Lenzen2022,Momose2021} that is only concerned with constant-sized values. 
At each recursive iteration with $n$ processes, processes are statically partitioned into two halves that run the algorithm among $n / 2$ processes.
Moreover, each recursive iteration exhibits ``additional work'' through the \emph{graded consensus}~\cite{AW23,Abraham2022} primitive.
Intuitively, the graded consensus primitive reconciles decisions made by two distinct halves to ensure that all processes agree on a unique valid value.
Due to the recursive nature of \name, its bit complexity depends on the complexity of graded consensus.
To obtain \name's near-optimal $O\big( (nL + n^2) \log n \big)$ bit complexity, we observe that a graded consensus algorithm with $O(nL + n^2 \log n)$ bits can be derived from the ``reducing'' technique introduced by the previously mentioned COOL~\cite{Chen2021} protocol.\footnote{A similar observation has recently been made for (balanced) synchronous \emph{gradecast}, a sender-oriented counterpart to graded consensus~\cite{AsharovChandramouli24}.}

\smallskip
\noindent \textbf{Roadmap.}
We define the system model and introduce some preliminaries in \Cref{section:preliminaries}.
We present \nameopt in \Cref{section:optimal}, whereas \name is introduced in \Cref{section:general_framework}.
We discuss related work in \Cref{section:related_work_extended}.
Finally, we conclude in \Cref{section:conclusion}.
Omitted pseudocode and detailed proofs are relegated to the optional appendix.
\section{System Model \& Preliminaries} \label{section:preliminaries}


\subsection{System Model} \label{subsection:system_model}

\noindent\textbf{Processes.}
We consider a static set $\Pi = \{p_1, p_2, ..., p_n\}$ of $n$ processes, where each process acts as a deterministic state machine.
Our \nameopt (resp., \name) algorithm implements validated agreement against a computationally bounded (resp., unbounded) adversary that can corrupt up to $t < n / 3$ processes at any time during an execution.
(We underline that no signature-free agreement algorithm can tolerate $n / 3$ or more failures~\cite{LSP82}, disregarding the restricted-resource model~\cite{Garay2020} that allows for a higher corruption threshold.)
A corrupted process is said to be \emph{faulty}; a non-faulty process is said to be \emph{correct}.
We denote by $f \leq t$ the actual number of faulty processes; we emphasize that $f$ is not known.

\smallskip
\noindent \textbf{Stopping.} Each correct process can invoke a special $\mathsf{stop}$ request while executing any protocol. 
Once a correct process stops executing a protocol, it ceases taking any steps (e.g., sending and receiving messages).

\smallskip
\noindent \textbf{Communication network.}
Processes communicate by exchanging messages over an authenticated point-to-point network.
The communication network is reliable: if a correct process sends a message to a correct process, the message is eventually received.

\smallskip
\noindent\textbf{Synchrony.}
We assume the standard synchronous environment in which the computation unfolds in synchronous $\delta$-long rounds, where $\delta$ denotes the known upper bound on message delays.
In each round $1, 2, ... \in \mathbb{N}$, each process (1) performs (deterministic) local computations, (2) sends (possibly different) messages to (a subset of) the other processes, and (3) receives the messages sent to it by the end of the round. 

\subsection{Complexity Measures} \label{subsection:complexity_measures}
Let $\mathsf{Agreement}$ be any synchronous validated agreement algorithm, and let $\mathcal{E}(\mathsf{Agreement})$ denote the set of $\mathsf{Agreement}$'s executions.
Let $\alpha \in \mathcal{E}(\mathsf{Agreement})$ be any execution.
The bit complexity of $\alpha$ is the number of bits correct processes collectively send throughout $\alpha$.
The \emph{bit complexity} of $\mathsf{Agreement}$ is then defined as 
\[
\max_{\alpha \in \mathcal{E}(\mathsf{Agreement})}\bigg\{\text{the bit complexity of } \alpha\bigg\}.
\]
Similarly, the latency complexity of $\alpha$ is the time it takes for all correct processes to decide and stop in $\alpha$.
The \emph{latency complexity} of $\mathsf{Agreement}$ is then defined as 
\[
\max_{\alpha \in \mathcal{E}(\mathsf{Agreement})}\bigg\{\text{the latency complexity of } \alpha\bigg\}.
\]
We say that $\mathsf{Agreement}$ satisfies \emph{early stopping} if and only if the latency complexity of $\mathsf{Agreement}$ belongs to $O\big( (f + 1)\delta \big)$.
Note that the maximum number of rounds $\mathsf{Agreement}$ requires to decide -- the \emph{round complexity} of $\mathsf{Agreement}$ -- is equal to the latency complexity of $\mathsf{Agreement}$ divided by $\delta$.
Throughout the paper, we use the latency and round complexity interchangeably.

\subsection{Building Blocks} \label{subsection:preliminaries}
This subsection overviews building blocks utilized in both \nameopt and \name.

\smallskip
\noindent \textbf{Reed-Solomon codes.}
\nameopt and \name rely on Reed-Solomon (RS) codes~\cite{reed1960}.
We use $\mathsf{RSEnc}$ and $\mathsf{RSDec}$ to denote RS' encoding and decoding algorithms.
In brief, $\mathsf{RSEnc}(M, m, k)$ takes as input a message $M$ consisting of $k$ symbols, treats it as a polynomial of degree $k - 1$, and outputs $m$ evaluations of the corresponding polynomial.
Similarly, $\mathsf{RSDec}(k, r, T)$ takes as input a set of symbols $T$ (some of the symbols might be incorrect) and outputs a degree $k - 1$ polynomial (i.e., $k$ symbols) by correcting up to $r$ errors (incorrect symbols) in $T$.
Note that $\mathsf{RSDec}$ can correct up to $r$ errors in $T$ and output the original message given that $|T| \geq k + 2r$~\cite{macwilliams1977theory}.
Importantly, the bit-size of an RS symbol obtained by the $\mathsf{RSEnc}(M, m, k)$ algorithm is $O(\frac{|M|}{k} + \log m ) $, where $|M|$ denotes the bit-size of the message $M$.

\smallskip
\noindent \textbf{Graded consensus.} 
Both \nameopt and \name make extensive use of the graded consensus primitive~\cite{AW23,Abraham2022} (also known as Adopt-Commit~\cite{delporte2021weakest}), whose formal specification is given in \Cref{mod:graded-consensus}.
In brief, graded consensus allows processes to propose their input value from the $\mathsf{GC\_Value}$ set and decide on some value from the $\mathsf{GC\_Value}$ set with some binary grade.
The graded consensus primitive ensures agreement among correct processes only if some correct process decides a value with (higher) grade $1$.
If no correct process decides with grade $1$, graded consensus allows correct processes to disagree.
(Thus, graded consensus is a weaker problem than validated agreement.) 
\nameopt employs the graded consensus primitive on hash values ($\mathsf{GC\_Value} \equiv \text{the set of all hash values}$).
On the other hand, \name utilizes graded consensus on values proposed to validated agreement ($\mathsf{GC\_Value} \equiv \mathsf{Value}$).

\begin{module}[ht]
\caption{Graded consensus}
\label{mod:graded-consensus}
\footnotesize
\begin{algorithmic}[1]

\Statex \textbf{Events:}
\begin{compactitem}
    \item \emph{request} $\mathsf{propose}(v \in \mathsf{GC\_Value})$: a process proposes a value $v \in \mathsf{GC\_Value}$.
    \item \emph{indication} $\mathsf{decide}(v' \in \mathsf{GC\_Value}, g' \in \{0, 1\})$: a process decides a value $v' \in \mathsf{GC\_Value}$ with a grade $g'$.
\end{compactitem}

\medskip 
\Statex \textbf{Assumed behavior:} 
\begin{compactitem}
    \item Every correct process proposes exactly once.
    \item All correct processes propose simultaneously (i.e., in the same round).
    \textcolor{orange}{(We revisit this assumption for the graded consensus primitive employed in \name; see \Cref{subsection:name_pseudocode}.)}
\end{compactitem}

\medskip 
\Statex \textbf{Properties:}

\begin{compactitem}
    \item \emph{Strong unanimity:} If all correct processes propose the same value $v$ and a correct process decides a pair $(v', g')$, then $v' = v$ and $g' = 1$.

    \item \emph{Justification:} If a correct process decides a pair $(v', \cdot)$, then $v'$ was proposed by a correct process.

    
    \item \emph{Consistency:} If any correct process decides a pair $(v, 1)$, then no correct process decides any pair $(v' \neq v, \cdot)$.
    
    \item \emph{Integrity:} No correct process decides more than once.
    
    \item \emph{Termination:} All correct processes decide simultaneously (i.e., in the same round).
    \textcolor{orange}{(The ``simultaneous'' termination is revisited in the graded consensus primitive employed in \name; see \Cref{subsection:name_pseudocode}.)}
\end{compactitem}

\end{algorithmic}
\end{module}
\section{\nameopt: Optimal Early-Stopping Hash-Based Solution} \label{section:optimal}

In this section, we present \nameopt, our hash-based validated Byzantine agreement solution that achieves $O(nL + n^3 \kappa)$ bit complexity, which is optimal for $L \geq n^2 \kappa$ ($\kappa$ denotes the size of a hash value).
Additionally, \nameopt is (1) optimally resilient as it tolerates up to $t < n / 3$ faults, and (2) early-stopping as it terminates in $O\big( (f + 1) \delta \big)$ time (i.e., $O(f + 1)$ synchronous rounds). 

We start by introducing the building blocks of \nameopt (\Cref{subsection:nameopt_building_blocks}).
Then, we present \nameopt's pseudocode (\Cref{subsection:nameopt_pseudocode}).
Finally, we present a proof sketch of \nameopt's correctness and complexity (\Cref{subsection:nameopt_proof_sketch}).
We relegate a proof of \nameopt's correctness and complexity to \Cref{section:nameopt_proof}.

\subsection{Building Blocks} \label{subsection:nameopt_building_blocks}

\noindent \textbf{Digests.}
We assume a collision-resistant function $\mathsf{digest}: \mathsf{Value} \to \mathsf{Digest} \equiv \{0, 1\}^{\kappa}$, where $\kappa$ is a security parameter.
Concretely, the $\mathsf{digest}(v \in \mathsf{Value})$ function performs the following steps: (1) it encodes value $v$ into $n$ RS symbols $[m_1, m_2, ..., m_n] \gets \mathsf{RSEnc}(v, n, t + 1)$; (2) it aggregates $[m_1, m_2, ..., m_n]$ into an accumulation value $z_v$ using the Merkle-tree-based (i.e., hash-based) cryptographic accumulator~\cite{merkle-tree-crypto87} (see \Cref{section:cryptographic_preliminaries}); (3) it returns $z_v$.
Note that, as we employ hash-based Merkle trees, an accumulation value $z_v$ is a hash.
The formal definition of the $\mathsf{digest}(\cdot)$ function can be found in \Cref{section:cryptographic_preliminaries}. 

\smallskip
\noindent \textbf{Data dissemination.}
The formal specification of the data dissemination primitive is given in \Cref{mod:hashfin}.
Intuitively, the data dissemination primitive allows all correct processes to obtain the same value $v^{\star}$ assuming that (1) all correct processes a priori agree on the digest $d^{\star}$ of value $v^{\star}$ (even if processes do not know the pre-image $v^{\star}$ of $d^{\star}$ a priori), and (2) at least one correct process initially holds the pre-image $v^{\star}$.
We relegate the implementation of the data dissemination primitive to \Cref{subsection:hashfin_proof}.
In brief, the implementation heavily relies on Merkle-tree-based accumulators (see \Cref{section:cryptographic_preliminaries}) and it exchanges $O(nL + n^2 \kappa \log n)$ bits while terminating in $2\delta$ time.

\begin{module}
\caption{Data dissemination}
\label{mod:hashfin}
\footnotesize
\begin{algorithmic}[1]

\Statex \textbf{Events:}

\Statex 
\begin{compactitem}
    \item \emph{request} $\mathsf{input}(v \in \mathsf{Value} \cup \{\bot\}, d \in \mathsf{Digest})$: a process inputs a value $v$ (or $\bot$) and a digest $d$.

    \item \emph{request} $\mathsf{output}(v' \in \mathsf{Value})$: a process outputs a value $v'$.
\end{compactitem}

\medskip 
\Statex \textbf{Assumed behavior:} 
\begin{compactitem}
    \item All correct processes input a pair.
    We underline that correct processes might not input their values simultaneously (i.e., at the exact same round).

    \item No correct process stops unless it has previously output a value.

    \item There exists a value $v^{\star} \in \mathsf{Value}$ ($v^{\star} \neq \bot$) and a digest $d^{\star} = \mathsf{digest}(v^{\star})$ such that:
    \begin{compactitem}
        \item If any correct process inputs a pair $(v \in \mathsf{Value}, \cdot)$, then $v = v^{\star}$.

        \item If any correct process inputs a pair $(\cdot, d \in \mathsf{Digest})$, then $d = d^{\star}$.

        \item At least one correct process inputs a pair $(v^{\star}, d^{\star})$.
    \end{compactitem}

\end{compactitem}

\medskip 
\Statex \textbf{Properties:}
\Statex 
\begin{compactitem}
    \item \emph{Safety:} If any correct process outputs a value $v$, then $v = v^{\star}$.

    \item \emph{Liveness:} Let $\tau$ be the first time by which all correct processes have input a pair.
    Then, every correct process outputs a value by time $\tau + 2\delta$.

    \item \emph{Integrity:} No correct process outputs a value unless it has previously input a pair.
\end{compactitem}
\end{algorithmic}
\end{module}

\subsection{Pseudocode} \label{subsection:nameopt_pseudocode}

The pseudocode of \nameopt is given in \Cref{algorithm:optimal_trivial}.

\smallskip
\noindent \textbf{Key idea.}
The crucial idea behind \nameopt is to ensure that all correct processes agree on a digest $d^{\star}$ of a \emph{valid} value $v^{\star}$ such that at least \emph{one} correct process knows the pre-image $v^{\star}$ of $d^{\star}$.
To solve validated agreement, it then suffices to utilize the data dissemination primitive (see \Cref{mod:hashfin}): if (1) all correct processes input the same digest $d^{\star}$, and (2) at least one correct process inputs the pre-image $v^{\star}$ of $d^{\star}$, then all correct processes agree on the (valid) value $v^{\star}$.
Given that the data dissemination primitive exchanges $O(nL + n^2 \kappa \log n)$ bits and terminates in $2$ rounds, \nameopt dedicates $O(nL + n^3 \kappa)$ bits and $O(f + 1)$ rounds to agreeing on digest $d^{\star}$.

\begin{algorithm} [hp]
\caption{\nameopt: Pseudocode (for process $p_i$)}
\label{algorithm:optimal_trivial}
\begin{algorithmic} [1]
\footnotesize

\State \textbf{Uses:}
\State \hskip2em Graded consensus, \textbf{instances} $\mathcal{GC}_1[V]$, $\mathcal{GC}_2[V]$, for each view $V \in [1, t + 1]$  \BlueComment{bits: $O(n^2\kappa)$; rounds: $2$}
\State \hskip2em Data dissemination, \textbf{instance} $\mathcal{DD}$ \BlueComment{bits: $O( nL + n^2 \kappa \log n )$; rounds: $2$}

\medskip
\State \textbf{Local variables:}
\State \hskip2em $\mathsf{Value}$ $v_i \gets p_i$'s proposal
\State \hskip2em $\mathsf{Digest}$ $\mathit{locked}_i \gets \bot$ \BlueComment{locked digest} \label{line:variable_locked}
\State \hskip2em $\mathsf{Digest}$ $\mathit{vote}_i \gets \bot$ \BlueComment{digest to be voted for} \label{line:variable_vote}
\State \hskip2em $\mathsf{View}$ $\mathit{committeed\_view}_i \gets \mathit{\bot}$
\State \hskip2em $\mathsf{Map}( \mathsf{Digest} \to \mathsf{Value})$ $\mathit{known\_values}_i \gets \{\bot, \bot, ..., \bot\}$ \BlueComment{values corresponding to digests} \label{line:variable_known}
\State \hskip2em $\mathsf{Map}( \mathsf{View} \to \mathsf{Set}(\mathsf{Digest}) )$ $\mathit{accepted}_i \gets \{\emptyset, \emptyset, ..., \emptyset \}$ \BlueComment{accepted digests per view} \label{line:variable_accepted}

\medskip
\State \textbf{-- Task 1 --}
\State \textbf{for each} view $V \in [1, t + 1]$:

\smallskip
\State \hskip2em \textbf{if} $\mathit{committed\_view}_i \neq \bot$ and $\mathit{commited\_view}_i + 1 = V$: complete the view after $6$ synchronous rounds \label{line:execute_only_the_next_view}

\State \hskip2em \textbf{if} $\mathit{commited\_view}_i \neq \bot$ and $V > \mathit{committed\_view}_i + 1$: do not execute the view \label{line:no_executing_the_next_next_view}

\smallskip
\State \hskip2em \emph{Step 1 of view $V$:} \BlueComment{$2$ synchronous rounds}
\State \hskip4em Let $(d_1 \in \mathsf{Digest} \cup \{\bot\}, g_1 \in \{0, 1\}) \gets \mathcal{GC}_1[V].\mathsf{propose}(\mathit{locked}_i)$ \label{line:propose_gc1}

\smallskip
\State \hskip2em \emph{Step 2 of view $V$:} \BlueComment{$2$ synchronous round}
\State \hskip4em \textbf{if} $p_i = \mathsf{leader}(V)$:
\State \hskip6em \textbf{if} $d_1 \neq \bot$: \BlueComment{check if a non-$\bot$ digest is decided from $\mathcal{GC}_1[V]$}
\State \hskip8em \textbf{broadcast} $d_1$ \BlueComment{broadcast a non-$\bot$ digest decided from $\mathcal{GC}_1[V]$} \label{line:broadcast_d1}
\State \hskip6em \textbf{else:}
\State \hskip8em \textbf{broadcast} $v_i$ \BlueComment{broadcast the proposed value} \label{line:broadcast_vi}

\smallskip
\State \hskip4em \textbf{if} $d_1 \neq \bot$ and $g_1 = 1$: \label{line:check_support_1}
\State \hskip6em \textbf{broadcast} $\langle \textsc{support}, d_1 \rangle$ \label{line:broadcast_happy_1}
\State \hskip4em \textbf{else:}
\State \hskip6em \textbf{if} $d_l \in \mathsf{Digest}$ is received from $\mathsf{leader}(V)$ and a view $V' < V$ exists with $d_l \in \mathit{accepted}[V']$: \label{line:check_happy_2}
\State \hskip8em \textbf{broadcast} $\langle \textsc{support}, d_l \rangle$ \label{line:broadcast_happy_2}
\State \hskip6em \textbf{else if} $v_l \in \mathsf{Value}$ is received from $\mathsf{leader}(V)$ such that $\mathsf{valid}(v_l) = \mathit{true}$:  \label{line:check_validity_received_from_leader}
\State \hskip8em $\mathit{known\_values}[\mathsf{digest}(v_l)] \gets v_l$ \label{line:update_known_values}
\State \hskip8em \textbf{broadcast} $\langle \textsc{support}, \mathsf{digest}(v_l) \rangle$ \label{line:broadcast_happy_3}

\smallskip
\State \hskip2em \emph{Step 3 of view $V$:} \BlueComment{0 synchronous round (only local computation)}
\State \hskip4em \textbf{if} exists $d \in \mathsf{Digest}$ such that a $\langle \textsc{support}, d \rangle$ message is received from $t + 1$ processes:
\State \hskip6em $\mathit{accepted}_i[V] \gets \mathit{accepted}_i[V] \cup \{d\}$ \label{line:update_accepted}

\smallskip
\State \hskip4em \textbf{if} exists $d \in \mathsf{Digest}$ such that a $\langle \textsc{support}, d \rangle$ message is received from $2t + 1$ processes:
\State \hskip6em $\mathit{vote}_i \gets d$ \label{line:update_vote}
\State \hskip4em \textbf{else:}
\State \hskip6em $\mathit{vote}_i \gets \bot$ \label{line:update_vote_bot}

\smallskip 
\State \hskip2em \emph{Step 4 of view $V$:} \BlueComment{$2$ synchronous rounds}
\State \hskip4em Let $(d_2 \in \mathsf{Digest} \cup \{\bot\}, g_2 \in \{0, 1\}) \gets \mathcal{GC}_2[V].\mathsf{propose}(\mathit{vote}_i)$

\State \hskip4em \textbf{if} $d_2 \neq \bot$: \BlueComment{check if a non-$\bot$ digest is decided from $\mathcal{GC}_2[V]$}
\State \hskip6em $\mathit{locked}_i \gets d_2$ \BlueComment{digest $d_2$ is locked as some correct process might commit it} \label{line:update_locked}
\State \hskip6em \textbf{if} $g_2 = 1$ and $\mathit{committed\_view}_i = \bot$: \BlueComment{check if digest $d_2$ is decided with grade $1$}
\State \hskip8em $\mathit{committed\_view}_i \gets V$
\State \hskip8em \textbf{invoke} $\mathcal{DD}.\mathsf{input}(\mathit{known\_values}[d_2], d_2)$ \BlueComment{\textbf{commit digest $d_2$}} \label{line:commit_hash_value}

\medskip
\State \textbf{-- Task 2 --} \BlueComment{executed in a separate thread}
\State \textbf{upon} $\mathcal{DD}.\mathsf{output}(v' \in \mathsf{Value})$: \label{line:upon_output_optimal_2}
\State \hskip2em \textbf{trigger} $\mathsf{decide}(v')$ \label{line:decide_nonrecursive}
\State \hskip2em \textbf{wait for} view $\mathit{committed\_view}_i + 1$ to be completed (if not yet and if $\mathit{committed\_view}_i + 1 \leq t + 1$) \label{line:wait_for_completion}
\State \hskip2em \textbf{trigger} $\mathsf{stop}$ \BlueComment{process $p_i$ stops \nameopt} \label{line:halt_optimal_2}

\end{algorithmic}
\end{algorithm}

\smallskip
\noindent \textbf{Protocol description.}
\nameopt internally utilizes an instance $\mathcal{DD}$ of the data dissemination primitive.
We design \nameopt in a \emph{view-based} manner: \nameopt operates in (at most) $f + 1$ views, where each view $V$ has its leader $\mathsf{leader}(V) = p_V$.\footnote{\nameopt elects leaders in a round-robin fashion.}
Each view $V$ internally uses two instances $\mathcal{GC}_1[V]$ and $\mathcal{GC}_2[V]$ of the graded consensus primitive (see \Cref{mod:graded-consensus}) that operates on digests.

We say that a correct process $p_i$ \emph{commits} a digest $d$ in view $V$ if and only if $p_i$ invokes $\mathcal{DD}.\mathsf{input}(\cdot, d)$ in view $V$ (line~\ref{line:commit_hash_value}).
Each correct process $p_i$ maintains four important local variables: 
\begin{compactitem}
    \item $\mathit{locked}_i$ (line~\ref{line:variable_locked}): holds a digest (or $\bot$) on which $p_i$ is currently ``locked on''.

    \item $\mathit{vote}_i$ (line~\ref{line:variable_vote}): holds a digest (or $\bot$) currently supported by $p_i$.

   \item $\mathit{known\_values}_i[D]$, for every digest $D$ (line~\ref{line:variable_known}): holds the pre-image of digest $D$ observed by $p_i$.

    \item $\mathit{accepted}_i[V]$, for every view $V$ (line~\ref{line:variable_accepted}): holds the set of digest that are ``accepted'' in view $V$.
\end{compactitem}
Let $p_i$ be any correct process.
Each view $V$ operates in four steps:
\begin{compactenum}
    \item Process $p_i$ proposes $\mathit{locked}_i$ to $\mathcal{GC}_1[V]$ and decides a pair $(d_1, g_1)$ (line~\ref{line:propose_gc1}).
    Intuitively, if $d_1 \neq \bot$ and $g_1 = 1$, $p_i$ sticks with digest $d_1$ throughout the view as it is possible that some other correct process has previously committed digest $d_1$.
    (Hence, not sticking with digest $d_1$ in view $V$ might be dangerous as it could lead to a disagreement on committed digests.)
    
    \item Here, the leader of view $V$ (if correct) aims to enable all correct processes to commit a digest in view $V$.
    Specifically, the leader behaves in the following manner:
    \begin{compactitem}
        \item If it decided a non-$\bot$ digest from $\mathcal{GC}_1[V]$, then the leader disseminates the digest (line~\ref{line:broadcast_d1}).

        \item Otherwise, the leader disseminates its proposal (line~\ref{line:broadcast_vi}).
    \end{compactitem}
    Process $p_i$ behaves according to the following logic:
    \begin{compactitem}
        \item If $p_i$ decided a non-$\bot$ digest $d$ with grade $1$ from $\mathcal{GC}_1[V]$ ($d_1 = d \neq \bot$ and $g_1 = 1$; see the rule at line~\ref{line:check_support_1}), then $p_i$ supports digest $d$ by broadcasting a \textsc{support} message for $d$ (line~\ref{line:broadcast_happy_1}).

        \item If $p_i$ decided $\bot$ from $\mathcal{GC}_1[V]$, then $p_i$ supports a digest $d$ by broadcasting a \textsc{support} message for $d$ (line~\ref{line:broadcast_happy_2} or line~\ref{line:broadcast_happy_3}) if (1) it receives digest $d$ from the leader and $p_i$ accepted $d$ in any previous view (line~\ref{line:check_happy_2}), or (2) it receives a valid value $v$ from the leader such that $\mathsf{digest}(v) = d$ (line~\ref{line:check_validity_received_from_leader}).
        If the latter case applies, process $p_i$ ``observes'' the pre-image $v$ of digest $d$ (line~\ref{line:update_known_values}).
    \end{compactitem}

    \item Process $p_i$ accepts a digest $d$ in view $V$ if it receives a \textsc{support} message for $d$ from $t + 1$ processes (line~\ref{line:update_accepted}).
    Moreover, process $p_i$ updates its $\mathit{vote}_i$ variable to a digest $d$ if it receives a \textsc{support} message for $d$ from $2t + 1$ processes (line~\ref{line:update_vote}).
    Otherwise, process $p_i$ sets its $\mathit{vote}_i$ variable to $\bot$ (line~\ref{line:update_vote_bot}).
    Observe that if any correct process $p_j$ updates its $\mathit{vote}_j$ variable to a digest $d$, then every correct process $p_k$ accepts $d$ in view $V$.
    Indeed, as $p_j$ receives a \textsc{support} message for digest $d$ from at least $2t + 1$ processes out of which at least $t + 1$ are correct, it is guaranteed that $p_k$ receives a \textsc{support} message for $d$ from at least $t + 1$ processes.

    \item Process $p_i$ proposes $\mathit{vote}_i$ to $\mathcal{GC}_2[V]$ and decides a pair $(d_2, g_2)$.
    If $d_2 \neq \bot$, process $p_i$ updates its $\mathit{locked}_i$ variable to $d_2$ (line~\ref{line:update_locked}).
    Additionally, if $g_2 = 1$, then $p_i$ commits $d_2$ (line~\ref{line:commit_hash_value}).
    Importantly, if any correct process $p_j$ commits a digest $d \neq \bot$ in view $V$, \emph{every} correct process $p_k$ updates its $\mathit{locked}_k$ variable to $d$.
    Indeed, as $p_j$ commits $d$, it decides $(d \neq \bot, 1)$ from $\mathcal{GC}_2[V]$.
    The consistency property of $\mathcal{GC}_2[V]$ ensures that each correct process $p_k$ decides $d$ from $\mathcal{GC}_2[V]$.
\end{compactenum}
We emphasize that if process $p_i$ commits a digest in some view $V$, process $p_i$ does not execute any view greater than $V + 1$ (line~\ref{line:no_executing_the_next_next_view}).
Moreover, if $p_i$ commits in view $V < t + 1$, then process $p_i$ necessarily completes view $V + 1$ before stopping (line~\ref{line:wait_for_completion}).
Importantly, process $p_i$ completes view $V + 1$ after \emph{exactly} 6 rounds have elapsed.
Let us elaborate.
As some correct process $p_j \neq p_i$ might never enter view $V + 1$ (since it has committed in a view smaller than view $V$), it is possible that \emph{not all} correct processes participate in view $V + 1$.
This implies that utilized graded consensus instances might never complete, which further means that process $p_i$ can forever be stuck executing a graded consensus instance in view $V + 1$.
To avoid this scenario, process $p_i$ completes view $V + 1$ after 6 rounds irrespectively of which step of view $V + 1$ $p_i$ is in after 6 rounds.
Finally, once $p_i$ outputs a value $v'$ from $\mathcal{DD}$ (and completes the aforementioned ``next view''), $p_i$ decides $v'$ (line~\ref{line:decide_nonrecursive}) and stops executing \nameopt (line~\ref{line:halt_optimal_2}).

\subsection{Proof Sketch} \label{subsection:nameopt_proof_sketch}

This subsection provides a proof sketch of the following theorem:

\begin{theorem} \label{theorem:nameopt}
\nameopt (\Cref{algorithm:optimal_trivial}) is a hash-based early-stopping validated agreement algorithm with $O(nL + n^3 \kappa)$ bit complexity.   
\end{theorem}
Our proof sketch focuses on the crucial intermediate guarantees ensured by \nameopt.
Recall that a formal proof of \Cref{theorem:nameopt} is relegated to \Cref{section:nameopt_proof}. 

\smallskip
\noindent \textbf{Preventing disagreement on committed digests.}
First, we show that correct processes do not disagree on committed digests.
Let $\mathcal{V}$ denote the first view in which a correct process commits; let $d^{\star}$ be the committed digest.
No correct process commits any non-$d^{\star}$ digest in view $\mathcal{V}$ due to the consistency property of $\mathcal{GC}_2[\mathcal{V}]$: it is impossible for correct processes to decide different digests from $\mathcal{GC}_2[\mathcal{V}]$ with grade $1$.

If $\mathcal{V} < t + 1$, \nameopt prevents any non-$d^{\star}$ digest to be committed in any view greater than $\mathcal{V}$.
Specifically, \nameopt guarantees that all correct processes commit $d^{\star}$ (and no other digest) by the end of view $\mathcal{V} + 1$.
The consistency property of $\mathcal{GC}_2[\mathcal{V}]$ ensures that every correct process $p_i$ updates its $\mathit{locked}_i$ variable to $d^{\star}$ at the end of view $\mathcal{V}$.
Therefore, all correct processes propose $d^{\star}$ to $\mathcal{GC}_1[\mathcal{V} + 1]$, which implies that all correct processes decide $(d^{\star}, 1)$ from $\mathcal{GC}_1[\mathcal{V} + 1]$ (due to the strong unanimity property of $\mathcal{GC}_1[\mathcal{V} + 1]$).
Hence, all correct processes broadcast a \textsc{support} message for digest $d^{\star}$ (line~\ref{line:broadcast_happy_1}), which further implies that all correct processes propose $d^{\star}$ to $\mathcal{GC}_2[\mathcal{V} + 1]$.
Finally, the strong unanimity property of $\mathcal{GC}_2[\mathcal{V} + 1]$ ensures that all correct processes decide $(d^{\star}, 1)$ from $\mathcal{GC}_2[\mathcal{V} + 1]$ and thus commit $d^{\star}$ by the end of view $\mathcal{V} + 1$.

\smallskip
\noindent \textbf{Ensuring eventual agreement on the committed digest.}
Second, we prove that an agreement on the committed digest eventually occurs.
Concretely, we now show that \emph{all} correct processes commit a digest by the end of the first view whose leader is correct.
Let that view be denoted by $\mathcal{V}_l \in [1, f + 1]$ and let $p_{\mathcal{V}_l}$ be the leader of $\mathcal{V}_l$.
If any correct process commits a digest in any view smaller than $\mathcal{V}_l$, then all correct processes commit the same digest by the end of view $\mathcal{V}_l$ due to the argument from the previous paragraph.
Hence, suppose no correct process commits any digest in any view preceding view $\mathcal{V}_l$.
We distinguish two scenarios:
\begin{compactitem}
    \item Let $p_{\mathcal{V}_l}$ decide a digest $d \neq \bot$ from $\mathcal{GC}_1[\mathcal{V}_l]$.
    Crucially, the justification property of $\mathcal{GC}_1[\mathcal{V}_l]$ ensures that $d \neq \bot$ is proposed by some correct process $p_j$.
    Hence, the value of the $\mathit{locked}_j$ variable is $d$ at the beginning of view $\mathcal{V}_l$.
    Let $V' < \mathcal{V}_l$ denote the view in which $p_j$ updates $\mathit{locked}_j$ to $d$ upon deciding $d \neq \bot$ from $\mathcal{GC}_2[V']$. 
    Again, the justification property of $\mathcal{GC}_2[V']$ guarantees that a correct process proposed $d$ to $\mathcal{GC}_2[V']$ upon receiving $2t + 1$ \textsc{support} messages for $d$.
    As at least $t + 1$ such messages are received from correct processes, \emph{every} correct process accepts digest $d$ in view $V'$.

    In this case, process $p_{\mathcal{V}_l}$ broadcasts digest $d$ in Step 2.
    We show that all correct processes broadcast a \textsc{support} message for digest $d$.
    Consider any correct process $p_i$.
    We study two possible cases:
    \begin{compactitem}
        \item Let $p_i$ decide a non-$\bot$ digest $d'$ with grade $1$ from $\mathcal{GC}_1[\mathcal{V}_l]$.
        In this case, the consistency property of $\mathcal{GC}_1[\mathcal{V}_l]$ ensures that $d = d'$.
        Thus, process $p_i$ sends a \textsc{support} message for digest $d$ (line~\ref{line:broadcast_happy_1}).

        \item Let $p_i$ decide $\bot$ or with grade $0$ from $\mathcal{GC}_1[\mathcal{V}_l]$.
        In this case, process $p_i$ sends a \textsc{support} message for digest $d$ (line~\ref{line:broadcast_happy_2}) as (1) it receives $d$ from $p_{\mathcal{V}_l}$, and (2) it accepts $d$ in view $V' < \mathcal{V}_l$.
    \end{compactitem}

    \item Let $p_{\mathcal{V}_l}$ decide $\bot$ from $\mathcal{GC}_1[\mathcal{V}_l]$.
    Note that this implies that no correct process decides a non-$\bot$ digest with grade $1$ from $\mathcal{GC}_1[\mathcal{V}_l]$ (due to the consistency property of $\mathcal{GC}_1[\mathcal{V}_l]$).
    Hence, process $p_{\mathcal{V}_l}$ broadcasts its valid value $v$, which then implies that all correct processes send a \textsc{support} message for digest $d = \mathsf{digest}(v)$ (line~\ref{line:broadcast_happy_3}).
\end{compactitem}
Hence, there exists a digest $d$ for which all correct processes express their support in both cases.
Therefore, all correct processes propose $d$ to $\mathcal{GC}_2[\mathcal{V}_l]$.
Finally, the strong unanimity property ensures that all correct processes decide $(d, 1)$ from $\mathcal{GC}_2[\mathcal{V}_l]$ and thus commit digest $d$ in view $\mathcal{V}_l$.

\smallskip
\noindent \textbf{Ensuring that some correct process knows the valid pre-image of the committed digest.}
We show how \nameopt enables processes to ``obtain'' implicit PoRs (see \Cref{section:introduction}).
Let $d^{\star}$ denote the (unique) committed digest.
For $d^{\star}$ to be committed, there exists a correct process that sends a \textsc{support} message for $d^{\star}$ in a view in which $d^{\star}$ is committed (due to the justification property of $\mathcal{GC}_2[V]$, for every view $V$).
Therefore, it suffices to show that the first correct process to ever send a \textsc{support} message for $d^{\star}$ (or any other digest) does so at line~\ref{line:broadcast_happy_3} upon receiving valid value $v^{\star}$ with $\mathsf{digest}(v^{\star}) = d^{\star}$.
Let $p_i$ denote the first process to send a \textsc{support} message for digest $d^{\star}$ and let it do so in some view $V$.
We study if $p_i$ could have sent the message at lines~\ref{line:broadcast_happy_1} and~\ref{line:broadcast_happy_2}:
\begin{compactitem}
    \item Process $p_i$ could not have sent the \textsc{support} message at line~\ref{line:broadcast_happy_1} as this would imply that $p_i$ is not the first correct process to send the message for $d^{\star}$.
    The justification property of $\mathcal{GC}_1[V]$ ensures that some correct process $p_j$ has its $\mathit{locked}_j$ variable set to $d^{\star}$ at the beginning of view $V$.
    For process $p_j$ to update its $\mathit{locked}_j$ variable to $d^{\star}$ in some view $V' < V$, there must exist a correct process that sends a \textsc{support} message for $d^{\star}$ in view $V'$ (due to the justification property of $\mathcal{GC}_2[V']$).
    Therefore, $p_i$ cannot be the first correct process to send a \textsc{support} message for $d^{\star}$.

    \item Process $p_i$ could not have sent the \textsc{support} message at line~\ref{line:broadcast_happy_2} as this would also imply that $p_i$ is not the first correct process to send the message for $d^{\star}$.
    Indeed, for the message to be sent at line~\ref{line:broadcast_happy_2}, process $p_i$ accepts $d^{\star}$ in some view $V' < V$, which implies that at least one correct process sends a \textsc{support} message for $d^{\star}$ in view $V'$.
\end{compactitem}
Hence, $p_i$ must have sent the message at line~\ref{line:broadcast_happy_3}, which implies that $p_i$ knows the pre-image $v^{\star}$ of digest $d^{\star}$ and that $v^{\star}$ is valid (due to the check at line~\ref{line:check_validity_received_from_leader}).

\smallskip
\noindent \textbf{Correctness.}
The previous three intermediate results show that the preconditions of $\mathcal{DD}$ (see \Cref{mod:hashfin}) are satisfied, which implies that $\mathcal{DD}$ behaves according to its specification.
Hence, all correct processes decide the same valid value from \nameopt due to the properties of $\mathcal{DD}$.

\smallskip
\noindent \textbf{Complexity.}
Each view with a non-correct leader exchanges $O(n^2\kappa)$ bits.
Moreover, each view with a correct leader exchanges $O(nL + n^2 \kappa)$ bits.
As $\mathcal{DD}$ exchanges $O(nL + n^2 \kappa \log n)$ bits and it is ensured that only $O(1)$ views with correct leaders are executed, \nameopt exchanges $O(nL + n^2 \kappa) + n \cdot O(n^2 \kappa) + O(nL + n^2 \kappa \log n) = O(nL + n^3\kappa)$ bits.
(As we prove in \Cref{section:nameopt_proof}, a tight analysis shows that \nameopt's bit complexity is actually $O( nL + n^2(f + \log n)\kappa)$.)

As all correct processes start $\mathcal{DD}$ at the end of the first view with a correct leader (at the latest), all correct processes input to $\mathcal{DD}$ in $O(f+1)$ rounds (recall that each view has $6$ rounds).
Since $\mathcal{DD}$ guarantees agreement in $2$ rounds, all correct decide and stop in $O(f+1)$ rounds.

\smallskip
\noindent \textbf{On the lack of \strongVal.}
Note that \nameopt as presented in \Cref{algorithm:optimal_trivial} does not satisfy \strongVal.
Indeed, even if all correct processes propose the same value $v$, it is possible that correct processes agree on a value $v'$ proposed by a faulty leader.
However, as specified in \Cref{section:introduction}, it is trivial to modify \nameopt to obtain an early-stopping algorithm with both strong unanimity and external validity that exchanges $O(nL + n^3\kappa)$ bits.
Indeed, this can be done by running in parallel (1) the current (without \strongVal) implementation of \nameopt, and (2) the error-free  early-stopping COOL~\cite{Chen2021,Lenzen2022} protocol with only \strongVal.
\section{\name: Near-Optimal Early-Stopping Error-Free Solution} \label{section:general_framework}

This section presents \name, an error-free validated Byzantine agreement algorithm that achieves (1) $O\big( (nL + n^2) \log n \big)$ bit complexity, and (2) early stopping.
Recall that \name is also optimally resilient (tolerates up to $t < n / 3$ Byzantine processes).

We start by introducing \name's building blocks (\Cref{subsection:ext_building_blocks}).
To introduce \name's recursive structure, we first show how (a simplified version of) the recursive structure yields a near-optimal validated agreement without early-stopping -- \slowext (\Cref{subsection:ext_overview}).
Then, we overview \name (\Cref{subsection:name_pseudocode}) and give a proof sketch of its correctness and complexity (\Cref{subsection:name_proof_sketch}).
We relegate \name's full pseudocode and a formal proof to \Cref{section:name_detailed_proof}.

\subsection{Building Blocks} \label{subsection:ext_building_blocks}

We now overview the building blocks of \name.
Given \name's recursive structure, the specification of each building block explicitly states its participants (to increase the clarity).
Moreover, given that building blocks might be executed among an overly corrupted set of participants (due to the recursion), each building block explicitly states what properties are ensured given the level of corruption among its participants.

\smallskip
\noindent \textbf{Committee broadcast.}
The formal specification of the committee broadcast primitive is given in \Cref{mod:committee-broadcast}.
Committee broadcast is concerned with two sets of processes: (1) $\mathsf{Entire} \subseteq \Pi$, and (2) $\mathsf{Committee} \subseteq \mathsf{Entire}$.
Moreover, the primitive is associated with a validated Byzantine agreement algorithm $\mathcal{VA}$ to be executed among processes in $\mathsf{Committee}$.
Intuitively, the committee broadcast primitive ensures the following: (1) correct processes in $\mathsf{Committee}$ agree on the same value using the $\mathcal{VA}$ algorithm (given that $\mathsf{Committee}$ is not overly corrupted), and (2) correct processes in $\mathsf{Committee}$ disseminate the previously agreed-upon value to all processes in $\mathsf{Entire}$.
We underline that the totality property of committee broadcast (deliberately written in orange in \Cref{mod:committee-broadcast}) is important only for \name's early-stopping, i.e., it can be ignored for \slowext (in \Cref{subsection:ext_overview}).

\begin{module}[ht]
\caption{Committee broadcast $\langle \mathsf{Entire}, \mathsf{Committee}, \mathcal{VA} \rangle$}
\label{mod:committee-broadcast}
\footnotesize
\begin{algorithmic}[1]

\Statex \textbf{Participants:}
\begin{compactitem}
    \item $\mathsf{Entire} \subseteq \Pi$; let $x = |\mathsf{Entire}|$ and let $x'$ be the greatest integer smaller than $x / 3$.

    \item $\mathsf{Committee} \subseteq \mathsf{Entire}$; let $y = |\mathsf{Committee}|$, let $y'$ be the greatest integer smaller than $y / 3$ and let $f'$ be the actual number of faulty processes in $\mathsf{Committee}$.
\end{compactitem}

\medskip
\Statex \textbf{Utilized validated agreement among $\mathsf{Committee}$:}
\begin{compactitem}
    \item $\mathcal{VA}$; let $\mathcal{L}_{\mathcal{VA}}(y,f')$ denote the worst-case latency complexity of $\mathcal{VA}$ with up to $f'$ faulty processes and let $\mathcal{B}_{\mathcal{VA}}(y)$ denote the maximum number of bits any correct process sends while executing $\mathcal{VA}$ with up to $y'$ faulty processes.
    (We underline that $\mathcal{L}_{\mathcal{VA}}(y, f')$ is based on the non-known \emph{actual} number of failures, whereas $\mathcal{B}_{\mathcal{VA}}(y)$ is based on the known \emph{upper bound} on the number of failures.)
\end{compactitem}

\medskip
\Statex \textbf{Events:}
\begin{compactitem}
    \item \emph{request} $\mathsf{input}(v \in \mathsf{Value}, g \in \{0, 1\})$: a process inputs a pair $(v, g)$.

    \item \emph{indication} $\mathsf{output}(v' \in \mathsf{Value})$: a process outputs a value $v'$.
\end{compactitem}

\medskip
\Statex \textbf{Assumed behavior:}
\begin{compactitem}
    \item Every correct process inputs a pair.

    \item If a correct process inputs a pair $(v, \cdot)$, then $\mathsf{valid}(v) = \mathit{true}$.
    
    \item No correct process stops unless it has previously output a value.

    \item If any correct process inputs a pair $(v, 1)$, for any value $v$, then no correct process inputs a pair $(v' \neq v, \cdot)$.
\end{compactitem}

\medskip
\Statex \textbf{Properties ensured only if up to $x'$ processes in $\mathsf{Entire}$ are faulty:}
\begin{compactitem}
    \item \textcolor{orange}{\emph{Totality:} Let $\tau$ denote the first time at which a correct process outputs a value.
    Then, every correct process outputs a value by time $\tau + 2\delta$.}

    \item \emph{Stability:} If a correct process inputs a pair $(v, 1)$ and outputs a value $v'$, then $v' = v$.

    \item \emph{External validity:} If a correct process outputs a value $v$, then $\mathsf{valid}(v) = \mathit{true}$.

    \item \emph{Optimistic consensus:} If (1) there are up to $y'$ faulty processes in $\mathsf{Committee}$, and (2) all correct processes in $\mathsf{Entire}$ start within $2\delta$ time of each other, the following properties are satisfied:
    \begin{compactitem}
        \item \emph{Liveness:} Let $\tau$ be the first time by which all correct processes in $\mathsf{Committee}$ have input a pair.
        Then, every correct process outputs a value by time $\tau + 7\delta + \mathcal{L}_{\mathcal{VA}}(y,f')$.

        \item \emph{Agreement:} No two correct processes output different values.

        \item \emph{Strong unanimity:} If every correct process proposes a pair $(v, \cdot)$, for any value $v$, then no correct process outputs a value different from $v$.
    \end{compactitem}
\end{compactitem}

\medskip
\Statex \textbf{Properties ensured even if more than $x'$ processes in $\mathsf{Entire}$ are faulty:}
\begin{compactitem}
    \item \emph{Complexity:} Each correct process sends $O(L + x \log x) + \mathcal{B}_{\mathcal{VA}}$ bits. 
\end{compactitem}
\end{algorithmic}
\end{module}

\smallskip
\noindent \textbf{Finisher.}
The formal specification of the finisher primitive is given in \Cref{mod:finisher}.
Finisher is executed among a set $\mathsf{Entire} \subseteq \Pi$ of processes.
Each process inputs a pair $(v \in \mathsf{Value}, g \in \{0, 1\})$, where $v$ is a value and $g$ is a binary grade.
In brief, finisher ensures that all correct processes output the same value if all correct processes input the same value with grade 1 (the liveness property).
Moreover, finisher ensures totality: if any correct process outputs a value, then all correct processes output the same value.
We emphasize that the finisher primitive is introduced \emph{only} for achieving early-stopping in \name, i.e., it plays no role in \slowext.

\begin{module}[ht]
\caption{Finisher $\langle \mathsf{Entire} \rangle$}
\label{mod:finisher}
\footnotesize
\begin{algorithmic}[1]

\Statex \textbf{Participants:}
\begin{compactitem}
    \item $\mathsf{Entire} \subseteq \Pi$; let $x = |\mathsf{Entire}|$ and let $x'$ be the greatest integer smaller than $x / 3$.
\end{compactitem}

\medskip
\Statex \textbf{Events:}
\begin{compactitem}
    \item \emph{request} $\mathsf{input}(v \in \mathsf{Value}, g \in \{0, 1\})$: a process inputs a pair $(v, g)$.

    \item \emph{indication} $\mathsf{output}(v' \in \mathsf{Value})$: a process outputs a value $v'$.
\end{compactitem}

\medskip
\Statex \textbf{Assumed behavior:}
\begin{compactitem}
    \item All correct processes input a pair and they do so within $2\delta$ time of each other.

    \item No correct process stops unless it has previously output a value.

    \item If any correct process inputs a pair $(v, 1)$, for any value $v$, then no correct process inputs a pair $(v' \neq v, \cdot)$.
\end{compactitem}

\medskip
\Statex \textbf{Properties ensured only if up to $x'$ processes in $\mathsf{Entire}$ are faulty:}
\begin{compactitem}
    \item \emph{Preservation:} If a correct process $p_i$ outputs a value $v'$, then $p_i$ has previously input a pair $(v', \cdot)$.

    \item \emph{Agreement:} No two correct processes output different values.

    \item \emph{Justification:} If a correct process outputs a value, then a pair $(\cdot, 1)$ was input by a correct process.

    \item \emph{Liveness:} 
    Let all correct processes input a pair $(v, 1)$, for any value $v$.
    Let $\tau$ be the first time by which all correct processes have input.
    Then, all correct processes output value $v$ by time $\tau + \delta$.

    \item \emph{Totality:} Let $\tau$ be the first time at which a correct process outputs a value.
    Then, all correct processes output a value by time $\tau + 2\delta$.
\end{compactitem}

\medskip
\Statex \textbf{Properties ensured even if more than $x'$ processes in $\mathsf{Entire}$ are faulty:}
\begin{compactitem}
    \item \emph{Complexity:} Each correct process sends $O(x)$ bits.
\end{compactitem}
\end{algorithmic}
\end{module}

\subsection{\slowext: Achieving Near-Optimality Without Early-Stopping}
\label{subsection:ext_overview}
\noindent \textbf{Wisdom of the ancients.}
As mentioned in \Cref{subsection:technical_challenges}, the problem with the sequential reconstructive approach is that, by allowing each Byzantine process to impose its own value, we can end up with $f = t \in O(n)$ (wasted) reconstructions of invalid values (with $O(n^2)$ messages each), for a total of $O(n^3)$ messages.
Making an analogy to a parliamentary system (e.g., of some island in ancient Greece~\cite{lamport2001paxos}), this is the equivalent of allowing every single member of parliament to present their proposal to all others.
This is somewhat wasteful.
In many modern parliamentary systems, since time is limited, proposals are first filtered \emph{internally} within each party before each party presents \emph{one} proposal to the whole assembly.
Hence, no matter how many bad proposals a party might have internally, the whole assembly only discusses one per party.
The cost of dealing with bad actors (and proposals) is shifted to the parties, which are individually smaller than the whole assembly.
This is (essentially) the crucial realization of~\cite{berman1992bit,CoanW92}.
By adopting a recursive framework with two ``parties'' at each level,~\cite{berman1992bit,CoanW92} obtain non-early-stopping solutions with optimal $O(n^2)$ exchanged messages (albeit still $O(n^2L)$ exchanged bits).

\smallskip
\noindent \textbf{\slowext in a nutshell.}
To design \slowext, we adapt the recursive framework of~\cite{berman1992bit,CoanW92} to long values.
More precisely, we follow the recent variant of the framework proposed by~\cite{Momose2021,Lenzen2022} that utilizes (1) the graded consensus~\cite{AW23,Abraham2022} primitive (instead of the ``universal exchange'' primitive of~\cite{berman1992bit}; see \Cref{mod:graded-consensus}), and (2) the committee broadcast primitive (see \Cref{mod:committee-broadcast}).
At each recursive iteration, processes are statically partitioned into two halves (according to their identifiers) that run the algorithm among $n / 2$ processes (inside that half's committee broadcast primitive) in sequential order.
The recursion stops once a validated agreement instance with only a single process is reached; at this point, the process decides its proposal.
A graphical depiction of \slowext is given in the gray part of \Cref{fig:early stopping recursive framework}.

\begin{figure}[ht]
   \centering
   \includegraphics[width=\textwidth]{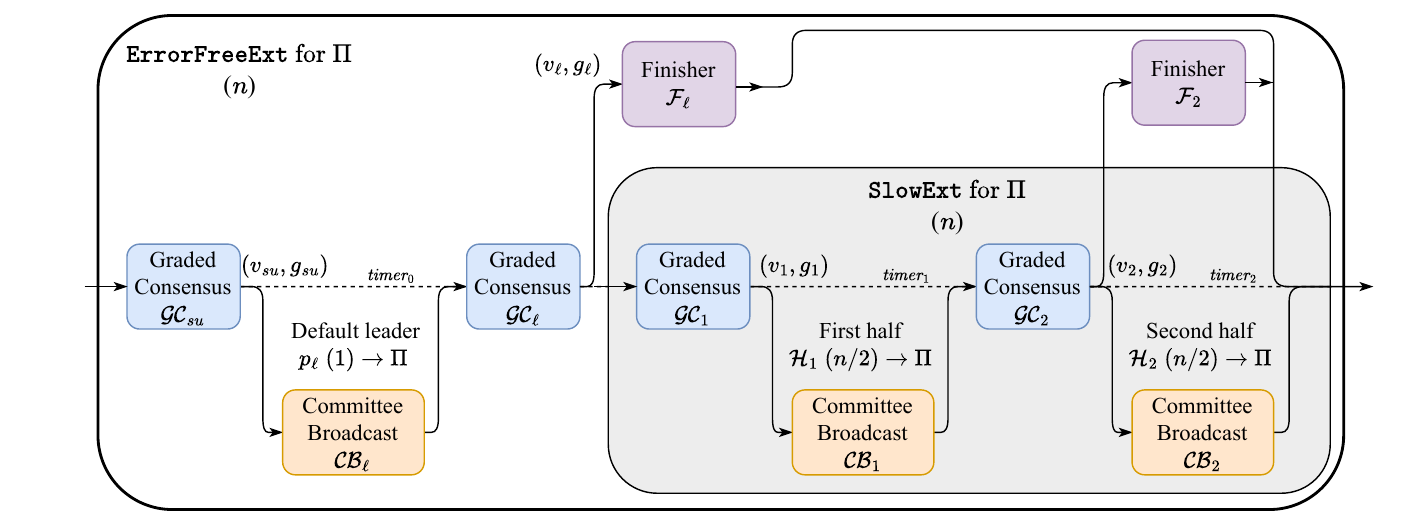}
   \caption{The recursive structure of \name (and \slowext).}
   \label{fig:early stopping recursive framework}
\end{figure}

Crucially, as $t < n / 3$, at least one half contains less than one-third of faulty processes.
Therefore, there exists a ``healthy'' (non-corrupted) half that successfully executes the recursive call (i.e., successfully executes the committee broadcast primitive).
However, agreement achieved among a healthy half must be preserved, i.e., preventing an unhealthy half from ruining the ``healthy decision'' is imperative.
To this end, the recursive framework utilizes the graded consensus primitive that allows the correct processes to stick with their previously made (if any) decision.
For example, suppose that the first half of processes is healthy.
Hence, after executing \slowext among the first half of processes (i.e., in the first committee broadcast primitive), all correct processes obtain the same value (due to the optimistic consensus property of committee broadcast).
In this case, the graded consensus primitive $\mathcal{GC}_2$ ensures that correct processes cannot change their values due to the actions of the second half, thus preserving the previously achieved agreement.
By implementing both the graded consensus and committee broadcast primitives with only $O(nL + n^2 \log n)$ bits (see \Cref{section:graded_consensus_concrete_implementations,section:name_detailed_proof}), \slowext achieves near-optimal asymptotic bit complexity:
\begin{equation*}
\label{equation:ext_complexity_sum}
\sum^{\log n}_{i=0} 2^i\cdot\Big( \frac{n}{2^i}L + \Big(\frac{n}{2^i}\Big)^2\log \Big(\frac{n}{2^i}\Big)\Big) \leq \sum^{\log n}_{i=0} \Big(nL + \frac{n^2}{2^i}\log n\Big) \in O\big( (nL + n^2) \log n \big).
\end{equation*}

\subsection{\name: Overview}
\label{subsection:name_pseudocode}

The pseudocode for \name is provided in \Cref{algorithm:recursive_early_stopping_king_2} (\Cref{subsection:recursive_proof}), whereas its graphical presentation can be found in \Cref{fig:early stopping recursive framework}.
Below, we give key insights for obtaining \name from \slowext.

\smallskip
\noindent \textbf{Why is \slowext not early-stopping?}
\slowext does not achieve early stopping as \slowext allocates a predetermined number of rounds for each recursive call: processes cannot \emph{prematurely} terminate a recursive call even if they have already decided.
In particular, each recursive call consumes the \emph{maximum} number of rounds necessary for its completion.
This maximum number of rounds is proportional to the upper bound $t$ on the number of Byzantine processes rather than the actual number $f \leq t$ of Byzantine processes. 
As a result, \slowext incurs round complexity dependent on $t$ rather than $f$.

\smallskip
\noindent \textbf{From \slowext to \name.}
To achieve early stopping from \slowext, \name mirrors the binary approach of~\cite{Lenzen2022} and carefully adapts it to long $L$-bit values.
The first key ingredient is the introduction of the finisher instance $\mathcal{F}_2$ that we position (1) before the committee broadcast instance $\mathcal{CB}_2$ led by the second half of processes, and (2) after the graded consensus instance $\mathcal{GC}_2$.
In brief, $\mathcal{F}_2$ leverages the presence of the graded consensus instance $\mathcal{GC}_2$ to check if $\mathcal{GC}_2$ ensured agreement among correct processes.
If that is the case, then $\mathcal{F}_2$ allows correct processes to terminate immediately (i.e., in $O(\delta)$ time) after the termination of the committee broadcast instance $\mathcal{CB}_1$ led by the first half of processes.

However, the introduction of $\mathcal{F}_2$ to tackle early-stopping brings its share of technical difficulties. 
Indeed, since the actual number of failures $f$ is unknown, processes cannot remain perfectly synchronized: a correct process $p_i$ might decide (and terminate) at some time $\tau$ thinking this is the maximum time before all correct processes decide given the failures $p_i$ observed, whereas another correct process $p_j$ might still be running after time $\tau$ as it has observed more failures than $p_i$.
To handle the aforementioned desynchronization, \name relies on \emph{weak synchronization} ensuring that correct processes execute different sub-modules with at most $2\delta$ desynchronization time: if the first correct process starts executing a sub-module at time $\tau$, then all correct processes start executing the same sub-module by time $\tau + 2\delta$.
To achieve this weak synchronization, we follow the standard approach of~\cite{SrikanthToueg85,Srikanth1987}.
Furthermore, to handle the $2\delta$ desynchronization in \name's sub-modules, we extend the round duration of graded consensus instances from the original $\delta$ time to $3\delta$ time.
(The specification of the other sub-modules directly tackles the aforementioned desynchronization.)
We emphasize that at some point $\tau$, correct processes might be in different rounds: e.g., a correct process $p_i$ can be in round $4$, whereas another correct process $p_j$ is in round $5$.
However, the round duration of $3\delta$ ensures that all correct processes \emph{overlap} in each round for (at least) $\delta$ time.
As message delays are bounded by $\delta$, the $\delta$-time-overlap is enough to ensure that each correct process hears all $r$-round-messages from all correct processes before leaving round $r$.
(We emphasize that this is a well-known simulation technique; see, e.g.,~\cite{Lenzen2022,cohen2023make}.)

It is important to mention that \name starts with a single standard ``Phase King'' iteration: (1) the committee broadcast instance $\mathcal{CB}_l$ with a predetermined leader $p_\ell$, (2) the graded consensus instance $\mathcal{GC}_\ell$, and (3) the finisher instance $\mathcal{F}_\ell$.
This iteration is added to prevent \name from running for $\Theta(\log n)$ time when there are only $O(1)$ faults.
Indeed, if the predetermined leader $p_\ell$ is correct, the committee broadcast instance $\mathcal{CB}_\ell$ ensures that all correct processes propose the same valid value $v$ to $\mathcal{GC}_\ell$ in $O(1)$ time after starting \name.
Then, the strong unanimity property of $\mathcal{GC}_\ell$ ensures that all correct processes decide $(v, 1)$ from $\mathcal{GC}_\ell$ and input $(v, 1)$ to $\mathcal{F}_\ell$.
This enables $\mathcal{F}_\ell$ to make all correct processes decide $v$ immediately (i.e., in $O(\delta)$ time) after starting.

Finally, the graded consensus instance $\mathcal{GC}_{su}$ (together with $\mathcal{GC}_\ell$) ensures the \strongVal property.
If all correct processes propose the same value $v$ to \name, then (1) all correct processes decide $(v, 1)$ from $\mathcal{GC}_{su}$ and propose $v$ to $\mathcal{GC}_\ell$, (2) all correct processes decide $(v, 1)$ from $\mathcal{GC}_\ell$ and input $(v, 1)$ to $\mathcal{F}_\ell$, and (3) output $v$ from $\mathcal{F}_\ell$ and decide $v$ from \name.

\subsection{Proof Sketch} \label{subsection:name_proof_sketch}

This subsection provides a proof sketch of the following theorem:

\begin{theorem} \label{theorem:name_correct}
\name is an error-free early-stopping validated agreement algorithm with $O\big( (nL + n^2) \log n \big)$ bit complexity.
\end{theorem}
We underline that \name achieves \emph{balanced} bit complexity as its \emph{per-process} complexity is $O\big( (L + n) \log n \big)$.
Recall that the formal proof of \Cref{theorem:name_correct} is relegated to \Cref{section:name_detailed_proof}.
This subsection only discusses the key intermediate results ensured by \name. 

\smallskip
\noindent \textbf{Gluing all sub-modules together.} Processes execute each sub-module within $2\delta$ time of each other, thus enabling the associated implementations to realize the corresponding specifications. 
The consistency property of the graded consensus primitive ensures a similar consistency for the inputs to the following committee broadcast primitive. 
Under this condition, the \strongVal property of the underlying validated agreement protocol ensures agreement if the recursive call is executed with a healthy (non-corrupted) committee. 

\smallskip
\noindent \textbf{Ensuring \strongVal.} Strong unanimity is implied by (1) the strong unanimity properties of $\mathcal{GC}_{su}$ and $\mathcal{GC}_{\ell}$, (2) the stability property of $\mathcal{CB}_\ell$, and (3) liveness and agreement of $\mathcal{F}_{\ell}$. 

\smallskip
\noindent \textbf{Finisher's ``lock''.} If a process decides a value $v$ via a finisher $\mathcal{F} \in \{\mathcal{F}_\ell, \mathcal{F}_2\}$, the justification property of $\mathcal{F}$, combined with the consistency property of the graded consensus $\mathcal{GC} \in \{\mathcal{GC}_\ell, \mathcal{GC}_2\}$ positioned immediately before, ensures that every correct process outputs $(v, \cdot)$ from $\mathcal{GC}$.

\smallskip
\noindent \textbf{From a correct leader or the first healthy committee to a common valid decision.} 
If the predetermined leader $p_\ell$ is correct, all correct processes agree on a common value after $\mathcal{F}_\ell$: this holds due to (1) the optimistic consensus property of $\mathcal{CB}_\ell$, (2) the strong unanimity property of $\mathcal{GC}_\ell$, and (3) the liveness and agreement properties of $\mathcal{F}_\ell$.
Similarly, if $p_\ell$ is faulty, but the first half of processes is healthy, all correct processes agree on a common value after $\mathcal{F}_2$.
Importantly, if some correct process decides via $\mathcal{F}_\ell$, the finisher's lock (see the paragraph above), combined with strong unanimity of $\mathcal{GC}_1$ and $\mathcal{GC}_2$ and the stability property of $\mathcal{CB}_1$, guarantees agreement.

\smallskip
\noindent \textbf{From the second healthy committee to a common valid decision.} 
If a correct process does not decide via $\mathcal{F}_\ell$ or $\mathcal{F}_2$, it means that both the predetermined leader $p_\ell$ and the first half of processes are unhealthy, which implies that the second half is healthy.
If some correct process decides via $\mathcal{F}_2$, the finisher's lock, combined with $\mathcal{CB}_2$'s \strongVal, preserves agreement. 
Let us emphasize that if some correct process decides via $\mathcal{F}_\ell$, the agreement is ensured due to (1) the finisher's lock, (2) the strong unanimity properties of $\mathcal{GC}_1$ and $\mathcal{GC}_2$, and (3) the stability property of $\mathcal{CB}_1$.

\smallskip
\noindent \textbf{Complexity.} The per-process bit complexity $\mathcal{B}(n)$ of \name follows from the equation $\mathcal{B}(n) \leq O(L + n\log n ) + \max\big(\mathcal{B}(\lfloor \frac{n}{2} \rfloor ), \mathcal{B}(\lceil \frac{n}{2} \rceil )\big)$. Similarly, the early stopping property holds due to the following equations: (1) $\mathcal{L}(n,f) \in O(\delta)$ if the predetermined leader $p_\ell$ is correct, and (2) $\mathcal{L}(n,f) \leq O(\delta) + \mathcal{L}(|\mathcal{H}_1|,f_1) + \mathcal{L}(|\mathcal{H}_2|,f_2)$ otherwise, where $f_1$ (resp., $f_2$) denotes the actual number of faulty processes among the first (resp., second) half of processes $\mathcal{H}_1$ (resp., $\mathcal{H}_2$).
\section{Related Work}\label{section:related_work_extended}

\noindent \textbf{Graded consensus.}
Graded consensus and other closely related problems such as (binding) crusader agreement~\cite{ABY22,MostefaouiMR15} have been collectively studied by Attiya and Welch~\cite{AW23}, who refer to this family of problems as ``connected consensus''.
In brief, connected consensus can be viewed as approximate agreement on spider graphs, consisting of a central clique (could be a single vertex) to which different paths (“branches”) are attached, each representing a potential value.

Graded consensus and the related problems have proven extremely helpful for solving Byzantine agreement.
In fully asynchronous environments, an alternating sequence of quadratic instances of binding crusader agreement and (reasonably fair) common coin allows for a binary Byzantine agreement with \strongVal, an expected $O(n^2)$ bit complexity and an expected $O(1)$ latency~\cite{ABY22,MostefaouiMR15}.
In synchrony, graded consensus has been employed in various adaptions of the ``Phase King'' methodology~\cite{Lenzen2022,CoanW92,berman1992bit,Momose2021,PhaseKingBlogPost} to achieve quadratic consensus for constant-sized values.

Following the seminal Dolev-Reischuk lower bound~\cite{dolev1985bounds} stating that $\Omega(nt)$ messages must be exchanged for solving Byzantine agreement, Abraham and Stern~\cite{AbrahamStern22} have proven that deterministic binary crusader broadcast (and thus graded consensus) also requires $\Omega(nt)$ exchanged messages (assuming $L \in O(1)$) if the adversary has simulation capabilities.\footnote{The ability to simulate encompasses classic unauthenticated  protocols but does not consider resource-restricted model \cite{Garay2020} where the adversary’s capability to simulate other processes can be restricted assuming a per-process bounded rate of cryptographic puzzle-solving capability without any authentication mechanism \cite{Andrychowicz2015,Katz2014}.}
When $L \in O(1)$, Attiya and Welch~\cite{AW23} present an optimally resilient ($t < n / 3$), error-free connected consensus algorithm with optimal $O(n^2)$ bit complexity even in asynchrony.
Momose and Ren~\cite{Momose2021} present a synchronous graded consensus algorithm secure against a computationally bounded adversary with (1) optimal resilience ($t < n / 2$), and (2) $O(n^2\kappa)$ bit complexity (again, for $L \in O(1)$), where $\kappa$ denotes the size of cryptographic objects.
However, for longer $L$-bit values, related work is not as rich.
Abraham and Asharov~\cite{Abraham2022} solve gradecast \cite{FeldmanMicali88},  with optimal $O(nL)$ bit complexity in synchrony, albeit with a statistical error (their protocol is not error-free). 

As mentioned in \Cref{section:introduction}, error-free graded consensus with $O( L + n \log n )$ per-process bits can be easily derived from the technique introduced by Chen in the COOL protocol~\cite{Chen2021}. A quasi-identical extraction appeared for synchronous gradecast in \cite{AsharovChandramouli24,ZLC23}, asynchronous reliable broadcast \cite{ADD0VXZ22}, and $n/5$-resilient asynchronous graded consensus \cite{Li2021,free_partial_sync}. 
Recall that COOL is an error-free synchronous Byzantine agreement algorithm that satisfies (only) \strongVal, exchanges $O (nL + n^2 \log n )$ bits, and terminates in $O(n)$ rounds.\footnote{As mentioned in \Cref{section:introduction}, the COOL protocol trivially achieves $O(f+1)$ round complexity by using \cite{Lenzen2022} as the underlying binary consensus protocol.}
The structure of COOL consists of four phases.
The first three phases reduce the number of possible decision values to at most one, i.e., only one value ``survives'' the first three phases.
At the end of the third phase, all processes engage in a one-bit Byzantine agreement algorithm to decide whether a non-$\bot$ or the $\bot$ value should be decided from COOL.
If it is decided that a non-$\bot$ value should be decided, all processes participate in the fourth reconstruction phase whose goal is to allow all processes to obtain that one non-$\bot$ value that has previously survived the first three phases.

\smallskip
\noindent \textbf{Reductions and equivalences between Byzantine agreement problems.}
Interactive consistency is an agreement variant where correct processes agree on the proposals of all processes. Any solvable Byzantine agreement problem (including validated agreement) can be reduced to interactive consistency at no cost \cite{WBAlowerBound}. This problem can further be reduced to $n$ (parallel) instances of Byzantine broadcast, thus its other name ``parallel broadcast'' \cite{AsharovChandramouli24}. In the honest-majority setting $(n>2t)$, Byzantine broadcast and agreement with \strongVal are computationally equivalent.

Without trusted setup, any Byzantine agreement problem is solvable if and only if $n>3t$~\cite{FLM85,PSL80,pfitzmann1996information} (if we ignore the resource-restricted model~\cite{Garay2020}).
Reducing validated agreement to interactive consistency is excessive, given the unavoidable $\Omega(n^{2}L)$ lower bound associated with interactive consistency. This is obviously true for long inputs, and remains applicable to medium-sized inputs (for instance, when $L = O(n)$ and $L = \omega(\log n)$), since in this case, the $\Omega(n^2 L)$ lower bound dominates the $O(n^2 \log n)$ bit complexity of \name. 

Even when considering amortized complexity, it is easy to see situations where interactive consistency may be excessive. For instance, consider processes that aim to implement state-machine replication via multi-shot interactive consistency for clients external to the system who wish to publish transactions on the corresponding ledger \cite{AKGN18}. If a client sends transactions to a sub-linear number of servers, there's a risk these transactions may never be published. Conversely, if clients send their transactions to a linear number of different servers, the same transaction could appear multiple times, thereby reducing the throughput by the same factor.

\smallskip
\noindent \textbf{Setup-free sub-quadratic randomized Byzantine agreement.}
There exist randomized statistically secure (resilient against a computationally unbounded adversary with probability $1-\frac{1}{2^{\lambda}}$, where $\lambda$ represents some security parameter) Byzantine broadcast protocols that achieve sub-quadratic expected bit complexity (assuming $L \in O(1)$).
However, these algorithms either (1) tackle only static adversaries that corrupt processes at the beginning of an execution~\cite{King2006a,King2009,braud2013fast,Gelles23},\footnote{We underline that \name, as it is a deterministic protocol, tolerates an adaptive adversary capable of corrupting processes during (and not only before) any execution (see \Cref{section:preliminaries}). 
This remains true even if such an adaptive adversary is equipped with after-the-fact-removal capabilities~\cite{AbrahamRevisited,Abraham2023revisited}.} or (2) require private (also known as secure) communication channels~\cite{KingSaia10,King2011}.
Facing static corruption only, there exists a solution to the Byzantine agreement problem that does not rely on $n$ instances of Byzantine broadcast \cite{King2006a}.
(1) Processes agree (with high probability) on a $\lambda$-sized committee, which then executes any Byzantine agreement protocol.
(2) The committee then disseminates the decided value to all processes outside the committee.
(3) A non-committee process decides on a value received by the majority of processes in the committee.
This protocol is correct as long as there is a correct majority in the committee, that is, with probability $1-\mathit{exp}(-\Omega(\lambda))$ due to the standard Chernoff bound.  

\noindent
For agreement with either or both strong unanimity and external validity, this approach can be adapted to an (even strongly) adaptive adversary, but at the cost of at least a quadratic overhead. Specifically, the process begins with an all-to-all dispersal phase, where each process broadcasts a candidate value. This is followed by an alternating sequence of graded consensus and leader election (LE) steps, which might fail but not with a probability greater than $1 - \Omega(1)$. The selection of the dispersed value for the next graded consensus instance relies on the (hopefully) chosen correct leader. Note that, in this case, the transformation is error-free, i.e., the probability of successfully solving the problem is 1. Therefore, a solution for leader or committee election with an expected communication cost of $X$ can be transformed into a solution with the same expected latency, but with an expected communication complexity of $O(nL + X)$ in the case of a static adversary, and $O(n^2L + X)$ for a strongly adaptive adversary.

\smallskip
\noindent \textbf{Randomized Byzantine agreement with constant expected latency.}
Using private (also known as secure) channels,\footnote{Establishing a threshold setup to use a threshold pseudo-random function \cite{CKS05}, as in \cite{ANS23}, necessitates a distributed key generation protocol that also relies on private channels \cite{DXK023}.} it is possible to achieve various types of Byzantine agreement with a constant expected latency, even against adaptive adversaries, by employing some random election techniques. This applies to synchronous interactive consistency \cite{AsharovChandramouli24, ANS23}, synchronous validated agreement \cite{ANS23}, asynchronous common subset (ACS), also referred to as agreement on a core set or vector consensus \cite{das2024asynchronous, shoup2024theoretical, cohen2023concurrent, AAPS23}, asynchronous atomic broadcast (without relying on ACS) \cite{sui2023signature, cheng2024jumbo}, and asynchronous validated agreement \cite{feng2024making,Lu2020,ZZZDHWL22,LL022}.

\smallskip
\noindent \textbf{Randomized Byzantine agreement in the full information model.}
In the full information model (i.e., without private channels), Bar-Joseph and Ben-Or established a lower bound of $\tilde{\Omega}(\sqrt{n})$ latency for any protocol facing a strongly adaptive adversary, even when only crash failures are considered \cite{Bar-JosephB98}. 
However, by slightly limiting the adversary's adaptiveness, logarithmic latency becomes achievable under omission failures \cite{RSS18}.
While a recent near-optimal solution exists in the full information model against a strongly adaptive adversary with omission failures \cite{hajiaghayi2024nearly}, no current protocol achieves better than $O(f)$ latency in the same model with arbitrary (malicious) failures. Nevertheless, under a static adversary, logarithmic latency is attainable, and constant latency is possible with a slight relaxation of resiliency, as demonstrated in \cite{GPV06}.
A comparison of the state-of-the-art algorithms can be found in \Cref{tab:synchronous_protocols}.

\begin{table}[h!]
\scriptsize
\centering
    \begin{tabular}{|c|c|c|c|c|c|}
        \hline
        \textbf{Protocol} & \textbf{Relaxation} & \textbf{Resiliency} & \textbf{Latency} & \textbf{Comm} & \textbf{Problem} \\
        \hline
        \textbf{\name} & None & $n/3$ & $O(f)$ & $O((nL + n^2)\log(n))$ & S + E \\
        \hline
        \textbf{\nameopt} & Hash & $n/3$ & $O(f)$ & $O(nL + \kappa n^3)$ & S + E \\
        
        \hline \hline

        GK \cite{Gelles23} & Static Adv. & $n/(3 + \epsilon)$ & $\mathit{polylog}(n)$ & $\tilde{O}(n^{3/2})$ & LE \\
        \hline

        GPV \cite{GPV06} & Static Adv. & $n/(3 + \epsilon)$ & $O(\log(n)/\epsilon^2)$ & $O(n^3)$ &  LE \\
        \hline
        GPV \cite{GPV06} & Static Adv. & $n/\log^{1.58}(n)$ & $O(1)$ & $O(n^3)$ &  LE \\
        
        \hline \hline
        BB \cite{Bar-JosephB98} & Crash & $\Theta(n)$ & $\tilde{\Omega}(\sqrt{n})$ & - & Lower Bound \\
        \hline

        ACDNPS \cite{Abraham2023revisited} & Omission & $\Theta(n)$ & - & $\Omega(n^2)$ & Lower Bound (S) \\
        \hline

        HKO \cite{hajiaghayi2024nearly} & Omission & $n/30$ & $\tilde{\Theta}(\sqrt{n})$ & $\tilde{\Theta}(n^2)$ & Binary \\

        \hline
        RSS \cite{RSS18} & Delayed Adv, Omission & $n/10$ & $O(\log(n))$ & $O((nL +n^2)\log(n))$ & S + E \\

\hline \hline
        CMS \cite{CMS85,CMS89} & Static Adv, Omission & $n/3$ & $O(1)$ & $O(n^2)$ & Binary \\
        
        \hline \hline
        KS \cite{KingSaia10,King2011} & Priv. Chan. & $n/(3+\epsilon)$ & $\mathit{polylog(n)}$ & $\tilde{O}(n^{3/2})$ & Binary \\
        \hline
        AC \cite{AsharovChandramouli24} & Priv. Chan. & $n/3$ & $O(1)$ & $O(Ln^2+n^3 \log^2(n))$ & IC $\rightarrow$ (S + E) \\
        \hline
    \end{tabular}
    \caption{Overview of state-of-the-art signature-free synchronous protocols in relaxed models. 
    $L$ represents the size of the decided values, and $\kappa$ denotes the size of the hash values.
    S stands for ``\strongVal'', E stands for ``external validity'', LE stands for ``leader election'' and IC stands for ``interactive consistency''. The term binary refers to binary agreement with ``\strongVal''. 
}

    \label{tab:synchronous_protocols}
\end{table}

\smallskip
\noindent \textbf{Error correction.}
Many distributed tasks~\cite{das2021asynchronous,Yurek2022,Shoup2023,Choudhury2023,Das2022,Gagol2019,Keidar2021,Camenisch2022,Duan2022,Abraham2022,Lu2020,Nayak2020,AsharovChandramouli24,ANS23,das2024asynchronous,shoup2024theoretical,cohen2023concurrent,AAPS23,sui2023signature,cheng2024jumbo,feng2024making} employ coding techniques such as erasure codes~\cite{blahut1983theory, Hendricks2007b, alhaddad2021succinct} or error-correction codes~\cite{reed1960, BenOr1993}.
For the comprehensive list of distributed primitive utilizing coding techniques, we refer the reader to \cite{civit_et_al:LIPIcs.DISC.2023.13}.

\medskip
\noindent \textbf{Disperse-agree-retrieval (DARE) paradigm.}
As introduced in \Cref{section:introduction}, existing deterministic authenticated validated agreement protocols with $O(nL + n^2 \kappa)$ communication, employ the DARE paradigm \cite{ada_dare_to_appear_podc24,civit_et_al:LIPIcs.DISC.2023.13}. In this paradigm, processes initially agree on a ``self-certifying" proof of retrievability (PoR) (also known as a ``proof of dispersal''). This proof ensures that processes can retrieve the (valid) pre-image of a digest. Signature-free protocols that use random election techniques through private channels \cite{feng2024making,sui2023signature} can circumvent the need for an explicit PoR. Indeed, the use of randomization allows a constant expected number of reconstructions (with costly $\Omega(nL)$ communication) of non-valid values dispersed by Byzantine processes.

It might be feasible to achieve setup-free validated agreement with $O(nL + \mathit{poly}(\kappa,n))$ communication and without private channels, by employing an extension protocol built upon an explicit PoR. Such a strategy would involve an interactive Succinct ARGument (SARG) system. This argument system would enable processes to prove that the digest of their proposal corresponds to a valid pre-image without having to transmit the proposal itself (the proposal plays the role of the witness in the argument system). As discussed in \cite[Section 6.5.2 and 7.3]{Thaler22} and initially observed in \cite{ZGKPP17}, it is possible to implement such an argument system by using some polynomial commitment scheme \cite[Section 7.3]{Thaler22} to turn the GKR interactive proof for arithmetic circuit evaluation  \cite{GKR08,GKR15,GRK17} into a succinct argument for arithmetic circuit satisfiability.
Subsequently, processes could (1) agree on which digests are both valid and consistently dispersed, (2) execute a validated agreement protocol (designed for values with length independent of $L$) to reach consensus on a retrievable digest, and (3) retrieve the corresponding valid pre-image. Given the deterministic nature and the conceptual simplicity of \nameopt, this approach was not pursued in this paper.
\section{Concluding Remarks} \label{section:conclusion}

This paper introduces \nameopt and \name, two synchronous signature-free algorithms for validated Byzantine agreement.
Both algorithms are (1) optimally resilient (tolerating up to $t < n / 3$ Byzantine failures), and (2) early stopping (terminating in $O(f + 1)$ rounds, where $f \leq t$ denotes the actual number of failures).
On one side, \nameopt utilizes only collision-resistant hashes, achieving a bit complexity of $O(nL + n^3\kappa)$, which is optimal when $L \geq n^2 \kappa$ (with $\kappa$ being the size of a hash value). 
Conversely, \name is error-free, avoids cryptography entirely, and achieves a bit complexity of $O\big( (nL + n^2) \log n \big)$, which is nearly optimal for any $L$.
In the future, we plan to focus on the following open questions:
\begin{compactitem}
    \item Is it possible to design an error-free validated agreement algorithm with a bit complexity of $O(nL)$?
    Our \name algorithm achieves only $O(n L \log n)$ bit complexity.

    \item Can \nameopt be optimized to achieve $O(nL)$ bit complexity for a wider range of proposal sizes $L$?
    Currently, \nameopt allows for optimal $O(nL)$ bit complexity only when $L \geq n^2 \kappa$.
\end{compactitem}

\bibstyle{plainurl}
\bibliography{references}

\appendix
\section{Graded Consensus: Concrete Implementations} \label{section:graded_consensus_concrete_implementations}

This section presents the implementations of the graded consensus primitive (see \Cref{mod:graded-consensus}) employed in \name and \nameopt.
In \Cref{subsection:hash_based_graded_consensus}, we give our graded consensus algorithm employed in \nameopt: this algorithm exchanges $O(n^2\kappa)$ bits (where $\kappa$ denotes the size of a hash) and terminates in $2$ rounds.
In \Cref{subsection:error_free_graded_consensus_proof}, we introduce our graded consensus algorithm employed in \name: this algorithm exchanges $O(nL + n^2 \log n)$ bits and terminates in $7$ rounds.
We note that both algorithms satisfy all properties specified in \Cref{mod:graded-consensus} only if $t < n/3$. However, the bit and round complexity are guaranteed to be the same even if $t \ge n/3$.

\subsection{Graded Consensus with \texorpdfstring{$O(n^2\kappa)$}{O(n²kappa)} Bits and \texorpdfstring{$2$}{2} Rounds for \nameopt}
\label{subsection:hash_based_graded_consensus}

Our implementation is given in \Cref{algorithm:hash_based_graded_consensus}.
We underline that our implementation is highly inspired by an asynchronous implementation for $t < n / 5$ provided by Attiya and Welch~\cite[Algorithm 2]{AW23}.

\begin{algorithm} [h]
\caption{Graded consensus with $O(n^2\kappa)$ bits: Pseudocode (for process $p_i$)}
\label{algorithm:hash_based_graded_consensus}
\begin{algorithmic} [1]
\footnotesize

\State \textbf{Local variables:}
\State \hskip2em $\mathsf{Hash\_Value}$ $h_i \gets p_i$'s proposal
\State \hskip2em $\mathsf{Hash\_Value}$ $\mathit{branch}_i \gets \bot$

\medskip
\State \emph{Round 1:}
\State \hskip2em \textbf{broadcast} $\langle \textsc{proposal}, h_i \rangle$

\smallskip
\State \hskip2em \textbf{if} exists $\mathsf{Hash\_Value}$ $h$ such that $\langle \textsc{proposal}, h \rangle$ is received from $n - t$ processes:
\State \hskip4em $\mathit{branch}_i \gets h$

\medskip
\State \emph{Round 2:}
\State \hskip2em \textbf{broadcast} $\langle \textsc{branch}, \mathit{branch}_i \rangle$

\smallskip
\State \hskip2em \textbf{if} $\mathit{branch}_i = \bot$:
\State \hskip4em \textbf{if} exists $\mathsf{Hash\_Value}$ $h$ such that $\langle \textsc{branch}, h \rangle$ is received from $t + 1$ processes:
\State \hskip6em \textbf{trigger} $\mathsf{decide}(h, 0)$ \label{line:hash_based_graded_consensus_decide_1}
\State \hskip4em \textbf{else:}
\State \hskip6em \textbf{trigger} $\mathsf{decide}(h_i, 0)$ \label{line:hash_based_graded_consensus_decide_2}
\State \hskip2em \textbf{else:}
\State \hskip4em \textbf{if} $\langle \textsc{branch}, \mathit{branch}_i \rangle$ is received from $n - t$ processes:
\State \hskip6em \textbf{trigger} $\mathsf{decide}(\mathit{branch}_i, 1)$ \label{line:hash_based_graded_consensus_decide_3}
\State \hskip4em \textbf{else:}
\State \hskip6em \textbf{trigger} $\mathsf{decide}(\mathit{branch}_i, 0)$ \label{line:hash_based_graded_consensus_decide_4}
 
\end{algorithmic}
\end{algorithm}

\smallskip
\noindent \textbf{Proof of correctness.}
We start by proving the strong unanimity property.

\begin{theorem} [Strong unanimity]
\Cref{algorithm:hash_based_graded_consensus} satisfies strong unanimity.
\end{theorem}
\begin{proof}
Suppose all correct processes propose the same hash value $h$.
Hence, every correct process $p_i$ updates $\mathit{branch}_i$ to $h$ at the end of the first round.
Therefore, all correct processes broadcast a \textsc{branch} message with $h$ in the second round, which implies that all correct processes receive $h$ from (at least) $n - t$ processes in the second round (as there are at least $n - t$ correct processes).
Therefore, every correct process decides $h$ with grade $1$, which concludes the proof.
\end{proof}

Next, we prove the justification property.

\begin{theorem} [Justification]
\Cref{algorithm:hash_based_graded_consensus} satisfies justification.
\end{theorem}
\begin{proof}
Consider any correct process $p_i$ and let $p_i$ decide a hash value $h$.
We distinguish three possibilities:
\begin{compactitem}
    \item Let $p_i$ decide $h$ at line~\ref{line:hash_based_graded_consensus_decide_1}.
    In this case, $p_i$ has received a \textsc{branch} message for $h$ from a correct process $p_j$.
    Process $p_j$ has previously received a \textsc{proposal} message for $h$ from at least one correct process, which proves the statement of the lemma in this case.

    \item Let $p_i$ decide $h$ at line~\ref{line:hash_based_graded_consensus_decide_2}.
    In this case, $h$ is $p_i$'s proposal, which proves the statement of the lemma.

    \item Let $p_i$ decide $h$ at line~\ref{line:hash_based_graded_consensus_decide_3} or line~\ref{line:hash_based_graded_consensus_decide_4}.
    In this case, $\mathit{branch}_i = h$.
    Hence, $p_i$ has previously (in the first round) received a \textsc{proposal} message for $h$ from a correct process.
    Thus, the statement of the lemma is satisfied in this case as well.
\end{compactitem}
As the statement of the lemma holds in every possible scenario, the proof is concluded.
\end{proof}

The following theorem proves the integrity property.

\begin{theorem} [Integrity]
\Cref{algorithm:hash_based_graded_consensus} satisfies integrity.
\end{theorem}
\begin{proof}
Integrity is trivially satisfied, as each correct process decides at most once.
\end{proof}

Next, we prove the termination property.

\begin{theorem} [Termination]
\Cref{algorithm:hash_based_graded_consensus} satisfies termination.
\end{theorem}
\begin{proof}
All correct processes decide simultaneously as (1) all correct processes propose simultaneously, and (2) \Cref{algorithm:hash_based_graded_consensus} is executed for exactly $2$ rounds by all correct processes.
\end{proof}

Lastly, we need to prove the consistency property.
To this end, we first show that if a correct process $p_i$ sends a \textsc{branch} message for some hash value $h_i$ and another correct process $p_j$ sends a \textsc{branch} message for some hash value $h_j$, then $h_i = h_j$.

\begin{lemma} \label{lemma:same_branch}
If a correct process $p_i$ sends a \textsc{branch} message for a hash value $h_i$ and another correct process $p_j$ sends a \textsc{branch} message for a hash value $h_j$, then $h_i = h_j$.
\end{lemma}
\begin{proof}
As $p_i$ (resp., $p_j$) sends a \textsc{branch} message for a hash value $h_i$ (resp., $h_j$), $p_i$ (resp., $p_j$) has previously received a \textsc{proposal} message for $h_i$ (resp., $h_j$) from $n - t$ processes.
Due to the quorum intersection (as $n - t + n - t - n = n - 2t \geq t + 1$), there exists at least one correct process whose \textsc{proposal} message is received by both $p_i$ and $p_j$.
Therefore, $h_i = h_j$. 
\end{proof}

Finally, we are ready to prove the consistency property.

\begin{theorem} [Consistency]
\Cref{algorithm:hash_based_graded_consensus} satisfies consistency.
\end{theorem}
\begin{proof}
Suppose a correct process $p_i$ decides some hash value $h$ with grade $1$.
Importantly, process $p_i$ does so at line~\ref{line:hash_based_graded_consensus_decide_3} and $p_i$ has previously received a \textsc{branch} message for $h$ from a correct process.
Suppose another correct process $p_j$ decides some hash value $h'$ (with some binary grade).
In order to prove the consistency property, we need to show that $h' = h$.
Let us consider three possible scenarios:
\begin{compactitem}
    \item Let $p_j$ decide $h'$ at line~\ref{line:hash_based_graded_consensus_decide_1}.
    In this case, $p_j$ has received a \textsc{branch} message for $h'$ from a correct process.
    As $p_i$ has received a \textsc{branch} message for $h$ from a correct process, Lemma~\ref{lemma:same_branch} proves that $h = h'$ in this case.

    \item Let $p_j$ decide $h'$ at line~\ref{line:hash_based_graded_consensus_decide_2}.
    As $p_i$ has received $n - t$ \textsc{branch} messages for $h$, process $p_j$ has received at least $n - 2t \geq t + 1$ \textsc{branch} messages for $h$.
    Therefore, this case is impossible.

    \item Let $p_j$ decide $h'$ at line~\ref{line:hash_based_graded_consensus_decide_3} or line~\ref{line:hash_based_graded_consensus_decide_4}.
    In this case, $p_j$ has previously sent a \textsc{branch} message for $h'$.
    Given that $p_i$ has received a \textsc{branch} message for $h$ from a correct process, Lemma~\ref{lemma:same_branch} proves that $h = h'$.
\end{compactitem}
As $h = h'$ in any possible case, the proof is concluded.
\end{proof}


\smallskip
\noindent \textbf{Proof of complexity.}
We start by proving that correct processes send $O(n^2\kappa)$ bits.

\begin{theorem} [Exchanged bits]
Correct processes send $O(n^2\kappa)$ bits in \Cref{algorithm:hash_based_graded_consensus}.
\end{theorem}
\begin{proof}
Each process sends $O(n)$ \textsc{proposal} and \textsc{branch} messages, each with $O(\kappa)$ bits.
Therefore, all correct processes send $O(n^2\kappa)$ bits.
\end{proof}

Finally, it is straightforward to observe that \Cref{algorithm:hash_based_graded_consensus} terminates in $2$ rounds.

\begin{theorem} [Rounds]
\Cref{algorithm:hash_based_graded_consensus} terminates in $2$ rounds.
\end{theorem}

\subsection{Graded Consensus with \texorpdfstring{$O(nL + n^2 \log n)$}{O(nL + n²log(n))}  Bits and \texorpdfstring{$7$}{7} Rounds for \name} 
\label{subsection:error_free_graded_consensus_proof}

Our implementation internally utilizes a binary (processes propose $0$ or $1$) graded consensus algorithm (e.g.,~\cite{Lenzen2022}) and \reducecool, a sub-protocol employed in the COOL protocol proposed in~\cite{Chen2021}.
(COOL is an error-free Byzantine agreement algorithm that satisfies \emph{only} \strongVal and achieves $O(nL + n^2 \log n)$ bit complexity.)
We now briefly introduce the \reducecool protocol.

\medskip
\noindent \textbf{\reducecool~\cite{Chen2021}.}
The formal specification of \reducecool is given in \Cref{mod:reducecool}.
Intuitively, \reducecool reduces the number of outputs to at most 1; if all processes start with the same value, that value is preserved by \reducecool.
Moreover, the processes output a binary success indicator. 
Again, a unanimous proposal leads to a common positive success indicator (i.e., $1$). 
Moreover, if a correct process outputs a positive success indicator, it is guaranteed that all correct processes agree on a common value proposed by some correct process.

\begin{module}
\caption{\reducecool~\cite{Chen2021}}
\label{mod:reducecool}
\footnotesize
\begin{algorithmic}[1]
\Statex
\Statex \textbf{Events:}
\begin{compactitem}
    \item \emph{request} $\mathsf{input}(v \in \mathsf{Value})$: a process inputs a value $v$.

    \item \emph{indication} $\mathsf{output}(v' \in \mathsf{Value} \cup \{\phi\}, s' \in \{0, 1\})$: a process outputs value $v'$ and a success indicator $s'$; $\phi$ denotes a predetermined default value.
\end{compactitem}

\medskip 
\Statex \textbf{Notes:} 
\begin{compactitem}
    \item All correct processes input a value exactly once, and they all do so simultaneously (i.e., in the same round).
\end{compactitem}

\medskip 
\Statex \textbf{Properties:}
\Statex Let $(\omega_i, \mathit{vote}_i)$ denote the output received by a correct process $p_i$.
The following holds:
\begin{compactitem}
\item \emph{Safety:} If $\omega_i \neq \phi$ (where $\phi$ is a predetermined default value) and $\mathit{vote}_i = 1$, then $\omega_i$ is the value input by $p_i$.

\item \emph{Consistency:} If there exists a correct process $p_j$ that outputs a pair $(\cdot, 1)$, then there exists a value $\omega \neq \phi$ such that every correct process outputs $(\omega, \cdot)$.

\item \emph{Obligation:} If every correct process inputs the same value $\omega$, then every correct process outputs $(\omega \neq \phi, 1)$.

\item \emph{Termination:} All correct processes output a pair simultaneously (i.e., in the same round). 
\end{compactitem}

\end{algorithmic}
\end{module}

For completeness, we recall the pseudocode of \reducecool in \Cref{algorithm:reducecool_2} and prove that \reducecool satisfies the specification introduced in \Cref{mod:reducecool} (\Cref{theorem:reducecool_correct}).

\begin{algorithm} [hp]
\caption{\reducecool~\cite{Chen2021}: Pseudocode (for process $p_i$)}
\label{algorithm:reducecool_2}
\begin{algorithmic} [1]
\footnotesize
\State Set $\omega^{(i)} = w_i$. \BlueComment{Value $w_i$ is the proposal of process $p_i$.}
\State Process $p_i$ encodes its message into $n$ symbols, i.e., $[y_1^{(i)}, y_2^{(i)}, ..., y_n^{i}] \gets \mathsf{RSEnc}(w_i, n, k)$, where $k = \lfloor \frac{t}{5} \rfloor + 1$.

\medskip
\State \textbf{Phase 1:}
\State Process $p_i$ sends $(y^{(i)}_j , y^{(i)}_i)$ to process $p_j$, $\forall j \in [1 , n]$.
\For{$j = 1 \text{ to } n$}
    \If{$\big( (y^{(j)}_i , y^{(j)}_j) = (y^{(i)}_i , y^{(i)}_j) \big)$}
        \State Process $p_i$ sets $u_i(j) \gets 1$.
    \Else
        \State Process $p_i$ sets $u_i(j) \gets 0$.
    \EndIf
    \EndFor
    \If{$\left(\sum_{j=1}^{n} u_i(j) \geq n - t\right)$}
        \State Process $p_i$ sets its success indicator as $s_i \gets 1$.
    \Else
        \State Process $p_i$ sets $s_i \gets 0$ and $\omega^{(i)} \gets \phi$.
    \EndIf

\State Process $p_i$ sends the value of $s_i$ to all processes.
\State Process $p_i$ creates sets $S_p \gets \{j : s_j = p, j \in [1 : n]\}$, $p \in \{0, 1\}$, from received $\{s_j\}_{j=1}^{n}$.

\medskip
\State \textbf{Phase 2:}
\If{$(s_i = 1)$}
    \State Process $p_i$ sets $u_i(j) \gets 0$, $\forall j \in S_0$.
    \If{$\left(\sum_{j=1}^{n} u_i(j) < n - t\right)$}
        \State Process $p_i$ sets $s_i \gets 0$ and $\omega^{(i)} \gets \phi$.
    
    \State Process $p_i$ sends the value of $s_i$ to all processes.
    \EndIf
    \EndIf
    \State Process $p_i$ updates $S_0$ and $S_1$ based on the newly received success indicators.

\medskip
\State \textbf{Phase 3:}
\If{$(s_i = 1)$}
    \State Process $i$ sets $u_i(j) \gets 0$, $\forall j \in S_0$.
    \If{$\left(\sum_{j=1}^{n} u_i(j) < n - t\right)$}
        \State Process $p_i$ sets $s_i \gets 0$ and $\omega^{(i)} \gets \phi$.
    
    \State Process $p_i$ sends the value of $s_i$ to all processes.
    \EndIf
    \EndIf
    
    \State Process $p_i$ updates $S_0$ and $S_1$ based on the newly received success indicators.
    \State Process $p_i$ sends $\langle \textsc{symbol}, y_j^{(i)} \rangle$ to process $p_j$, $\forall j \in [1, n]$.
    \If{$\left(|S_1| \geq 2t + 1\right)$} \label{line:reduce_cool_S1geq2tp1}
        \State Process $p_i$ sets the binary vote as $v_i = 1$. \label{line:binary_vote_1}
    \Else
        \State Process $p_i$ sets the binary vote as $v_i = 0$.
    \EndIf

\medskip
\State \textbf{Phase 4:}
\If{$s_i = 1$}
    \State Process $p_i$ sends $\langle \textsc{reconstruct}, y_i^{(i)} \rangle$ to all processes. \label{line:reduce_cool_bcast_reconstruct_1}
\Else
    \State Let $y_i^{(i)} \gets \mathsf{majority}(\text{RS symbols received via \textsc{symbol} messages})$. \label{line:majority}
    \State Process $p_i$ sends $\langle \textsc{reconstruct}, y_i^{(i)} \rangle$ to all processes. \label{line:reduce_cool_bcast_reconstruct_2}
\EndIf

\smallskip
\If{$s_i = 1$}
    \State \textbf{return} $(\omega^{(i)}, v_i)$ \label{line:reduce_cool_decide_1_its}
\Else  
    \State For every process $p_j$, let $m_j$ denote the RS symbol sent by process $p_j$ via a \textsc{reconstruct} message (if such a \hphantom{allll}message is not received, $m_j = \bot$).
    \State \textbf{return} $\big( \mathsf{RSDec}(k, t, \{ (j, m_j) \,|\, p_j \in \Pi \}, 0 \big)$ \label{line:reduce_cool_decide_2_its}
\EndIf
    
\end{algorithmic}
\end{algorithm}

\begin{theorem} [\reducecool is correct] \label{theorem:reducecool_correct}
\reducecool (\Cref{algorithm:reducecool_2}) satisfies the specification of \Cref{mod:reducecool}.
\end{theorem}
\begin{proof}
The proof of the theorem follows directly from the extended version~\cite{Chen2021extended} of~\cite{Chen2021}:
\begin{compactitem}
    \item Termination is ensured by~\cite[Lemma 6]{Chen2021extended}.
    \item Safety is ensured as $p_i$ never updates its variable $\omega^{(i)}$ to any other non-$\phi$ value (see \Cref{algorithm:reducecool_2}).

    \item Obligation is ensured by \cite[Lemma 5 (appendix H)]{Chen2021extended}.

    \item Consistency is ensured by \cite[Lemma 3]{Chen2021extended} and \cite[Lemma 4 (appendix G, equations 130 to 133)]{Chen2021extended}.
    Let $(\omega_i, \mathit{vote}_i)$ be the pair output by a correct process $p_i$.
    More precisely,~\cite[Lemma 3]{Chen2021extended} proves the following property:
    \begin{compactitem}
        \item \emph{Non-duplicity:} If there exists a correct process $p_i$ such that, at the end of phase 3 of the \reducecool protocol (\Cref{algorithm:reducecool_2}), the binary vote $v_i$ is equal to $1$ (line~\ref{line:binary_vote_1}), then $|\{\omega_j\,|\, \omega_j \neq \phi \text{ and } p_j \text{ is correct} \}| \leq 1$. \label{item:non_duplicity}
    \end{compactitem}
    Moreover,~\cite[Lemma 4 (appendix G, equations 130 to 133)]{Chen2021extended} ensures the following:
    \begin{compactitem}
    \item \emph{Retrievability:} If there exists a correct process $p_i$ such that, at the end of phase 3 of the \reducecool protocol (\Cref{algorithm:reducecool_2}), the binary vote $v_i$ is equal to $1$ (line~\ref{line:binary_vote_1}), then $| \{ p_j \,|\, p_j \text{ is correct and } p_j \in S_1 \text{ at process $p_i$ and }  \omega_j \neq \phi\}| \geq t + 1$. \label{item:retrievability}
     \end{compactitem}
Suppose a correct process $p_i$ obtains $(\omega_i, 1)$ from \reducecool.
Hence, $v_i = 1$ at the end of phase 3 of the \reducecool protocol (\Cref{algorithm:reducecool_2}).
By the retrievability property, a set $K$ of at least $t + 1$ correct processes $p_k$ has obtained $w_k \neq \phi$ and $s_k = 1$ at the end of phase 3 of the \reducecool protocol and, due to the non-duplicity property, holds the same value $\bar{w} = \omega_i$, i.e., for every correct process $p_k \in K$, $w_k = \bar{w} = \omega_i$. 
Thus, for every correct process $p_l$ with $s_l = 0$, $p_l$ obtains a correctly-encoded RS symbol of value $\omega_i$ at line~\ref{line:majority} (as at most $t$ incorrect and at least $t + 1$ correct symbols are received by $p_l$). 
Therefore, every correct process that sends a symbol (line~\ref{line:reduce_cool_bcast_reconstruct_1} or line~\ref{line:reduce_cool_bcast_reconstruct_2}) does send a correctly-encoded RS symbol of value $\omega_i$, which means that any correct process that decides at line~\ref{line:reduce_cool_decide_2_its} does decide the same value $\omega_i$.
\end{compactitem}
As all properties are indeed satisfied, the theorem holds.
\end{proof}

\smallskip
\noindent \textbf{Our implementation.}
Our implementation of the graded consensus primitive with $O(nL + n^2 \log n)$ bits is given in \Cref{algorithm:gccool_protocol_2}.

\begin{algorithm} [h]
\caption{Graded consensus with $O(nL + n^2 \log n)$ bits: Pseudocode (for process $p_i$)}
\label{algorithm:gccool_protocol_2}
\begin{algorithmic} [1]
\footnotesize
\State \textbf{Uses:}
\State \hskip2em \reducecool~\cite{Chen2021}, \textbf{instance} $\mathcal{R}_{\mathit{cool}}$ \BlueComment{bits: $O(nL + n^2\log n )$; rounds: $5$}
\State \hskip2em Graded consensus proposed in~\cite{Lenzen2022}, \textbf{instance} $\mathcal{GC}$ \BlueComment{bits: $O(n^2)$; rounds: $2$}

\medskip
\State \textbf{Local variables:}
\State \hskip2em $\mathsf{Value}$ $v_i \gets p_i$'s proposal

\medskip
\State \textbf{Pseudocode:}
\State \hskip2em Initialize local variable $(\omega_i, \mathit{vote}_i) \in \mathsf{Value} \times \{ 0, 1 \}$
\State \hskip2em \textcolor{blue}{ \(\triangleright\) $\mathcal{R}_{\mathit{cool}}$ ensures that if $(\omega_i, 1)$ is returned, then $(\omega_i, \cdot)$ is returned to all correct processes}
\State \hskip2em Let $(\omega_i, \mathit{vote}_i) \gets \mathcal{R}_{\mathit{cool}}(v_i)$ \BlueComment{$v_i$ is the input, and a pair $(\omega_i, \mathit{vote}_i)$ is the output}

\smallskip
\State \hskip2em Initialize local variable $(b_i, g_i) \in \{0, 1\} \times \{0, 1\}$
\State \hskip2em \textcolor{blue}{ \(\triangleright\) if $p_i$ decides $(b, \cdot)$ from $\mathcal{GC}$, then $b$ was proposed by a correct process (due to $\mathcal{GC}$'s justification)}
\State \hskip2em Let $(b_i, g_i) \gets \mathcal{GC}.\mathsf{propose}(\mathit{vote}_i)$ \BlueComment{$\mathit{vote}_i$ is the proposal, and $b_i$ is decided with grade $g_i$}
\State \hskip2em \textbf{if} $b_i = 0$:
\State \hskip4em \textbf{trigger} $\mathsf{decide}(v_i, 0)$ \label{line:decide_1_reduce_cool_gc}
\State \hskip2em \textbf{else:} \BlueComment{$b_i = 1$}
\State \hskip4em \textbf{trigger} $\mathsf{decide}(\omega_i, g_i)$ \label{line:decide_2_reduce_cool_gc}

\end{algorithmic}
\end{algorithm}

\smallskip
\noindent \textbf{Proof of correctness.}
We start by proving the strong unanimity.

\begin{theorem} [Strong unanimity]
\Cref{algorithm:gccool_protocol_2} satisfies strong unanimity.
\end{theorem}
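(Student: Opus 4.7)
The plan is to trace the execution under the hypothesis that all correct processes propose the same value $\bar{w}$, showing at each stage of \Cref{algorithm:gccool_protocol} that the state of every correct process is ``locked in'' to $(\bar{w}, 1)$. The reasoning will proceed stage by stage: first \reducecool, then the binary graded consensus $\mathcal{GC}$, and finally the reconstruction phase.

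First I would apply the obligation property of \reducecool from Lemma~\ref{lemma:crucial_long_inputs}: since every correct process proposes the same $\bar{w}$ to $\mathcal{R}_{\mathit{cool}}$, every correct process $p_i$ receives $(\omega^{(i)}, v_i, s_i, \cdot) = (\bar{w}, 1, 1, \cdot)$ at line~\ref{line:invoke_reduce}. In particular, every correct process proposes the input $v_i = 1$ to the binary graded consensus $\mathcal{GC}$ at line~\ref{line:invoke_graded_consensus}. Invoking the strong validity of $\mathcal{GC}$ (from \cite{Lenzen2022}), every correct process decides $(b_i, g_i) = (1, 1)$. Consequently, the test at line~\ref{line:sad} fails for every correct process, so no correct process decides at line~\ref{line:decide_1_its}, and every correct process enters the reconstruction phase with $\mathit{decided}_i = \mathit{false}$ and $g_i = 1$.

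Next I would analyze the reconstruction phase. Since every correct process has $\omega^{(i)} = \bar{w} \neq \phi$, each correct process sends, at line~\ref{line:send_help_symbols}, the correct RS-encoded symbol of $\bar{w}$ to every other process. Moreover, since $s_i = 1$ for every correct $p_i$, no correct process triggers the majority-recovery step at line~\ref{line:majority}, and so each correct process broadcasts at line~\ref{line:share_symbol} its own correctly-encoded symbol $y_i^{(i)}$ of $\bar{w}$. Therefore, at the $\mathsf{RSDec}$ call of line~\ref{line:decide_3_its}, every correct process receives at most $t$ erroneous symbols (those sent by the at most $t$ faulty processes), which is within the error-correction capability of the code. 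Hence $\mathsf{RSDec}$ returns $\bar{w}$, and the process decides $(\bar{w}, g_i) = (\bar{w}, 1)$, as required by strong validity.

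The argument is essentially a composition of obligation (for \reducecool) with strong validity (for binary $\mathcal{GC}$) followed by a standard RS-decoding argument, so no single step should be the bottleneck; the only point needing care is verifying that both (i) the early-exit branch at line~\ref{line:decide_1_its} is unreachable for correct processes and (ii) the majority-recovery branch at line~\ref{line:majority} is bypassed by correct processes, so that the symbols they disseminate in the reconstruction phase are genuinely encodings of $\bar{w}$. Both follow immediately from the fact that obligation forces $v_i = s_i = 1$ simultaneously for every correct process.
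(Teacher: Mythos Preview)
Your proof is correct and follows essentially the same structure as the paper's: obligation of \reducecool forces $v_i = 1$, strong validity of the binary $\mathcal{GC}$ forces $(b_i, g_i) = (1,1)$, and hence every correct process decides at line~\ref{line:decide_3_its} with grade $1$. The only difference is in the final step: the paper invokes Lemma~\ref{lemma:crucial_long_inputs_2} as a black box to conclude that all correct processes reconstruct the same $\bar{w}$ proposed by a correct process, whereas you unroll that argument directly for the unanimous-input case (using $s_i = 1$ to bypass line~\ref{line:majority} and then a straightforward RS error-count). Both are valid; the paper's route is more modular since the same lemma is reused for consistency and external validity, while yours is self-contained and makes explicit why the unanimous case avoids the majority-recovery branch.
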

\begin{proof}
Suppose all correct processes propose the same value $v$.
By the obligation property of the instance $\mathcal{R}_{\mathit{cool}}$ of \reducecool, every correct process outputs $(v, 1)$ from $\mathcal{R}_{\mathit{cool}}$.
Hence, every correct process proposes $1$ to the graded consensus instance $\mathcal{GC}$.
This implies that every correct process decides $(1, 1)$ from $\mathcal{GC}$ (due to the strong unanimity property of $\mathcal{GC}$).
Thus, every correct process decides $v$ with grade $1$ from \Cref{algorithm:gccool_protocol_2}, which concludes the proof.
\end{proof}

Next, we prove the justification property.

\begin{theorem} [Justification]
\Cref{algorithm:gccool_protocol_2} satisfies justification.
\end{theorem}
\begin{proof}
Consider any correct process $p_i$ that decides some value $v$.
We distinguish two possibilities:
\begin{compactitem}
    \item Let $p_i$ decide $v$ at line~\ref{line:decide_1_reduce_cool_gc}.
    In this case, $v$ is $p_i$'s proposal, which proves the statement.

    \item Let $p_i$ decide $v$ at line~\ref{line:decide_2_reduce_cool_gc}.
    In this case, $p_i$ has decided $b_i = 1$ from $\mathcal{GC}$.
    The justification property of $\mathcal{GC}$ ensures that some correct process $p_j$ has previously proposed $1$ to $\mathcal{GC}$.
    Therefore, $p_j$ has decided a pair $(v', 1)$ from $\mathcal{R}_{\mathit{cool}}$.
    The safety property of $\mathcal{R}_{\mathit{cool}}$ ensures that $v'$ is $p_j$'s proposal.
    Moreover, the consistency property of $\mathcal{R}_{\mathit{cool}}$ ensures that every correct process outputs $v'$ from $\mathcal{R}_{\mathit{cool}}$.
    Thus, $v' = v$.
    As $v$ is $p_j$'s proposal, the statement of the lemma holds in this case.
\end{compactitem}
The theorem holds as its statement stands in both possible cases.
\end{proof}

The following theorem proves the consistency property.

\begin{theorem} [Consistency]
\Cref{algorithm:gccool_protocol_2} satisfies consistency.
\end{theorem}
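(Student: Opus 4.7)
The plan is to show that a grade-$1$ decision by any correct process $p_i$ forces every other correct deciding process $p_j$ to output the same value. First I would locate where grade $1$ can be produced. Inspecting \Cref{algorithm:gccool_protocol}, line~\ref{line:decide_1_its} always produces grade $0$, so $p_i$ must have decided at line~\ref{line:decide_3_its}. For $p_i$ to reach that line with $\mathit{decided}_i = \mathit{false}$, the test at line~\ref{line:sad} must have failed, so $b_i = 1$; and since the grade is taken directly from $\mathcal{GC}$, we conclude that $p_i$ decided $(1,1)$ from the binary graded consensus $\mathcal{GC}$ invoked at line~\ref{line:invoke_graded_consensus}.

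Next I would propagate this via the consistency property of $\mathcal{GC}$. Since $p_i$ decided $(1,1)$ from $\mathcal{GC}$, the consistency of binary graded consensus implies that every correct $p_j$ that decides from $\mathcal{GC}$ does so with $b_j = 1$. Combined with the termination of $\mathcal{GC}$, this rules out the branch at line~\ref{line:sad} for every correct process, so no correct process can decide at line~\ref{line:decide_1_its}. Therefore every correct process that decides from \Cref{algorithm:gccool_protocol} must do so at line~\ref{line:decide_3_its}.

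Finally I would invoke Lemma~\ref{lemma:crucial_long_inputs_2}: whenever any correct process decides at line~\ref{line:decide_3_its}, there exists a unique value $\bar{w}$ such that every such correct deciding process outputs $\bar{w}$. Applied to $p_i$ and to an arbitrary correct $p_j$ that decides $(v', \cdot)$, this yields $v = v' = \bar{w}$, which is exactly the consistency property.

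The only real subtlety is making sure the right property of the binary $\mathcal{GC}$ from~\cite{Lenzen2022} is being used: it is the consistency of $\mathcal{GC}$ (not safety of what was proposed, nor strong validity) that turns the single $(1,1)$ decision at $p_i$ into the global guarantee $b_j = 1$ for all correct $p_j$. Once that observation is in place, the rest is a short case split on the two decision lines together with an appeal to Lemma~\ref{lemma:crucial_long_inputs_2}.
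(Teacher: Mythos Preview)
Your proposal is correct and follows essentially the same approach as the paper's proof: a grade-$1$ decision must come from line~\ref{line:decide_3_its}, hence $(1,1)$ was output by $\mathcal{GC}$, consistency of $\mathcal{GC}$ forces $b_j=1$ for all correct $p_j$, so every correct decision happens at line~\ref{line:decide_3_its}, and Lemma~\ref{lemma:crucial_long_inputs_2} then yields a common value. Your write-up is slightly more detailed than the paper's, but the structure and the key ingredients (consistency of the binary $\mathcal{GC}$ plus Lemma~\ref{lemma:crucial_long_inputs_2}) are identical.
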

\begin{proof}
Let a correct process $p_i$ decide a value $v$ with grade $1$; note that $p_i$ decides at line~\ref{line:decide_2_reduce_cool_gc}.
This implies that $p_i$ has previously decided $(1, 1)$ from $\mathcal{GC}$, which further implies that all correct processes decide $(1, \cdot)$ from $\mathcal{GC}$ (due to the consistency property of $\mathcal{GC}$).
Moreover, the justification property of $\mathcal{GC}$ ensures that some correct process $p_j$ has previously proposed $1$ to $\mathcal{GC}$.
Hence, process $p_j$ has output a pair $(v', 1)$ from $\mathcal{R}_{\mathit{cool}}$, for some value $v'$.
The consistency property of $\mathcal{R}_{\mathit{cool}}$ ensures that $v' = v$ is output by every correct process from $\mathcal{R}_{\mathit{cool}}$.
As every correct process necessarily decides at line~\ref{line:decide_2_reduce_cool_gc} (since no correct process decides $(0, \cdot)$ from $\mathcal{GC}$), no correct process decides a value different from $v$ from \Cref{algorithm:gccool_protocol_2}. 
Therefore, the consistency property is ensured.
\end{proof}

Next, we prove the integrity property.

\begin{theorem} [Integrity]
\Cref{algorithm:gccool_protocol_2} satisfies integrity.
\end{theorem}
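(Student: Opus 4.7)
The plan is to inspect all decision sites in \Cref{algorithm:gccool_protocol} and argue that the local flag $\mathit{decided}_i$ enforces mutual exclusion between them. There are exactly two places where a correct process $p_i$ may issue a decision: \Cref{line:decide_1_its}, which is guarded by the condition $b_i = 0$ tested at \Cref{line:sad}, and \Cref{line:decide_3_its}, which is guarded by the condition $\mathit{decided}_i = \mathit{false}$ tested at \Cref{line:check_decided}. Each of these sites is executed at most once per local execution because each appears at the top level of the sequential pseudocode (not inside any loop).

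First I would handle the case where $b_i = 0$. In this case $p_i$ decides at \Cref{line:decide_1_its} and immediately assigns $\mathit{decided}_i \gets \mathit{true}$. When control reaches the reconstruction-phase guard at \Cref{line:check_decided}, the test $\mathit{decided}_i = \mathit{false}$ fails, so $p_i$ does not execute \Cref{line:decide_3_its}. Hence, at most one decision occurs. Next I would handle the complementary case $b_i \neq 0$. Here the branch at \Cref{line:sad} is skipped entirely, so no decision is issued in the first block and $\mathit{decided}_i$ retains its initial value $\mathit{false}$. Consequently, $p_i$ may decide once at \Cref{line:decide_3_its}, but since that line is executed at most once and \Cref{line:decide_1_its} is not executed at all, again at most one decision occurs.

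There is no real obstacle to this argument: it is a straightforward code-walk through the two guarded decision sites and a check that the flag $\mathit{decided}_i$ is consulted exactly when needed. The only subtlety worth a brief sentence in the write-up is to note that termination of \reducecool and of $\mathcal{GC}$ (already established earlier) guarantees that the variables $b_i$ and $\mathit{decided}_i$ are well-defined by the time the guards are evaluated, so the case analysis is exhaustive. Combining the two cases yields that every correct process decides at most once, which is precisely integrity.
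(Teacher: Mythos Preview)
Your argument is correct and matches the paper's own proof, which is simply the one-line observation that integrity follows from the $\mathit{decided}_i$ flag maintained by each correct process. Your write-up is a more explicit case analysis of the same mechanism; nothing further is needed.
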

\begin{proof}
The theorem follows immediately from \Cref{algorithm:gccool_protocol_2}.
\end{proof}

Lastly, we prove the termination property.

\begin{theorem} [Termination]
\Cref{algorithm:gccool_protocol_2} satisfies termination.
\end{theorem}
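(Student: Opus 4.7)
The plan is simply to trace the control flow of \Cref{algorithm:gccool_protocol} and observe that every step either terminates in a bounded number of rounds or is a purely local operation. First, I would apply \Cref{lemma:crucial_long_inputs} to conclude that every correct process $p_i$ eventually obtains its tuple $(\omega^{(i)}, v_i, s_i, S_1)$ from $\mathcal{R}_{\mathit{cool}}$ at line~\ref{line:invoke_reduce}, since \reducecool is guaranteed to terminate in $O(1)$ rounds. Next, I would invoke the termination property of the binary graded consensus instance $\mathcal{GC}$ from~\cite{Lenzen2022} to conclude that every correct process eventually obtains a pair $(b_i, g_i)$ at line~\ref{line:invoke_graded_consensus} within another $O(1)$ rounds.

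Once these two sub-protocols have completed, the remainder of the algorithm is either a local check or a constant-round exchange of RS symbols. If $b_i = 0$, process $p_i$ immediately decides $(\omega_i, 0)$ at line~\ref{line:decide_1_its} and sets $\mathit{decided}_i \gets \mathit{true}$. Otherwise, $p_i$ enters the reconstruction phase, which consists of at most one round of unicast transmissions (line~\ref{line:send_help_symbols}), a local majority computation (line~\ref{line:majority}), and one round of broadcast (line~\ref{line:share_symbol}); since the communication network is synchronous and reliable, all symbols sent by correct processes are delivered by the end of the corresponding round. At line~\ref{line:check_decided}, if $\mathit{decided}_i = \mathit{false}$ (i.e., $p_i$ did not already decide at line~\ref{line:decide_1_its}), $p_i$ deterministically computes $\mathsf{RSDec}(k, t, \text{received symbols})$ and decides at line~\ref{line:decide_3_its}.

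Thus, every correct process decides either at line~\ref{line:decide_1_its} or at line~\ref{line:decide_3_its} within $O(1)$ rounds, which establishes termination. There is no real obstacle here: the only thing to be careful about is ensuring that the reconstruction phase does not block waiting for messages that may never arrive, which is handled explicitly by the convention (stated in the comment above line~\ref{line:decide_3_its}) that a missing RS symbol from any process is replaced by an arbitrary symbol, so the decoding is always well-defined at the end of the round.
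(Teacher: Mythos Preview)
Your proof is correct and follows essentially the same approach as the paper's own proof: invoke termination of \reducecool via \Cref{lemma:crucial_long_inputs}, invoke termination of the binary graded consensus primitive, and observe that the reconstruction phase takes a constant number of rounds. Your version is considerably more detailed than the paper's (which dispatches the whole argument in three terse sentences), and you additionally address the non-blocking concern about missing RS symbols; one small imprecision is that the reconstruction phase is executed by \emph{all} processes regardless of $b_i$ (not only when $b_i \neq 0$), but this does not affect the termination argument.
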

\begin{proof}
The termination property follows from the termination properties of $\mathcal{R}_{\mathit{cool}}$ and $\mathcal{GC}$.
\end{proof}

\smallskip
\noindent \textbf{Proof of complexity.}
We first show that correct processes exchange $O(nL + n^2 \log n)$ bits.

\begin{theorem} [Exchanged bits]
Correct processes send $O(nL + n^2 \log n)$ bits in \Cref{algorithm:gccool_protocol_2}.
\end{theorem}
\begin{proof}
Each RS symbol is of size $O( \frac{L}{n} + \log n )$ bits.
Then, every correct process sends $4n$ RS symbols in \reducecool. Apart from that, each correct process sends $O(n)$ binary success indicators in \reducecool. 
Furthermore, each correct process sends $O(n)$ bits in the binary graded consensus (see~\cite{Lenzen2022}).
Hence, each correct process sends $O( L + n \log n )$ bits, which leads to $O(nL + n^2 \log n)$ bits in total.
\end{proof}

Lastly, we prove that \Cref{algorithm:gccool_protocol_2} terminates in $7 \in O(1)$ rounds.

\begin{theorem} [Rounds]
\Cref{algorithm:gccool_protocol_2} terminates in $7$ rounds.
\end{theorem}
\begin{proof}
The algorithm terminates in $7$ rounds as $\mathcal{R}_{\mathit{cool}}$ terminates in $5$ rounds and $\mathcal{GC}$ terminates in $2$ rounds.
\end{proof}
\section{\nameopt: Missing Implementation \& Proof} \label{section:nameopt_proof}

We first define the cryptographic primitives utilized in \nameopt and its building blocks (\Cref{section:cryptographic_preliminaries}).
Then, we present our implementation of the data dissemination primitive (see \Cref{mod:hashfin}) and prove its correctness and complexity (\Cref{subsection:hashfin_proof}).
Finally, we provide \nameopt's proof (\Cref{subsection:nameopt_proof}).

\subsection{Cryptographic Primitives}\label{section:cryptographic_preliminaries}

This section provides the formal definition of cryptographic accumulators \cite{DHS15} and digests.

\smallskip
\noindent \textbf{Cryptographic accumulators.}
We follow the presentation of~\cite{Nayak2020}.
Let $\kappa$ denote a security parameter, and let $D = \{d_1, d_2, \ldots, d_n\}$ be a set of $n$ values.
An accumulator scheme consists of the following four primitives:
\begin{compactitem}
    \item $ak \leftarrow \mathsf{Gen}(1^\kappa, n)$: An algorithm accepting a security parameter $\kappa$ in unary notation $1^\kappa$ and a limit $n$ on the set size, and returns an accumulator key $ak$.
    \item $z \leftarrow \mathsf{Eval}(ak, D)$: An algorithm that takes the accumulator key $ak$ and a set $D$ of values, and returns an accumulation value $z$ for $D$.
    \item $w_i \leftarrow \mathsf{CreateWit}(ak, z, d_i)$: Given an accumulator key $ak$, an accumulation value $z$, and a value $d_i \in D$, this algorithm returns a witness $w_i$.
    \item $\mathit{true}/\mathit{false} \leftarrow \mathsf{Verify}(ak, z, w_i, d_i)$: This algorithm takes an accumulator key $ak$, an accumulation value $z$, a witness $w_i$, and a value $d_i$. It returns $\mathit{true}$ if $w_i$ correctly corresponds to $d_i \in D$, and $\mathit{false}$ otherwise, except with negligible probability $negl(\kappa)$.
\end{compactitem}

In this paper, we use simple Merkle trees, where the accumulator key $ak$ is the hash function, values in $D$ are the leaves of the Merkle tree, the accumulation value $z$ is the Merkle tree root, and the witness $w$ is the Merkle tree proof. That is why the description leaves out auxiliary information $\mathit{aux}$ typically included in cryptographic accumulators. We remark that $\mathsf{Eval}$, $\mathsf{CreateWit}$, and $\mathsf{Verify}$ are deterministic for a Merkle-tree-based accumulators. Furthermore, the accumulation value and witness each have $O(\kappa)$ and $O(\kappa \log n)$ bits, respectively.

A secure accumulator scheme is required to be correct. Correctness says that for all honestly generated accumulator keys, all honestly computed accumulation values, and witnesses, the $\mathsf{Verify}$ algorithm always returns $\mathit{true}$. Let us underline that the correctness of the Merkle-tree-based accumulator is trivial.
Moreover, the accumulator scheme is assumed to be collision-free, i.e., for any accumulator key $ak \gets \mathsf{Gen}(1^{\kappa}, n)$, it is computationally impossible to establish $(\{d_1, ..., d_n\}, d', w')$ such that (1) $d' \notin \{d_1, ..., d_n\}$, (2) $z \leftarrow \mathsf{Eval}(ak, \{d_1, ..., d_n\})$, and (3) $\mathsf{Verify}(ak, z, w', d') = \mathit{true}$. In our case, collision-resistance reduces to the collision-resistance of the underlying chosen hash function.

\smallskip
\noindent \textbf{Digests.}
Concretely, $\mathsf{digest}(v) = \mathsf{Eval}(ak, \{(1,P_v(1)), ..., (n,P_v(n))\})$, where $\mathsf{Eval}$ is the accumulator evaluation function (see the paragraph above), $ak$ is the accumulator key (see above), and $[P_v(1), ..., P_v(n)] = \mathsf{RSEnc}(v,n,t+1)$ is the Reed-Solomon encoding of value $v$ (see \Cref{subsection:preliminaries}). The collision-resistance of the $\mathsf{digest}(\cdot)$ function reduces to the collision-resistance of the underlying cryptographic accumulator scheme.
We underline that this construction is standard (see, e.g., \cite{Nayak2020}).

\subsection{Data Dissemination: Implementation \& Proof} \label{subsection:hashfin_proof}

Our implementation of the data dissemination primitive is given in \Cref{algorithm:hash_based_add}.

\begin{algorithm} [ht]
\caption{Data dissemination: Pseudocode (for process $p_i$)}
\label{algorithm:hash_based_add}
\begin{algorithmic} [1]
\footnotesize

\State \textbf{Rules:}
\State \hskip2em Any \textsc{disperse} or \textsc{reconstruct} message with an invalid witness is ignored.
\State \hskip2em The rules at lines~\ref{line:received_disperse_hash_add} and~\ref{line:received_reconstruct_hash_add} are activated only if $p_i$ has previously invoked an $\mathsf{input}(\cdot, \cdot)$ request.

\medskip
\State \textbf{Local variables:}
\State \hskip2em $\mathsf{Digest}$ $d_i \gets \bot$
\State \hskip2em $\mathsf{Boolean}$ $\mathit{reconstruct\_broadcast}_i \gets \mathit{false}$

\medskip
\State \textbf{upon} $\mathsf{input}(v \in \mathsf{Value} \cup \{\bot\}, d \in \mathsf{Digest})$: \label{line:input_hashfin}
\State \hskip2em $d_i \gets d$
\State \hskip2em \textbf{if} $v \neq \bot$: \BlueComment{if true, then $d_i = \mathsf{digest}(v)$ (see Module~\ref{mod:hashfin})} \label{line:check_value_hashfin}
\State \hskip4em Let $[m_1, m_2, ..., m_n] \gets \mathsf{RSEnc}(v, n, t + 1)$ \label{line:encode_value_hashfin}
\State \hskip4em \textbf{for each} process $p_j$:
\State \hskip6em Let $\mathcal{W}_j \gets \mathsf{CreateWit}\big( d_i, (j, m_j), [(1, m_1), (2, m_2), ..., (n, m_n)] \big)$
\State \hskip6em \textbf{send} $\langle \textsc{disperse}, d_i, (j, m_j), \mathcal{W}_j \rangle$ to process $p_j$ \label{line:send_disperse_hashfin}


\medskip
\State \textbf{upon} receiving a $\langle \textsc{disperse}, d_i, (i, m_i), \mathcal{W}_i \rangle$ message and $\mathit{reconstruct\_broadcast}_i = \mathit{false}$: \label{line:received_disperse_hash_add}
\State \hskip2em $\mathit{reconstruct\_broadcast}_i \gets \mathit{true}$
\State \hskip2em \textbf{broadcast} $\langle \textsc{reconstruct}, d_i, (i, m_i), \mathcal{W}_i \rangle$ \label{line:broadcast_reconstruct_hash_add}

\medskip
\State \textbf{upon} receiving a $\langle \textsc{reconstruct}, d_i, \cdot, \cdot \rangle$ message from $t + 1$ processes and $\mathit{reconstruct\_broadcast}_i = \mathit{true}$: \label{line:received_reconstruct_hash_add}
\State \hskip2em \textbf{trigger} $\mathsf{output}\big( \mathsf{RSDec}(t + 1, 0, \text{the received $t + 1$ RS symbols}) \big)$ \label{line:output_hashfin}

\end{algorithmic}
\end{algorithm}

\noindent \textbf{Proof of correctness.}
We say that $m_i$ is a \emph{correctly encoded $i$-th RS symbol} if and only if $m_i$ is the $i$-th RS symbol in $\mathsf{RSEnc}(v^{\star}, n, t + 1)$.
First, we prove that if any correct process receives a pair $(i, m_i)$ accompanied by a valid witness via a \textsc{disperse} or \textsc{reconstruct} message, then $m_i$ is a correctly encoded $i$-th RS symbol.

\begin{lemma} \label{lemma:rs_symbols_correct}
If a correct process receives a pair $(i, m_i)$ accompanied by a valid witness via a \textsc{disperse} or \textsc{reconstruct} message, then $m_i$ is a correctly encoded $i$-th RS symbol.
\end{lemma}
\begin{proof}
Let $p_j$ be any correct process that receives the aforementioned pair $(i, m_i)$.
The pair is accompanied by a valid witness that proves that $(i, m_i)$ belongs to a collection whose digest is $d^{\star} = \mathsf{digest}(v^{\star})$.
Therefore, $m_i$ is indeed a correctly encoded $i$-th RS symbol.
\end{proof}

We are ready to prove the safety property of \Cref{algorithm:hash_based_add}.

\begin{theorem} [Safety]
\Cref{algorithm:hash_based_add} satisfies safety.
\end{theorem}
\begin{proof}
Consider any correct process $p_i$ that outputs some value $v$ (line~\ref{line:output_hashfin}).
Before outputting $v$, $p_i$ has received (at least) $t + 1$ \textsc{reconstruct} messages from as many different processes (line~\ref{line:received_reconstruct_hash_add}).
Due to Lemma~\ref{lemma:rs_symbols_correct}, each such RS symbol is correctly encoded, which implies that $v = v^{\star}$.
\end{proof}

Next, we prove the liveness property of \Cref{algorithm:hash_based_add}.

\begin{theorem} [Liveness]
\Cref{algorithm:hash_based_add} satisfies liveness.
\end{theorem}
\begin{proof}
Recall that we assume that (1) all correct processes input a pair by some time $\tau$, and (2) at least one correct process inputs $v^{\star}$.
Moreover, we assume that no correct process stops unless it has previously output a value.
Hence, every correct process $p_i$ receives a \textsc{disperse} message with the $i$-th RS symbol by time $\tau + \delta$ (line~\ref{line:received_disperse_hash_add}), which implies that $p_i$ broadcasts a \textsc{reconstruct} message with a correctly encoded $i$-th RS symbol (due to Lemma~\ref{lemma:rs_symbols_correct}) by time $\tau + \delta$ (line~\ref{line:broadcast_reconstruct_hash_add}).
Thus, every correct process receives $n - t \geq t + 1$ \textsc{reconstruct} messages by time $\tau + 2\delta$ (line~\ref{line:received_reconstruct_hash_add}), which implies that every correct process outputs a value from \Cref{algorithm:hash_based_add} by time $\tau + 2\delta$ (line~\ref{line:output_hashfin}).
\end{proof}

Lastly, we prove that \Cref{algorithm:hash_based_add} satisfies integrity.

\begin{theorem} [Integrity]
\Cref{algorithm:hash_based_add} satisfies integrity.
\end{theorem}
\begin{proof}
The theorem follows from the fact that no correct process activates the rule at line~\ref{line:received_reconstruct_hash_add} unless it has previously invoked an $\mathsf{input}(\cdot, \cdot)$ request.
\end{proof}

\smallskip
\noindent \textbf{Proof of complexity.}
We now prove that all correct processes send $O(nL + n^2 \kappa \log n)$ bits.

\begin{theorem} [Exchanged bits]
Correct processes collectively send $O(nL + n^2\kappa \log n )$ bits.
\end{theorem}
\begin{proof}
Each correct process broadcasts at most one \textsc{disperse} and at most one \textsc{reconstruct} message.
Each \textsc{disperse} and \textsc{reconstruct} message contains $O(\kappa + \log n + \frac{L}{n} + \log n + \kappa \log n)$ bits.
Thus, each correct process sends $n \cdot O(\frac{L}{n} + \kappa \log n) = O(L + n\kappa \log n)$ bits, which implies that correct processes collectively send $O(nL + n^2 \kappa \log n)$ bits.
\end{proof}

\subsection{Proof of Correctness \& Complexity} \label{subsection:nameopt_proof}

Finally, we prove the correctness and complexity of \nameopt (\Cref{algorithm:optimal_trivial}).

\smallskip
\noindent \textbf{Proof of correctness.}
We say that a correct process $p_i$ \emph{commits} a digest $D$ in view $V \in [1, t + 1]$ if and only if $p_i$ invokes $\mathcal{DD}.\mathsf{input}(\cdot, D)$ in view $V$ (line~\ref{line:commit_hash_value}).
Recall that if a correct process $p_i$ commits a digest in some view $V < t + 1$, $p_i$ does not stop before it has completed view $V + 1$ (line~\ref{line:wait_for_completion}). 
We start by proving that if the first correct process commits a digest $D$ in some view $V$, then all correct processes commit $D$ (and no other digest) in view $V$ or $V + 1$ (given $V < t + 1$).

\begin{lemma} \label{lemma:first_commit}
Let $p_i$ be the first correct process to commit a digest.
Let $p_i$ commit a digest $D$ in view $V$.
Then, the following holds:
\begin{compactitem}
    \item If any correct process commits a digest $D'$, then $D' = D$.

    \item If $V < t + 1$, then every correct process commits a digest in view $V$ or $V + 1$.
\end{compactitem}
\end{lemma}
\begin{proof}
Recall that \nameopt ensures that each correct process $p_i$ commits at most one digest (due to the $\mathit{committed\_view}_i$ variable).
As $p_i$ is the first correct process to commit a digest, no correct process commits in any view smaller than $V$.
Moreover, if any correct process commits any digest in view $V$, that digest is $D$ due to the consistency property of $\mathcal{GC}_2[V]$.

Let $V < t + 1$.
Consider now view $V + 1$.
Recall that $p_i$ (and any other correct process that has committed in view $V$) does not stop before completing view $V + 1$ (line~\ref{line:wait_for_completion}).
As $p_i$ has committed $D$ in view $V$, the consistency property of $\mathcal{GC}_2[V]$ ensures that every correct process $p_k$ updates its $\mathit{locked}_k$ variable to $D$ (Step 4).
Therefore, all correct processes propose $D$ to $\mathcal{GC}_1[V + 1]$ in view $V + 1$ (Step 1).
The strong unanimity property of $\mathcal{GC}_1[V + 1]$ guarantees that all correct processes decide $(D, 1)$ from $\mathcal{GC}_1[V + 1]$.
Thus, every correct process broadcasts a $\langle \textsc{support}, D \rangle$ message (Step 2).
This implies that every correct process $p_k$ updates its $\mathit{vote}_k$ variable to $D$ (Step 3), and proposes $D$ to $\mathcal{GC}_2[V + 1]$ (Step 4).
Therefore, every correct process $p_k$ decides $(D, 1)$ from $\mathcal{GC}_2[V + 1]$ (due to the strong unanimity property of $\mathcal{GC}_2[V + 1]$).
Finally, every correct process $p_k$ commits $D$ in view $V + 1$ (Step 5), which concludes the proof.
\end{proof}

The following lemma proves that if a correct process decides $(D \neq \bot, \cdot)$ from $\mathcal{GC}_1[V]$ in some view $V \in [2, t + 1]$, then a correct process has sent a $\langle \textsc{support}, D \rangle$ message in a view $V' < V$.

\begin{lemma} \label{lemma:decide_gc1_implies_happy}
If any correct process decides $(D \neq \bot, \cdot)$ from $\mathcal{GC}_1[V]$ in any view $V \in [2, t + 1]$, then a correct process has sent a $\langle \textsc{support}, D \rangle$ message in a view $V' < V$. 
\end{lemma}
\begin{proof}
Due to the justification property of $\mathcal{GC}_1[V]$, a correct process $p_i$ has proposed $D$ to $\mathcal{GC}_1[V]$, which implies that $\mathit{locked}_i = D$ at the beginning of view $V$.
Let $V' < V$ denote the view at which $p_i$ updates $\mathit{locked}_i$ to $D$ (Step 5).
The justification property of $\mathcal{GC}_2[V']$ ensures that some correct process $p_k$ has proposed $D$ to $\mathcal{GC}_2[V']$ in view $V'$. 
Thus, the value of the $\mathit{vote}_k$ variable at process $p_k$ in view $V'$ was $D$, which implies that $p_k$ has received a $\langle \textsc{support}, D \rangle$ message from $2t + 1$ processes in view $V'$.
As there are at most $t$ faulty processes, (at least) $t + 1$ of those messages are sent by correct processes, which concludes the proof.
\end{proof}

Next, we prove that if a correct process sends a $\textsc{support}$ message for a digest, the value that corresponds to that digest is (1) known to at least one correct process, and (2) valid. 

\begin{lemma} \label{lemma:happy_implies_known}
If a correct process sends a $\langle \textsc{support}, D \rangle$ in some view $V$, then there exists a correct process $p_i$ that has set its $\mathit{known\_values}_i[D]$ variable to a value $v$ in some view $V' \leq V$, where $D = \mathsf{digest}(v)$ and $\mathsf{valid}(v) = \mathit{true}$.
\end{lemma}
\begin{proof}
Let $p^*$ be the first correct process to ever send a $\langle \textsc{support}, D \rangle$ message, for the given digest $D$.
Let the aforementioned message by $p^*$ be sent in some view $V^*$.
To prove the lemma, it suffices to show that some correct process $p_i$ has set its $\mathit{known\_values}_i[D]$ variable to a value $v$ in some view $V' \leq V^*$, where $D = \mathsf{digest}(v)$ and $\mathsf{valid}(v) = \mathit{true}$.
We consider three scenarios:
\begin{compactitem}
    \item Let $p^*$ send the message at line~\ref{line:broadcast_happy_1}.
    In this case, $p^*$ has decided $(D \neq \bot, 1)$ from $\mathcal{GC}_1[V^*]$ in view $V^*$.
    (Note that this implies that $V^* \in [2, t + 1]$.)
    Due to Lemma~\ref{lemma:decide_gc1_implies_happy}, there exists a correct process that has sent a $\textsc{support}$ message for $D$ in a view smaller than $V^*$.
    This contradicts the statement that the first $\textsc{support}$ message from a correct process for $D$ is sent in view $V^*$.
    Therefore, this scenario is impossible.

    \item Let $p^*$ send the message at line~\ref{line:broadcast_happy_2}.
    In this case, $p^*$ has received $D$ from $\mathsf{leader}(V^*)$ and $D$ is accepted by $p^*$ in a view smaller than $V^*$.
    Again, this implies that a correct process has sent a \textsc{support} message for $D$ in a view smaller than $V^*$, which is impossible.

    \item Let $p^*$ send the message at line~\ref{line:broadcast_happy_3}.
    In this case, process $p^*$ has received a value $v$ from $\mathsf{leader}(V^*)$ such that $\mathsf{digest}(v) = D$ and $\mathsf{valid}(v) = \mathit{true}$.
    Therefore, the statement of the lemma is satisfied in this case.
\end{compactitem}
As the statement of the lemma is satisfied in the only possible scenario, the lemma holds.
\end{proof}

Let us denote by $\mathcal{V}$ the first view in which a correct process commits a digest.
If such a view does not exist, $\mathcal{V} = \bot$.
Recall that Lemma~\ref{lemma:first_commit} proves that if $\mathcal{V} \neq \bot$, then all correct processes commit the same digest and they do so by the end of view $\mathcal{V} + 1$.
Let $\mathcal{D}$ denote the common digest committed by correct processes; if $\mathcal{V} = \bot$, then $\mathcal{D} = \bot$.
Lastly, given that $\mathcal{V} \neq \bot$ (and $\mathcal{D} \neq \bot$), let $v^{\star}$ denote the value such that $\mathcal{D} = \mathsf{digest}(v^{\star})$; 
note that $v^{\star}$ is unique due to the collision-resistance of the accumulator primitive. 
The following lemma proves that if $\mathcal{V} \neq \bot$ (i.e., a correct process commits a digest), then some correct process $p_i$ includes a value $v^{\star}$ in its $\mathit{known\_values}_i$ variable in a view smaller than or equal to $\mathcal{V}$.

\begin{lemma} \label{lemma:committed_implies_known}
Let $\mathcal{V} \neq \bot$.
Then, there exists a correct process $p_i$ that sets its $\mathit{known\_values}_i[\mathcal{D}]$ variable to value $v^{\star}$ in a view $V \leq \mathcal{V}$.
\end{lemma}
\begin{proof}
As $\mathcal{D}$ is committed in view $\mathcal{V}$ by some correct process $p_i$, process $p_i$ has decided $(\mathcal{D}, 1)$ from $\mathcal{GC}_2[\mathcal{V}]$ in view $\mathcal{V}$.
Therefore, the justification property of $\mathcal{GC}_2[\mathcal{V}]$ ensures that some correct process $p_j$ has proposed $\mathcal{D}$ to $\mathcal{GC}_2[\mathcal{V}]$.
This implies that $\mathit{vote}_j = \mathcal{D}$, which further proves that $p_j$ has received a $\textsc{support}$ message for $\mathcal{D}$ in view $\mathcal{V}$ from a correct process.
Therefore, the lemma follows directly from Lemma~\ref{lemma:happy_implies_known}.
\end{proof}

Next, we prove that value $v^{\star}$ is valid.

\begin{lemma} \label{lemma:valid_v_star}
Let $\mathcal{V} \neq \bot$.
Then, $\mathsf{valid}(v^{\star}) = \mathit{true}$.
\end{lemma}
\begin{proof}
As $\mathcal{V} \neq \bot$, Lemma~\ref{lemma:committed_implies_known} proves that a correct process $p_i$ sets its $\mathit{known\_values}_i[\mathcal{D}]$ variable to value $v^{\star}$ in some view $V \leq \mathcal{V}$.
Since process $p_i$ does so only upon checking the validity of $v^{\star}$ (line~\ref{line:check_validity_received_from_leader}), $v^{\star}$ is indeed valid.
\end{proof}

Next, we prove that all correct processes commit by the (end of the) first view with a correct leader.

\begin{lemma} \label{lemma:first_correct_leader}
Let $V \in [1, t + 1]$ denote the first view such that $\mathsf{leader}(V)$ is correct.
Then, all correct processes commit a digest by the end of view $V$.
\end{lemma}
\begin{proof}
If there exists any correct process that commits in any view smaller than $V$, the lemma follows from Lemma~\ref{lemma:first_commit}.
Hence, let no correct process commit in any view smaller than $V$.
In this case, we distinguish two scenarios:
\begin{compactitem}
    \item The leader decides $(\bot, \cdot)$ from $\mathcal{GC}_1[V]$.
    Hence, the leader broadcasts its value $v$ (Step 2) in view $V$, which implies that all correct processes broadcast a \textsc{support} message for $D$.
    Thus, every correct process $p_k$ updates its $\mathit{vote}_k$ variable to $D$ (Step 3) and proposes $D$ to $\mathcal{GC}_2[D]$ (Step 4).
    Lastly, all correct processes decide $(D, 1)$ from $\mathcal{GC}_2[V]$ (due to its strong unanimity property), which concludes the proof in this case.
    
    \item The leader decides $(D, \cdot)$ from $\mathcal{GC}_1[V]$, for some digest $D \neq \bot$.
        (This implies $V \in [2, t + 1]$.) We further distinguish two cases:
    \begin{compactitem}
        \item There exists a correct process that decides $(D', 1)$ from $\mathcal{GC}_1[V]$. By the consistency property of $\mathcal{GC}_1[V]$, $D = D'$. We now prove that $D$ must be accepted by all correct processes.
    Due to the justification property of $\mathcal{GC}_1[V]$, there exists a correct process $p_j$ that has proposed $D$ to $\mathcal{GC}_1[V]$, which implies that $\mathit{locked}_j = D$ at the beginning of view $V$.
    Let $V' < V$ denote the view in which process $p_j$ sets its $\mathit{locked}_j$ variable to $D$.
    Hence, process $p_j$ has decided $(D, \cdot)$ from $\mathcal{GC}_2[V']$ in view $V'$.
    Therefore, the justification property of $\mathcal{GC}_2[V']$ ensures that some correct process $p_k$ has previously proposed $D$ to $\mathcal{GC}_2[V']$, which implies that $\mathit{vote}_k = D$ in view $V'$.
    Hence, $p_k$ has received a $\langle \textsc{support}, D \rangle$ message from $2t + 1$ processes in view $V'$, out of which at least $t + 1$ processes are correct.
    Thus, every correct process $p_z$ receives (at least) $t + 1$ $\langle \textsc{support}, D \rangle$ messages in view $V'$, which implies that $D \in \mathit{accepted}_z[V']$.

    Due to the consistency property of $\mathcal{GC}_1[V]$, each correct process including the leader decides $(D, \cdot)$ from $\mathcal{GC}_1[V]$.
    Therefore, the leader broadcasts $D$ (Step 2) in view $V$. Moreover, as each correct process decides $(D, \cdot)$ from $\mathcal{GC}_1[V]$ and $D$ is accepted by all correct processes, each correct process broadcasts a \textsc{support} message for $D$ in view $V$ (via line~\ref{line:broadcast_happy_1} or line~\ref{line:broadcast_happy_2}). This further means that all correct processes propose $D$ to $\mathcal{GC}_2[V]$ (Step 4). Finally, all correct processes decide $(D, 1)$ from $\mathcal{GC}_2[V]$ (due to its strong unanimity property), which concludes the proof in this case.

        \item No correct process decides $(\cdot, 1)$ from $\mathcal{GC}_1[V]$. Thus, the leader decides $(D, 0)$.
    The proof is the same as the previous case, but as no correct process decides $(\cdot, 1)$, then no correct process enters line~\ref{line:broadcast_happy_1}. As proven previously, $D$ must be accepted by each correct process. Therefore, each correct process broadcasts a $\textsc{support}$ message for $D$ at line~\ref{line:broadcast_happy_2}. The rest of the proofs follow from the previous case.
    
    \end{compactitem}
\end{compactitem}
The lemma holds, as its statement is satisfied in all possible scenarios.
\end{proof}

We are ready to prove that $\mathcal{V} \neq \bot$. 
Concretely, we prove that $\mathcal{V} \in [1, f + 1]$.

\begin{lemma} \label{lemma:commit_eventually_occurs}
$\mathcal{V} \in [1, f + 1]$.
\end{lemma}
\begin{proof}
Recall that $f \leq t$.
Hence, there exists a view with a correct leader.
Let $V$ be the first such view.
Lemma~\ref{lemma:first_correct_leader} proves that $\mathcal{V} \leq V$.
As $V \in [1, f + 1]$, the lemma holds.
\end{proof}

We are ready to prove that \nameopt ensures that all required assumptions of $\mathcal{F}$ are ensured.

\begin{lemma} \label{lemma:required_assumptions_satisfied}
\nameopt (\Cref{algorithm:optimal_trivial}) ensures that all assumptions required by $\mathcal{DD}$ are satisfied.
\end{lemma}
\begin{proof}
All correct processes input a pair to $\mathcal{DD}$ as all correct processes commit a digest due to lemmas~\ref{lemma:first_commit},~\ref{lemma:first_correct_leader} and~\ref{lemma:commit_eventually_occurs}.
No correct process stops unless it has previously output a value from $\mathcal{DD}$ as a correct process $p_i$ stops (line~\ref{line:halt_optimal_2}) only after it has output a value from $\mathcal{DD}$ (line~\ref{line:upon_output_optimal_2}).
Recall that all correct processes commit the same digest $\mathcal{D}$ (by Lemma~\ref{lemma:first_commit}).
Moreover, the following holds for value $v^{\star}$:
\begin{compactitem}
    \item At least one correct process inputs $(v^{\star} \neq \bot, \mathcal{D})$ to $\mathcal{DD}$.
    As no correct process inputs a pair to $\mathcal{DD}$ before view $\mathcal{V}$ (due to the definition of view $\mathcal{V}$) and there exists at least a single correct process that sets its respective $\mathit{known\_values}_{*}[\mathcal{D}]$ variable to value $v^{\star}$ in view $V \leq \mathcal{V}$ (by Lemma~\ref{lemma:committed_implies_known}), some correct process is guaranteed to input $v^{\star}$ to $\mathcal{DD}$ (line~\ref{line:commit_hash_value}).

    \item If any correct process inputs $(v \neq \bot, \cdot)$, then $v = v^{\star}$ because (1) all correct processes commit the same digest $\mathcal{D}$ such that $\mathcal{D} = \mathsf{digest}(v^{\star})$, 
    (2) all correct processes input a value $v$ with $\mathcal{D} = \mathsf{digest}(v)$, 
    and (3) the $\mathsf{digest}(\cdot)$ function is collision-resistant. 
    The argument above ensures that any value $v \neq \bot$ input to $\mathcal{DD}$ must equal $v^{\star}$.

    \item If any correct process inputs $(\cdot, D)$, then $D = \mathsf{digest}(v^{\star})$ because (1) $D= \mathcal{D}$, and (2) $\mathcal{D} = \mathsf{digest}(v^{\star})$. 
\end{compactitem}
As all assumptions required by $\mathcal{DD}$ are satisfied, the lemma holds.
\end{proof}

The following theorem proves the correctness of \nameopt.

\begin{theorem} [Correctness]
\nameopt (\Cref{algorithm:optimal_trivial}) is correct, i.e., it satisfies agreement, integrity, external validity and termination.
\end{theorem}
\begin{proof}
Integrity follows trivially from the algorithm. Lemma~\ref{lemma:required_assumptions_satisfied} proves that $\mathcal{DD}$ behaves according to its specification.
Therefore, the safety property of $\mathcal{DD}$ ensures that correct processes only output $v^{\star}$ from $\mathcal{DD}$, thus proving the agreement property of \nameopt.
Similarly, the liveness property of $\mathcal{DD}$ ensures that all correct processes eventually output from $\mathcal{DD}$, thus showing the termination property of \nameopt.
Finally, the external validity property of \nameopt follows from Lemma~\ref{lemma:valid_v_star}.
\end{proof}

\smallskip
\noindent \textbf{Proof of complexity.}
First, we show that any non-leader correct process sends $O(n\kappa)$ bits in a view.

\begin{lemma} \label{lemma:non_leader_bits}
Consider any correct process $p_i$ and any view $V$ such that $\mathsf{leader}(V) \neq p_i$.
Then, $p_i$ sends $O(n\kappa)$ bits in $V$.
\end{lemma}
\begin{proof}
Process $p_i$ sends $O(n\kappa)$ bits while executing $\mathcal{GC}_1[V]$ and $\mathcal{GC}_2[V]$.
Moreover, process $p_i$ sends $O(n\kappa)$ bits via $\textsc{support}$ messages.
Hence, $p_i$ sends $O(n\kappa)$ bits in $V$.
\end{proof}

Next, we prove that a correct leader sends $O(nL + n\kappa)$ bits in a view.

\begin{lemma} \label{lemma:leader_bits}
Consider any correct process $p_i$ and any view $V$ such that $\mathsf{leader}(V) = p_i$.
Then, $p_i$ sends $O(nL + n\kappa)$ bits in $V$.
\end{lemma}
\begin{proof}
Process $p_i$ sends $O(n\kappa)$ bits while executing $\mathcal{GC}_1[V]$ and $\mathcal{GC}_2[V]$.
Moreover, process $p_i$ sends $O(n\kappa)$ bits via $\textsc{support}$ messages.
Finally, process $p_i$ sends $O(nL + n\kappa)$ bits in Step 2 of $V$.
Thus, $p_i$ sends $O(n\kappa) + O(nL + n\kappa) = O(nL + n\kappa)$ bits in $V$.
\end{proof}

Next, we prove that no correct process enters any view greater than $\min(t + 1, \mathcal{V} + 2)$.

\begin{lemma} \label{lemma:views_entered}
No correct process enters any view greater than $\min(t + 1, \mathcal{V} + 2)$.
\end{lemma}
\begin{proof}
If $\mathcal{V} + 2 > t + 1$, the lemma trivially holds.
Otherwise, Lemma~\ref{lemma:first_commit} proves that all correct processes commit a digest by (the end of) view $\mathcal{V} + 1$.
As any correct process that commits a digest in some view $V$ only enters the next view $V + 1$, no correct process enters any view greater than $\mathcal{V} + 2$.
\end{proof}

The following lemma proves that $\min(t + 1, \mathcal{V} + 2) \in O(f)$.

\begin{lemma} \label{lemma:min_o_f}
$\min(t + 1, \mathcal{V} + 2) \in O(f)$.    
\end{lemma}
\begin{proof}
By Lemma~\ref{lemma:commit_eventually_occurs}, $\mathcal{V} \in O(f)$. Therefore, $\min(t + 1, \mathcal{V} + 2) \in O(f)$.
\end{proof}

We are ready to prove the bit complexity of \nameopt.

\begin{theorem} [Bit complexity]
\nameopt (\Cref{algorithm:optimal_trivial}) achieves $O(nL + n^2 (f + \log n) \kappa)$ bit complexity.
\end{theorem}
\begin{proof}
Lemma~\ref{lemma:first_correct_leader} proves that, for every view $V \in [1, \mathcal{V} - 1]$, $\mathsf{leader}(V)$ is faulty.
Moreover, Lemma~\ref{lemma:views_entered} shows that no correct process enters any view greater than $\min(t + 1, \mathcal{V} + 2)$.
Lastly, Lemma~\ref{lemma:min_o_f} proves that $\min(t + 1, \mathcal{V} + 2) \in O(f)$.
Hence, in every view $V \in [1, \mathcal{V} - 1]$, each correct process sends $O(n\kappa)$ bits (by Lemma~\ref{lemma:non_leader_bits}).
Moreover, in every view $V \in [\mathcal{V}, \min(t + 1, \mathcal{V} + 2)]$, (1) $\mathsf{leader}(V)$ (if correct) sends $O(nL + n\kappa)$ bits (by Lemma~\ref{lemma:leader_bits}), and (2) every other correct process sends $O(n\kappa)$ bits (by Lemma~\ref{lemma:non_leader_bits}).
Therefore, while executing views $[1, \min(t + 1, \mathcal{V} + 2)]$, correct process send $f \cdot n \cdot O(n\kappa) + O(1) \cdot \big( O(nL + n\kappa) + n \cdot O(n\kappa) \big) = O(n^2 f \kappa) + O(nL + n\kappa) + O(n^2\kappa) = O(nL + n^2 f \kappa)$ bits.
Moreover, correct processes exchange $O(nL + n^2\kappa \log n )$ bits while executing $\mathcal{DD}$. 
Therefore, the bit complexity of \nameopt is $O(nL + n^2 f \kappa) + O(nL + n^2\kappa \log n) = O(nL + n^2(f + \log n)\kappa)$. 
\end{proof}

Finally, we prove that \nameopt achieves early stopping.

\begin{theorem} [Early stopping]
\nameopt (\Cref{algorithm:optimal_trivial}) achieves early stopping.
\end{theorem}
\begin{proof}
By Lemma~\ref{lemma:first_commit}, all correct processes start $\mathcal{DD}$ by the end of view $\mathcal{V} + 1$.
Therefore, Lemma~\ref{lemma:commit_eventually_occurs} proves that all correct processes start executing $\mathcal{DD}$ in $O(f)$ time (recall that each view has $6$ rounds).
The liveness property of $\mathcal{DD}$ ensures that all correct processes output a value in $O(f) + 2\delta \in O(f)$ time, which proves the theorem.
\end{proof}
\section{\name: Missing Implementation \& Proof}\label{section:name_detailed_proof}

In this section, we provide the implementation of \name's building blocks and prove the correctness and complexity of \name.
First, we introduce our implementation of the committee broadcast primitive (see \Cref{mod:committee-broadcast}) and prove its correctness and complexity (\Cref{subsection:committee_broadcast}).
Second, we present our implementation of the finisher primitive (see \Cref{mod:finisher}) and prove its correctness and complexity (\Cref{subsection:finisher_recursive}).
Lastly, we give \name's pseudocode and prove its correctness and complexity (\Cref{subsection:recursive_proof}).

\subsection{Committee Broadcast: Implementation \& Proof} \label{subsection:committee_broadcast}

In this subsection, we give an implementation of the committee broadcast primitive (see \Cref{mod:committee-broadcast}) and prove its correctness and complexity.
We start by introducing the synchronizer primitive (\Cref{subsubsection:synchronizer}), a building block of our implementation of the committee broadcast primitive.
Then, we introduce another building block of our implementation of the committed broadcast primitive -- committee dissemination (\Cref{subsubsection:committee_dissemination}).
Finally, we introduce our implementation of the committee broadcast primitive and prove its correctness and complexity (\Cref{subsubsection:committee_broadcast_implementation_proof}).

\subsubsection{Synchronizer} \label{subsubsection:synchronizer}
The specification of the synchronizer primitive is given in \Cref{mod:synchronizer}.
In brief, the synchronizer primitive ensures that correct processes are ``$2\delta$-synchronized'', i.e., the synchronizer primitive represents a barrier such that if the first correct process crosses the barrier at some time $\tau$, then all correct processes cross the barrier by time $\tau + 2\delta$.
Our implementation of the synchronizer primitive is given in \Cref{algorithm:synchronizer}.
We emphasize that \Cref{algorithm:synchronizer} is highly reminiscent of the ``amplifying'' step of Bracha's reliable broadcast algorithm~\cite{BT85}.

\begin{module}
\caption{Synchronizer}
\label{mod:synchronizer}
\footnotesize
\begin{algorithmic}[1]

\Statex \textbf{Participants:}
\begin{compactitem}
    \item $\mathsf{Entire} \subseteq \Pi$; let $x = |\mathsf{Entire}|$ and let $x'$ be the greatest integer smaller than $x / 3$.
\end{compactitem}

\medskip
\Statex \textbf{Events:}
\begin{compactitem}
    \item \emph{request} $\mathsf{wish\_to\_advance}()$: a process wishes to advance.

    \item \emph{indication} $\mathsf{advance}()$: a process advances.
\end{compactitem}

\medskip 
\Statex \textbf{Assumed behavior:}
\begin{compactitem}
    \item No correct process stops unless it has previously received an $\mathsf{advance}()$ indication.
\end{compactitem}

\medskip 
\Statex \textbf{Properties ensured only if up to $x'$ processes in $\mathsf{Entire}$ are faulty:}
\begin{compactitem}
    \item \emph{Justification:} If a correct process receives an $\mathsf{advance}()$ indication, then at least one correct process has previously invoked a $\mathsf{wish\_to\_advance}()$ request.

    \item \emph{Totality:} Let $\tau$ be the first time at which a correct process receives an $\mathsf{advance}()$ indication. 
    Then, every correct process receives an $\mathsf{advance}()$ indication by time $\tau + 2\delta$.

    \item \emph{Liveness:} Let $\tau$ be the first time by which all correct processes have invoked a $\mathsf{wish\_to\_advance}()$ request.
    Then, every correct process receives an $\mathsf{advance}()$ indication by time $\tau + \delta$.
\end{compactitem}

\medskip 
\Statex \textbf{Properties ensured even if more than $x'$ processes in $\mathsf{Entire}$ are faulty:}
\begin{compactitem}
    \item \emph{Complexity:} Each correct process sends $O(x)$ bits.
\end{compactitem}
\end{algorithmic}
\end{module}

\begin{algorithm} [ht]
\caption{Synchronizer: Pseudocode (for process $p_i$)}
\label{algorithm:synchronizer}
\begin{algorithmic} [1]
\footnotesize
\State \textbf{Participants:}
\State \hskip2em $\mathsf{Entire} \subseteq \Pi$ with $|\mathsf{Entire}| = x$ processes 

\medskip
\State \textbf{upon} $\mathsf{wish\_to\_advance}()$: 
\State \hskip2em \textbf{broadcast} $\langle \textsc{wish\_to\_advance} \rangle$ 

\medskip
\State \textbf{upon} receiving a $\langle \textsc{wish\_to\_advance} \rangle$ message from $x' + 1$ processes:
\label{line:synchronizer_received_tp1_wish_to_advance}

\State \hskip2em \textbf{broadcast} $\langle \textsc{wish\_to\_advance} \rangle$ \label{line:synchronizer_broadcast_wish_to_advance_from_wish_to_advance}

\medskip
\State \textbf{upon} receiving a $\langle \textsc{wish\_to\_advance} \rangle$ message from $2x' + 1$ processes: \label{line:synchronizer_received_2tp1_wish_to_advance}
\State \hskip2em \textbf{broadcast} $\langle \textsc{wish\_to\_advance} \rangle$
\State \hskip2em \textbf{trigger} $\mathsf{advance}()$
\end{algorithmic}
\end{algorithm}

\smallskip
\noindent \textbf{Proof of correctness \& complexity.}
The following theorem proves that \Cref{algorithm:synchronizer} implements the specification of the synchronizer primitive (\Cref{mod:synchronizer}).

\begin{theorem} [\Cref{algorithm:synchronizer} implements synchronizer]
\Cref{algorithm:synchronizer} implements the specification of the synchronizer primitive (\Cref{mod:synchronizer}).
\end{theorem}
\begin{proof}
Recall that the justification, totality and liveness properties need to be satisfied only of up to $x'$ processes in $\mathsf{Entire}$ are faulty.
To prove the theorem, we prove that \Cref{algorithm:synchronizer} satisfies all properties of the synchronizer primitive:
\begin{compactitem}
    \item Justification: 
    Let $p_i$ be any correct process that receives an $\mathsf{advance}()$ indication.
    Hence, $p_i$ has previously received a \textsc{wish\_to\_advance} message from a correct process.
    As the first correct process to broadcast a \textsc{wish\_to\_advance} message does so upon invoking a $\mathsf{wish\_to\_advance}()$ request, the justification property is satisfied.

    \item Totality: Let $p_i$ be the first correct process to receive an $\mathsf{advance}()$ indication; let this happen at some time $\tau$.
    This implies that $p_i$ has received a \textsc{wish\_to\_advance} message from $2x' + 1$ processes by time $\tau$, out of which at least $x' + 1$ are correct.
    Therefore, by time $\tau + \delta$, every correct process receives a \textsc{wish\_to\_advance} message from $x' + 1$ processes and broadcasts its own \textsc{wish\_to\_advance} messages.
    Thus, every correct process receives a \textsc{wish\_to\_advance} message from $x - x' \geq 2x' + 1$ processes by time $\tau + 2\delta$, which concludes the proof of the totality property.

    \item Liveness: By time $\tau + \delta$, every correct process receives at least $x - x' \geq 2x' + 1$ \textsc{wish\_to\_advance} messages, which proves the liveness property.

    \item Complexity: Every correct process sends only $O(x)$ \textsc{wish\_to\_advance} messages, each with $O(1)$ bits.
    Thus, every correct process sends $O(x)$ bits.
\end{compactitem}
As all properties of the synchronizer primitive are ensured by \Cref{algorithm:synchronizer}, the theorem holds.
\end{proof}

\subsubsection{Committee Dissemination} \label{subsubsection:committee_dissemination}

The specification of the committee dissemination primitive is given in~\Cref{mod:committee_dissemination_alterative}.
In brief, the committee dissemination primitive is concerned with two sets of processes: (1) $\mathsf{Committee}$, the set of processes that (might) hold the same input value, and (2) $\mathsf{Entire} \supseteq \mathsf{Committee}$, the set of processes that output the input value of the processes in $\mathsf{Committee}$.
Our implementation of the committee dissemination primitive is presented in \Cref{algorithm:delta_sync_balanced_committee_dissemination_alternative}.
We underline that the fact that the committee dissemination primitive terminates in $7 \delta$ time (see the liveness property) dictates that the $\beta$ constant in \name (see \Cref{algorithm:recursive_early_stopping_king_2}) is equal to $7$.

\begin{module}[ht]
\caption{Committee dissemination $\langle \mathsf{Entire}, \mathsf{Committee} \rangle$}
\label{mod:committee_dissemination_alterative}
\footnotesize
\begin{algorithmic}[1]

\Statex \textbf{Participants:} \BlueComment{usually, $|\mathsf{Committee}| \approx |\mathsf{Entire}|/2$}
\begin{compactitem}
    \item $\mathsf{Entire} \subseteq \Pi$; let $x = |\mathsf{Entire}|$ and let $x'$ be the greatest integer smaller than $x / 3$.
    \item $\mathsf{Committee} \subseteq \mathsf{Entire}$; let $y = |\mathsf{Committee}|$ and let $y'$ be the greatest integer smaller than $y / 3$.
\end{compactitem}

\medskip
\Statex \textbf{Events:}
\begin{compactitem}
    \item \emph{request} $\mathsf{disseminate}(v \in \mathsf{Value})$: a process disseminates a value $v$. \BlueComment{invoked only by $\mathsf{Committee}$}
    \item \emph{indication} $\mathsf{obtain}(v' \in \mathsf{Value})$: a process obtains a value $v'$. \BlueComment{received by $\mathsf{Entire}$}
\end{compactitem}

\medskip 
\Statex \textbf{Assumed behavior:}
\begin{compactitem}
    \item A correct process $p_i$ invokes a $\mathsf{disseminate}(\cdot)$ request if and only if $p_i \in \mathsf{Committee}$.

    \item No correct process stops unless it has previously obtained a value.
\end{compactitem}

\medskip 
\Statex \textbf{Properties ensured only if up to $x'$ processes in $\mathsf{Entire}$ are faulty:}
\begin{compactitem}
    \item \emph{Totality:} If a correct process obtains a value at time $\tau$, then every correct process obtains a value by time $\tau+2\delta$.

    \item \emph{Optimistic safety \& liveness:} Let all correct processes in $\mathsf{Committee}$ disseminate the same value $v^{\star}$ (once each).
    Moreover, let all correct processes do so within $2\delta$ time of each other.
    Let there be up to $y'$ faulty processes in $\mathsf{Committee}$.
    Then, the following properties are ensured:
    \begin{compactitem}
    
    \item \emph{Safety:} If a correct process obtains a value $v$, then $v = v^{\star}$.
    
    \item \emph{Liveness:} Let $\tau$ be the first time by which all correct processes in $\mathsf{Committee}$ have disseminated $v^{\star}$.
    Then, all correct processes (in $\mathsf{Entire}$) obtain a value by time $\tau + 7\delta$. 
    \end{compactitem}
\end{compactitem}

\medskip
\Statex \textbf{Properties ensured even if more than $x'$ processes in $\mathsf{Entire}$ are faulty:}
\begin{compactitem}
    \item \emph{Complexity:} Each correct process sends $O(L + x\log x)$ bits.
\end{compactitem}
\end{algorithmic}
\end{module}

\begin{algorithm} [ht]
\caption{Committee dissemination $\langle \mathsf{Entire}, \mathsf{Committee} \rangle$: Pseudocode (for process $p_i$)}
\label{algorithm:delta_sync_balanced_committee_dissemination_alternative}
\begin{algorithmic} [1]
\footnotesize

\State \textbf{Participants:}
\State \hskip2em $\mathsf{Entire} \subseteq \Pi$ with $|\mathsf{Entire}| = x$ processes
\State \hskip2em $\mathsf{Committee} \subseteq \mathsf{Entire}$ with $|\mathsf{Committee}| = y$ processes

\medskip
\State \textbf{Uses:}
\State \hskip2em Synchronizer among $\mathsf{Entire}$, \textbf{instance} $\mathcal{S}$ \BlueComment{see \Cref{mod:synchronizer}}

\medskip 
\State \textbf{Constants:}

\State \hskip2em $\mathsf{Integer}$ $\alpha = 4$

\medskip
\State \textbf{upon} $\mathsf{disseminate}(v)$: \BlueComment{executed only if $p_i \in \mathsf{Committee}$}
\State \hskip2em Let $[m_1, m_2, ..., m_{x}] \gets \mathsf{RSEnc}(v, x, x' + 1)$ 
\State \hskip2em \textbf{for each} process $p_j \in \mathsf{Entire}$:
\State \hskip4em \textbf{send} $\langle \textsc{disperse}, m_j \rangle$ to process $p_j$

\medskip
\State \textbf{upon} receiving a $\langle \textsc{disperse}, m_i \rangle$ message from $y' + 1$ processes, for some RS symbol $m_i$: \label{line:ds_bcda_received_tpp1_rs_symbols_from_committee}
\State \hskip2em \textbf{broadcast} $\langle \textsc{reconstruct}, m_i \rangle$ (to $\mathsf{Entire}$) \label{line:ds_bcda_broadcast_reconstruct}

\medskip
\State \textbf{upon} receiving $2x'+1$ $\textsc{reconstruct}$ messages: \label{line:ds_bcda_received_2tp1_reconstruct}
\State \hskip2em \textbf{invoke} $\mathcal{S}.\mathsf{wish\_to\_advance}()$

\medskip
\State \textbf{upon} $\mathcal{S}.\mathsf{advance}()$:
\State \hskip2em \textbf{wait} $\alpha \delta$ time \label{line:ds_bcda_final_waiting_step}
\State \hskip2em For every process $p_j \in \mathsf{Entire}$, let $m_j$ denote the RS symbol sent by process $p_j$ via a \textsc{reconstruct} message  \hphantom{aallll}(if such a message is not received, $m_j = \bot$). 
 \label{line:ds_bcda_received_rs_symbols}

\State \hskip2em \textbf{trigger} $\mathsf{obtain}(\mathsf{RSDec}(x'+1, x', \{(j,m_j) \,|\, p_j \in \mathsf{Entire}\})$ \label{line:ds_bcda_obtain_value} 

\end{algorithmic}
\end{algorithm}

\smallskip
\noindent \textbf{Proof of correctness.}
We start by proving the totality property.

\begin{theorem} [Totality]
\Cref{algorithm:delta_sync_balanced_committee_dissemination_alternative} satisfies totality.
\end{theorem}
\begin{proof}
Recall that the totality property needs to be satisfied only if up to $x'$ processes in $\mathsf{Entire}$ are faulty.
Let $p_i$ be a correct process that obtains a value at some time $\tau$.
Hence, $p_i$ receives an $\mathsf{advance}()$ indication from the synchronizer instance $\mathcal{S}$ at time $\tau' = \tau - 4\delta$, due to the waiting step at line~\ref{line:ds_bcda_final_waiting_step}.
The totality property of $\mathcal{S}$ guarantees that all correct processes receive an $\mathsf{advance}()$ indication by time $\tau' + 2\delta$.
Therefore, every correct process obtains a value by time $\tau' + 6\delta = \tau + 2\delta$, which means the totality property is ensured.
\end{proof}

We next prove the safety property.

\begin{theorem} [Safety]
\Cref{algorithm:delta_sync_balanced_committee_dissemination_alternative} satisfies safety.
\end{theorem}
\begin{proof}
Recall that the safety property needs to be ensured only if:
\begin{compactitem}
    \item up to $x'$ processes in $\mathsf{Entire}$ are faulty, and

    \item all correct processes in $\mathsf{Committee}$ disseminate the same value $v^{\star}$ and they do so within $2\delta$ time of each other, and

    \item up to $y'$ processes in $\mathsf{Committee}$ are faulty.
\end{compactitem}
Let $\tau$ be the time at which the first correct process invokes a $\mathsf{disseminate}(v^{\star})$ request.
By time $\tau + 2\delta$, all correct processes invoke a $\mathsf{disseminate}(v^{\star})$ request.
Hence, by time $\tau + 3\delta$, each correct process $p_j \in \mathsf{Entire}$ sends the correctly-encoded RS symbol $m_j$ of value $v^{\star}$.
Therefore, by time $\tau + 4\delta$, every correct process receives \textsc{reconstruct} messages from all correct processes.

Consider any correct process $p_i$ that triggers the $\mathsf{obtain}(\cdot)$ indication at line~\ref{line:ds_bcda_obtain_value} at some time $\tau_i$. By the waiting step at line~\ref{line:ds_bcda_final_waiting_step}, process $p_i$ received $\mathcal{S}.\mathsf{advance}()$ at time $\tau-4\delta$. By $\mathcal{S}$'s justification, some process $p_j$ triggered $\mathcal{S}.\mathsf{wish\_to\_advance}()$ at some time $\tau_j \leq \tau_i - 4 \delta$, after having received $2t+1$ \textsc{reconstruct} messages. Thus, process $p_j$ received a \textsc{reconstruct} message from a correct process at time $\tau_j$, which means $\tau \leq \tau_j$. Hence, $\tau_i \geq \tau + 4 \delta$, which guarantees  $p_i$ receives all correctly-encoded RS symbols corresponding to correct processes before triggering $\mathsf{obtain}(\cdot)$.
As all these RS symbols are for value $v^{\star}$, $p_i$ obtains $v^{\star}$, which concludes the proof.
\end{proof}

The following theorem proves the liveness property.

\begin{theorem} [Liveness]
\Cref{algorithm:delta_sync_balanced_committee_dissemination_alternative} satisfies liveness.
\end{theorem}
\begin{proof}
Recall that the liveness property needs to be ensured only if:
\begin{compactitem}
    \item up to $x'$ processes in $\mathsf{Entire}$ are faulty, and

    \item all correct processes in $\mathsf{Committee}$ disseminate the same value $v^{\star}$ and they do so within $2\delta$ time of each other, and

    \item up to $y'$ processes in $\mathsf{Committee}$ are faulty.
\end{compactitem}
Let $\tau$ be the first time by which all correct processes in $\mathsf{Committee}$ invoke a $\mathsf{disseminate}(v^{\star})$ request.
By time $\tau + 2\delta$, every correct process $p_i \in \mathsf{Entire}$ receives $2x' + 1$ \textsc{reconstruct} messages, and invokes a $\mathsf{wish\_to\_advance}()$ request on $\mathcal{S}$. The liveness property of $\mathcal{S}$ ensures that every correct process eventually receives an $\mathsf{advance}()$ indication by time $\tau + 3\delta$, waits $4\delta$ time, and triggers $\mathsf{obtain}(\cdot)$ at line~\ref{line:ds_bcda_obtain_value} by time $\tau + 7\delta$, which concludes the proof.
\end{proof}

\smallskip
\noindent \textbf{Proof of complexity.}
Finally, we prove that each correct process sends $O(L + x \log x)$ bits.

\begin{theorem} [Complexity]
Each correct process sends $O(L + x \log x)$ bits in \Cref{algorithm:delta_sync_balanced_committee_dissemination_alternative} (even if more than $x'$ processes in $\mathsf{Entire}$ are faulty).
\end{theorem}
\begin{proof}
Each RS symbols is of size $O(\frac{L}{x' + 1} + \log x) = O(\frac{L}{x} + \log x)$.
As each correct process broadcasts at most one \textsc{disperse} and one \textsc{reconstruct} message, each with $O(\frac{L}{x} + \log x)$ bits, each correct process sends $O(L + x \log x)$ bits via \textsc{disperse} and \textsc{reconstruct} messages.
Moreover, each correct process sends $O(x)$ bits in $\mathcal{S}$.
Therefore, each correct process sends $O(L + x \log x)$ bits.
\end{proof}

\subsubsection{Committee Broadcast: Implementation \& Proof} \label{subsubsection:committee_broadcast_implementation_proof}

Our implementation is presented in \Cref{algorithm:committee_broadcast}.

\begin{algorithm} [ht]
\caption{Committee broadcast $\langle \mathsf{Entire}, \mathsf{Committee}, \mathcal{VA} \rangle$: Pseudocode (for process $p_i$)}
\label{algorithm:committee_broadcast}
\begin{algorithmic} [1]
\footnotesize

\State \textbf{Participants:}
\State \hskip2em $\mathsf{Entire} \subseteq \Pi$ with $|\mathsf{Entire}| = x$ processes
\State \hskip2em $\mathsf{Committee} \subseteq \mathsf{Entire}$ with $|\mathsf{Committee}| = y$ processes

\smallskip
\State \textbf{Uses:}
\State \hskip2em Validated agreement $\mathcal{VA}$ among $\mathsf{Committee}$, \textbf{instance} $\mathcal{VA}$
\State \hskip2em Committee dissemination $\langle \mathsf{Entire}, \mathsf{Committee} \rangle$, \textbf{instance} $\mathcal{CD}$ \BlueComment{see \Cref{mod:committee_dissemination_alterative}}

\medskip
\State \textbf{Crucial property provided by $\mathcal{VA}$:}
\State \hskip2em If (1) there are up to $y'$ faulty processes in $\mathsf{Committee}$, and (2) all correct processes in $\mathsf{Committee}$ start \hphantom{aallll}executing $\mathcal{VA}$ within $2\delta$ time of each other, then all correct processes in $\mathsf{Committee}$ decide from $\mathcal{VA}$ within \hphantom{aallll}$2\delta$ time of each other. \label{line:elector_broadcast_BA_assumption}

\medskip
\State \textbf{Local variables:}
\State \hskip2em $\mathsf{Value} \times \{0, 1\}$ $(v_i, g_i) \gets \bot$ \BlueComment{$p_i$'s input}

\medskip
\State \textbf{upon} $\mathsf{input}(\mathsf{Value} \text{ } v, \{0, 1\} \ni g)$:
\State \hskip2em $(v_i, g_i) \gets (v, g)$
\State \hskip2em \textbf{invoke} $\mathcal{VA}.\mathsf{propose}(v_i)$  \BlueComment{only executed if $p_i \in \mathsf{Committee}$}\label{line:elector_broadcast_BA_propose}

\medskip
\State \textbf{upon} $\mathcal{VA}.\mathsf{decide}(v^d)$: \label{line:elector_broadcast_BA_decide}
\State \hskip2em \textbf{invoke} $\mathcal{CD}.\mathsf{disseminate}(v^d)$ \label{line:elector_broadcast_CD_disseminate}

\medskip
\State \textbf{upon} $\mathcal{CD}.\mathsf{obtain}(v^o)$: \label{line:elector_broadcast_CD_obtain}
\State \hskip2em \textbf{if} $g_i = 0$ and $\mathsf{valid}(v^o)=\mathit{true}$: \label{line:elector_broadcast_check_grade_before_returning}
\State \hskip4em \textbf{trigger} $\mathsf{output}(v^o)$ \label{line:elector_broadcast_return_obtained}
\State \hskip2em \textbf{else:}
\State \hskip4em \textbf{trigger} $\mathsf{output}(v_i)$ \label{line:elector_broadcast_return_initial}

\medskip
\State \textbf{upon} $\mathcal{B}_{\mathcal{VA}}$ bits have been sent via $\mathcal{VA}$:  
\State \hskip2em \textbf{stop executing} $\mathcal{VA}$ \BlueComment{$\mathsf{Committee}$ is overly corrupted}

\end{algorithmic}
\end{algorithm}

\smallskip
\noindent \textbf{Proof of correctness.}
Recall that the totality, stability, external validity, liveness, agreement, and \strongVal properties must be ensured only if there are up to $x' < x / 3$ faulty processes in $\mathsf{Entire}$.
We start by proving the totality property.

\begin{theorem} [Totality]
\Cref{algorithm:committee_broadcast} satisfies totality.
\end{theorem}
\begin{proof}
Let $\tau$ be the first time at which a correct process outputs a value (line \ref{line:elector_broadcast_return_obtained} or \ref{line:elector_broadcast_return_initial}); let that process be denoted by $p_i$.
Therefore, process $p_i$ obtains a value from the committee dissemination instance $\mathcal{CD}$ at time $\tau$ (line \ref{line:elector_broadcast_CD_obtain}).
The totality property of $\mathcal{CD}$ ensures that every correct process obtains a value from $\mathcal{CD}$ by time $\tau + 2\delta$, which proves the totality property of \Cref{algorithm:committee_broadcast}.
\end{proof}

Next, we prove the stability property.

\begin{theorem} [Stability]
\Cref{algorithm:committee_broadcast} satisfies stability.
\end{theorem}
\begin{proof}
The stability property is ensured due to the check at line~\ref{line:elector_broadcast_check_grade_before_returning}.
\end{proof}

The following theorem proves the external validity property.

\begin{theorem} [External validity]
\Cref{algorithm:committee_broadcast} satisfies external validity.
\end{theorem}
\begin{proof}
Let $p_i$ be a correct process that outputs some value $v$ (line \ref{line:elector_broadcast_return_obtained} or \ref{line:elector_broadcast_return_initial}).
If $p_i$ outputs $v$ at line \ref{line:elector_broadcast_return_obtained}, then $v$ is valid due to the check line~\ref{line:elector_broadcast_check_grade_before_returning}. 
Otherwise, $v$ is valid due to the assumption that any correct process inputs a valid value.
\end{proof}

Recall that the liveness, agreement, and \strongVal properties need to be satisfied only if (1) there are up to $y'$ faulty processes in $\mathsf{Committee}$, and (2) all correct processes in $\mathsf{Entire}$ start executing the committee broadcast primitive within $2\delta$ time of each other.
First, we prove liveness.

\begin{theorem} [Liveness]
\Cref{algorithm:committee_broadcast} satisfies liveness.
\end{theorem}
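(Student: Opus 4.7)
The plan is to argue that every correct process in $\mathsf{Entire}$ eventually reaches line~\ref{line:obtain_value}, so that the $\mathsf{RSDec}$ call fires and yields an output value. The whole argument rests on a simple counting: since by assumption there are at most $y'$ faulty processes in $\mathsf{Committee}$ and $|\mathsf{Committee}| = x'$, at least $x' - y'$ correct committee members exist, each of which sends a $\langle \textsc{reconstruct}, m_i\rangle$ message to every process in $\mathsf{Entire}$ during Round~1.

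First I would invoke the reliability of the authenticated point-to-point channels together with synchrony: all messages sent by correct committee members in Round~1 are received by every correct process in $\mathsf{Entire}$ by the end of Round~1. Therefore, when any correct process $p_i \in \mathsf{Entire}$ evaluates the counter $\mathit{rec}$ at line~\ref{line:received_rs_symbols}, it records $\mathit{rec} \geq x' - y'$, so the guard at line~\ref{line:received_rs_symbols_check} is satisfied and $p_i$ executes line~\ref{line:obtain_value}.

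Next I would verify that the $\mathsf{RSDec}$ invocation actually produces an output, which is a sanity check on the parameters. Recall that $\mathsf{RSDec}(k, r, T)$ needs $|T| \geq k + 2r$ to return. Here $k = y' + 1$, $r = \mathit{rec} - (x' - y')$, and $|T| = \mathit{rec}$, so the condition reduces to $\mathit{rec} \leq 2x' - 3y' - 1$. Since $y' < x'/3$ implies $3y' + 1 \leq x'$, we have $2x' - 3y' - 1 \geq x'$, and because $\mathit{rec} \leq x'$ (there are only $x'$ committee members sending symbols), the inequality holds. Hence line~\ref{line:obtain_value} produces a value and $p_i$ obtains it, as required.

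The only potentially delicate point is this last parameter check: one must be careful that the number of correctable errors $r$ is defined precisely as $\mathit{rec} - (x' - y')$ so that $r$ never exceeds the Singleton-type bound, and that integrality of $y' < x'/3$ is used to close the arithmetic. Note that this proof establishes liveness only; the fact that the decoded symbol equals $v^\star$ (i.e., safety) will be handled separately and requires the additional observation that all correct committee members encode the same $v^\star$, so that the at most $y'$ corrupt symbols in $T$ sit within the error-correction radius.
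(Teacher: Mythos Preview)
Your proof is correct and follows essentially the same approach as the paper: count the at least $x'-y'$ correct committee members, use synchrony and channel reliability to guarantee their \textsc{reconstruct} messages arrive, conclude that $\mathit{rec}\ge x'-y'$ so the guard at line~\ref{line:received_rs_symbols_check} passes and line~\ref{line:obtain_value} fires. Your additional sanity check that the parameters satisfy $|T|\ge k+2r$ (via $\mathit{rec}\le x'\le 2x'-3y'-1$) is a nice extra detail that the paper omits; it is not needed for the paper's notion of liveness (which only asks that an $\mathsf{obtain}$ indication occurs) but it does no harm and your arithmetic is correct.
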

\begin{proof}
Suppose (1) there are up to $y'$ faulty processes in $\mathsf{Committee}$, and (2) all correct processes in $\mathsf{Entire}$ start executing \Cref{algorithm:committee_broadcast} within $2\delta$ time of each other.
In this case, $\mathcal{VA}$ satisfies the properties of validated agreement: all correct processes decide the same valid value $v$ by time $\tau + \mathcal{R}_{\mathcal{VA}}$.
Hence, all correct processes disseminate value $v$ by time $\tau + \mathcal{R}_{\mathcal{VA}}$ and they do so within $2\delta$ time of each other.
The optimistic safety and liveness properties of $\mathcal{CD}$ ensure that all correct processes in $\mathsf{Entire}$ obtain the value $v$ by time $\tau + \mathcal{R}_{\mathcal{VA}} + 7\delta$, which proves the liveness property.
\end{proof}

Next, we prove the agreement property.

\begin{theorem} [Agreement]
\Cref{algorithm:committee_broadcast} satisfies agreement.
\end{theorem}
\begin{proof}
Suppose (1) there are up to $y'$ faulty processes in $\mathsf{Committee}$, and (2) all correct processes in $\mathsf{Entire}$ start executing \Cref{algorithm:committee_broadcast} within $2\delta$ time of each other.
Therefore, $\mathcal{VA}$ satisfies the properties of validated agreement: all correct processes in $\mathsf{Committee}$ decide the same valid value $v$.
We now distinguish two scenarios:
\begin{compactitem}
    \item Let there exist a correct process $p_i \in \mathsf{Entire}$ such that $p_i$ inputs $(v', 1)$ to \Cref{algorithm:committee_broadcast}, for some value $v'$.
    In this case, due to the assumptions on the committee broadcast primitive, every correct process $p_j \in \mathsf{Entire}$ inputs $(v', \cdot)$ to \Cref{algorithm:committee_broadcast}.
    As $\mathcal{VA}$ satisfies \strongVal, $v' = v$.
    Hence, all correct processes output the same value in this case.

    \item Let no correct process $p_i \in \mathsf{Entire}$ input $(\cdot, 1)$ to \Cref{algorithm:committee_broadcast}.
    In this case, all correct processes in $\mathsf{Entire}$ output $v$.
\end{compactitem}
The agreement property is satisfied as it holds in both possible scenarios.
\end{proof}

Finally, we prove the \strongVal property.

\begin{theorem} [Strong unanimity]
\Cref{algorithm:committee_broadcast} satisfies \strongVal.
\end{theorem}
\begin{proof}
Suppose (1) there are up to $y'$ faulty processes in $\mathsf{Committee}$, and (2) all correct processes in $\mathsf{Entire}$ start executing \Cref{algorithm:committee_broadcast} within $2\delta$ time of each other.
Let all correct processes input a pair $(v, \cdot)$, for some value $v$. If a correct process inputs $(v, 1)$, then it will output $v$. 
Otherwise, As $\mathcal{VA}$ satisfies \strongVal, all correct processes in $\mathsf{Committee}$ decide $v$ from $\mathcal{VA}$ and disseminate $v$.
Therefore, the safety property of $\mathcal{CD}$ ensures that no correct process outputs a value different from $v$.
\end{proof}

\smallskip 
\noindent \textbf{Proof of complexity.}
Lastly, we prove that each correct process sends $O(L + x \log x) + \mathcal{B}_{\mathcal{VA}}$ bits.

\begin{theorem} [Complexity]
Each correct process sends $O(L + x \log x) + \mathcal{B}_{\mathcal{VA}}$ bits in \Cref{algorithm:committee_broadcast} (even if more than $x'$ processes in $\mathsf{Entire}$ are faulty).
\end{theorem}
\begin{proof}
Consider any correct process $p_i$.
Process $p_i$ sends at most $\mathcal{B}_{\mathcal{VA}}$ bits in $\mathcal{VA}$ (if $p_i \in \mathsf{Committee}$).
Moreover, process $p_i$ sends $O(L + x \log x)$ bits in $\mathcal{CD}$, which concludes the proof.
\end{proof}

\subsection{Finisher: Implementation \& Proof} \label{subsection:finisher_recursive}

Our implementation of the finisher primitive (\Cref{mod:finisher}) is presented in \Cref{algorithm:finisher_trivial}.

\begin{algorithm} [ht]
\caption{Finisher: Pseudocode (for process $p_i$)}
\label{algorithm:finisher_trivial}
\begin{algorithmic} [1]
\footnotesize

\State \textbf{Participants:}
\State \hskip2em $\mathsf{Entire} \subseteq \Pi$ with $|\mathsf{Entire}| = x$ processes

\medskip
\State \textbf{Uses:}
\State \hskip2em Synchronizer among $\mathsf{Entire}$, \textbf{instance} $\mathcal{S}$ \BlueComment{see \Cref{mod:synchronizer}} 

\medskip
\State \textbf{Local variables:}
\State \hskip2em $\mathsf{Value}$ $v_i \gets \bot$ \BlueComment{$p_i$'s input}

\medskip
\State \textbf{upon} $\mathsf{input}(\mathsf{Value} \text{ } v, \{0, 1\} \ni g)$:
\State \hskip2em $v_i \gets v$
\State \hskip2em \textbf{if} $g=1$: \label{line:finisher_check}
\State \hskip4em \textbf{invoke} $\mathcal{S}.\mathsf{wish\_to\_advance}()$  \label{line:finisher_sync_wish}

\medskip
\State \textbf{upon} $\mathcal{S}.\mathsf{advance}()$: \label{line:finisher_sync_advance}
\State \hskip2em \textbf{trigger} $\mathsf{output}(v_i)$ \BlueComment{wait for the input if necessary} \label{line:finisher_output}

\end{algorithmic}
\end{algorithm}

\smallskip
\noindent \textbf{Proof of correctness \& complexity.}
The following theorem proves the correctness and complexity of \Cref{algorithm:finisher_trivial}.

\begin{theorem} [\Cref{algorithm:finisher_trivial} implements finisher]
\Cref{algorithm:finisher_trivial} implements the specification of the finisher primitive (\Cref{mod:finisher}).
\end{theorem}
\begin{proof}
We prove all the properties of the finisher primitive:
\begin{compactitem}
    \item Preservation: A correct process necessarily outputs the value that it has previously input.

    \item Justification: Assume a correct process outputs a value (line~\ref{line:finisher_output}). 
    Due to the justification property of the synchronizer instance $\mathcal{S}$, a correct process $p_j$ has invoked a $\mathcal{S}.\mathsf{wish\_to\_advance}()$ request (line~\ref{line:finisher_sync_wish}), which implies that $p_j$ has input $(\cdot, 1)$ due to the check at line~\ref{line:finisher_check}.

    \item Agreement: This property is ensured due to the justification property and the assumption that if a correct process proposes $(v,1)$, then every correct process proposes $(v, \cdot)$.

    \item Totality: Holds due to $\mathcal{S}$'s totality property and the assumption that all processes input within $2\delta$ of each other.
    
    \item Liveness: Holds due to $\mathcal{S}$'s liveness property.
    
    \item Complexity: Holds due to $\mathcal{S}$'s complexity property.
\end{compactitem}
As all properties of the finisher primitive are satisfied, the theorem holds.
\end{proof}

\subsection{Proof of Correctness \& Complexity} \label{subsection:recursive_proof}

The implementation of \name is given in \Cref{algorithm:recursive_early_stopping_king_2}.

\setlength{\lineskip}{0pt}
\begin{algorithm} [hp]
\caption{\name: Pseudocode (for process $p_i$)}
\label{algorithm:recursive_early_stopping_king_2}
\begin{algorithmic} [1]
\fontsize{7.8pt}{7pt}\selectfont
\State \textbf{Participants:} \BlueComment{due to the recursive nature of \name, we explicitly state its set of participants}
\State \hskip2em $\mathsf{Entire} \subseteq \Pi$ with $|\mathsf{Entire}| = x$ processes 

\medskip
\State \textbf{Constants:}
\State \hskip2em $\mathsf{Set}(\mathsf{Process})$ $\mathcal{H}_1 = \{p_1, p_2, ..., p_{\lceil \frac{x}{2} \rceil}\} \setminus{\{p_\ell\}}$ \BlueComment{$p_\ell$ is a default leader, e.g., $p_\ell = p_1$}
\State \hskip2em $\mathsf{Set}(\mathsf{Process})$ $\mathcal{H}_2 = \{p_{\lceil \frac{x}{2} \rceil + 1}, ..., p_x\} \setminus{\{p_\ell\}}$
\State \hskip2em $\mathsf{Integer}$ $\beta = 7$ \BlueComment{the constant $\beta$ is explained in \Cref{subsubsection:committee_dissemination}}
\State \hskip2em \textcolor{blue}{ \(\triangleright\) the concrete values of $R_1$ and $R_2$ can be computed using the recursive equation presented in \Cref{section:name_detailed_proof}}
\State \hskip2em $\mathsf{Integer}$ $R_1 = \text{the worst-case latency complexity of \name among $\mathcal{H}_1$ with up to $|\mathcal{H}_1| / 3$ failures}$ 
\State \hskip2em $\mathsf{Integer}$ $R_2 = \text{the worst-case latency complexity of \name among $\mathcal{H}_2$ with up to $|\mathcal{H}_2| / 3$ failures}$ 

\State \hskip2em $\mathsf{Time}$ $\mathit{timeout}_1 = (R_1 + 2 + \beta)\delta$
\State \hskip2em $\mathsf{Time}$ $\mathit{timeout}_2 = (R_2 + 2 + \beta)\delta$

\medskip
\State \textbf{Uses:}
\State \hskip2em Graded consensus among $\mathsf{Entire}$, \textbf{instances} $\mathcal{GC}_{\mathit{su}}$, $\mathcal{GC}_{\ell}$, $\mathcal{GC}_1$, $\mathcal{GC}_2$ \BlueComment{bits: $O(nL + n^2 \log n)$, rounds: $8$}
\State \hskip2em Committee broadcast $\langle \mathsf{Entire}, \{p_\ell\}, \name \rangle$, \textbf{instance} $\mathcal{CB}_{\ell}$
\State \hskip2em Committee broadcast $\langle \mathsf{Entire}, \mathcal{H}_1, \name \rangle$, \textbf{instance} $\mathcal{CB}_1$
\State \hskip2em Committee broadcast $\langle \mathsf{Entire}, \mathcal{H}_2, \name \rangle$, \textbf{instance} $\mathcal{CB}_2$
\State \hskip2em Finisher, \textbf{instances} $\mathcal{F}_{\ell}$, $\mathcal{F}_2$

\medskip
\State \textbf{Local variables:}
\State \hskip2em $\mathsf{Value}$ $v_i \gets \bot$ \BlueComment{$p_i$'s proposal}
\State \hskip2em $\mathsf{Timer}$ $\mathit{timer}_0$, $\mathit{timer}_1$, $\mathit{timer}_2$

\medskip
\State \textbf{upon} $\mathsf{propose}(\mathsf{Value} \text{ } v)$: \label{line:rec_es2_propose}
\State \hskip2em $v_i \gets v$
\State \hskip2em \textbf{invoke} $\mathcal{GC}_{\mathit{su}}.\mathsf{propose}(v_i)$ \label{line:rec_es2_GCsv_propose}

\medskip
\State \textbf{upon} $\mathcal{GC}_{\mathit{su}}.\mathsf{decide}(\mathsf{Value} \text{ } v_0, \{0, 1\} \ni g_0)$: \label{line:rec_es2_GCsv_decide}
\State \hskip2em $v_i \gets v_0$ \label{line:rec_es2_GCsv_update_opinion}
\State \hskip2em $\mathit{timer}_0.\mathsf{measure}\big( (2+\beta)\delta \big)$ \label{line:rec_es2_GCsv_timer}
\State \hskip2em \textbf{invoke} $\mathcal{CB}_{\ell}.\mathsf{input}(v_0, g_0)$ \label{line:rec_es2_EBl_propose}

\medskip
\State \textbf{upon} $\mathcal{CB}_{\ell}.\mathsf{output}(\mathsf{Value} \text{ } v')$: \label{line:rec_es2_EBl_output}
\State \hskip2em \textbf{invoke} $\mathcal{GC}_\ell.\mathsf{propose}(v')$ \label{line:rec_es2_GCl_propose_after_EBL}

\medskip
\State \textbf{upon} $\mathit{timer}_0$ expires: \label{line:rec_es2_timer0_expires}
\State \hskip2em \textbf{invoke} $\mathcal{GC}_\ell.\mathsf{propose}(v_i)$ \label{line:rec_es2_GCl_propose_after_timer}

\medskip
\State \textbf{upon} $\mathcal{GC}_\ell.\mathsf{decide}(\mathsf{Value} \text{ } v_\ell, \{0, 1\} \ni g_\ell)$: \label{line:rec_es2_GCl_decide}
\State \hskip2em  $v_i \gets v_\ell$ \label{line:rec_es2_GCl_update_opinion}
\State \hskip2em \textbf{invoke} $\mathcal{F}_{\ell}.\mathsf{input}(v_\ell, g_\ell)$ \label{line:rec_es2_fin1_to_finish}
\State \hskip2em \textbf{wait for} $3\delta$  \label{line:rec_es2_wait_1} \BlueComment{if $p_\ell$ is correct and $f<n/3$, decision is made at line~\ref{line:recursive_early_stopping2_decide_after_first_finisher}, without additional step}
\State \hskip2em \textbf{invoke} $\mathcal{GC}_1.\mathsf{propose}(v_\ell)$ \label{line:rec_es2_GC1_propose}

\medskip
\State \textbf{upon} $\mathcal{F}_{\ell}.\mathsf{output}(\mathsf{Value} \text{ } v)$: \label{line:rec_es2_fin1_finish}
\State \hskip2em \textbf{trigger} $\mathsf{decide}(v)$ \label{line:recursive_early_stopping2_decide_after_first_finisher}\BlueComment{will be triggered after $O(\delta)$ time if $p_\ell$ is correct and $f<n/3$}

\State \hskip2em  \textbf{wait for} $2\delta$, and \textbf{stop executing} \name \label{line:rec_es2_halt1}

\medskip
\State \textbf{upon} $\mathcal{GC}_1.\mathsf{decide}(\mathsf{Value} \text{ } v_1, \{0, 1\} \ni g_1)$: \label{line:rec_es2_GC1_decide}
\State \hskip2em  $v_i \gets v_1$: \label{line:rec_es2_GC1_update_opinion}

\State \hskip2em \textbf{invoke} $\mathit{timer}_1.\mathsf{measure}(\mathit{timeout}_1)$ \label{line:rec_es2_timeout1_measure}
\State \hskip2em \textbf{invoke} $\mathcal{CB}_{1}.\mathsf{input}(v_1, g_1)$ \label{line:rec_es2_EB1_propose}

\medskip

\State \textbf{upon} $\mathcal{CB}_{1}.\mathsf{output}(\mathsf{Value} \text{ } v')$: \label{line:rec_es2_EB1_output}
\State \hskip2em \textbf{invoke} $\mathcal{GC}_2.\mathsf{propose}(v')$ \label{line:rec_es2_GC2_propose_after_EB}

\medskip
\State \textbf{upon} $\mathit{timer}_1$ expires: \label{line:rec_es2_timer1_expires}
\State \hskip2em \textbf{invoke} $\mathcal{GC}_2.\mathsf{propose}(v_i)$ \label{line:rec_es2_GC2_propose_after_timer}

\medskip
\State \textbf{upon} $\mathcal{GC}_2.\mathsf{decide}(\mathsf{Value} \text{ } v_2, \{0, 1\} \ni g_2)$: \label{line:rec_es2_GC2_decide}
\State \hskip2em  $v_i \gets v_2$ \label{line:rec_es2_GC2_update_opinion}
\State \hskip2em \textbf{invoke} $\mathcal{F}_2.\mathsf{input}(v_2, g_2)$ \label{line:rec_es2_fin2_to_finish}
\State \hskip2em \textbf{wait for} $3\delta$  \label{line:rec_es2_wait2} \BlueComment{if $\mathcal{H}_1$ is healthy and $f<n/3$, decision is made at line~\ref{line:recursive_early_stopping2_decide_after_second_finisher}, without additional step}
\State \hskip2em \textbf{invoke} $\mathit{timer}_2.\mathsf{measure}(\mathit{timeout}_2)$ \label{line:rec_es2_timeout2_measure}
\State \hskip2em \textbf{invoke} $\mathcal{CB}_{2}.\mathsf{input}(v_2, g_2)$ \label{line:rec_es2_EB2_input}

\medskip
\State \textbf{upon} $\mathcal{F}_2.\mathsf{output}(v)$: \label{line:rec_es2_fin2_finish}
\State \hskip2em \textbf{trigger} $\mathsf{decide}(v)$ \label{line:recursive_early_stopping2_decide_after_second_finisher}\BlueComment{will be triggered after $O(f\delta )$ time if the first half is healthy, but not $p_1$}
\State \hskip2em  \textbf{wait for} $2\delta$, and \textbf{stop executing} \name \label{line:rec_es2_halt2}

\medskip
\State \textbf{upon} $\mathcal{CB}_{2}.\mathsf{output}(v')$: \label{line:rec_es2_EB2_output}
\State \hskip2em \textbf{trigger} $\mathsf{decide}(v')$, \textbf{wait for} $2\delta$, and \textbf{stop executing} \name \label{line:rec_es2_decide_after_EB2}

\medskip
\State \textbf{upon} $\mathit{timer}_2.\mathsf{expires}$: \BlueComment{after $O(t\delta)$ time} \label{line:rec_es2_timer2_expires} 
\State \hskip2em \textbf{trigger} $\mathsf{decide}(v_i)$, \textbf{wait for} $2\delta$, and \textbf{stop executing} \name \label{line:rec_es2_decide_after_timer2}\BlueComment{implies $f>n/3$}

\end{algorithmic}
\end{algorithm}

\smallskip
\noindent \textbf{Proof of correctness.}
As \name is a recursive distributed algorithm, we start by showing that \name executed among a single process is correct.

\begin{theorem} [\name is correct when $x = 1$]
If $x = 1$ (i.e., the system contains a single process), then $\name$ is a correct validated agreement algorithm.
\end{theorem}
\begin{proof}
Let $p_i$ be the only correct process. 
Strong unanimity follows from the fact that $p_i$ decides its proposal.
Due to the assumption that $p_i$'s proposal is valid, the external validity property is satisfied.
Finally, termination, integrity, and agreement trivially hold.
\end{proof}

The next lemma is crucial.
It proves that all correct processes behave according to assumptions specified in the specification of \name's building blocks.
Recall that we assume the following: (1) all correct processes propose to \name within $2\delta$ time of each other, and (2) all correct processes propose valid values to \name.
Moreover, the specification of \name needs to be ensured \emph{only if} there are less than $x / 3$ failures in $\mathsf{Entire}$; recall that $\mathsf{Entire}$ is the set of processes that operates in \name.

\begin{lemma} \label{lemma:big_lemma}
Let $\mathcal{O}$ (resp., $\mathcal{O}$') be any object in $\{ \mathcal{GC}_{su}, \mathcal{GC}_\ell, \mathcal{GC}_1, \mathcal{GC}_2, \mathcal{CB}_\ell, \mathcal{CB}_1, \mathcal{CB}_2, \mathcal{F}_{\ell}, \mathcal{F}_2  \}$ (resp., $\{ \mathcal{CB}_\ell, \mathcal{CB}_1, \mathcal{CB}_2, \mathcal{F}_{\ell}, \mathcal{F}_2 \}$).
The following holds:
\begin{compactitem}
    \item All correct processes input valid values only to $\mathcal{O}$ and they do so within $2\delta$ time of each other.

    \item If a correct process inputs a pair $(v, 1)$, for any value $v$, to any object $\mathcal{O}'$, no correct processes inputs a value different from $(v, \cdot)$ to $\mathcal{O}'$.

    \item All correct processes decide a valid value from \name and do so within $2\delta$ time of each other. 
\end{compactitem}
\end{lemma}
\begin{proof}
Let $p_i$ and $p_j$ denote any two correct processes.
To prove the lemma, we consider every considered object:
\begin{compactitem}
    \item $\mathcal{GC}_{su}$: Process $p_i$ proposes a valid value to $\mathcal{GC}_{su}$ (line~\ref{line:rec_es2_GCsv_propose}) as (1) $p_i$ forwards to $\mathcal{GC}_{su}$ its proposal to \name, and (2) $p_i$'s proposal to \name is valid.
    Moreover, all correct processes propose to $\mathcal{GC}_{su}$ within $2\delta$ time of each other as they all propose to \name within $2\delta$ time of each other.

    \item $\mathcal{CB}_{\ell}$:
    Process $p_i$ inputs a valid value $v$ to $\mathcal{CB}_{\ell}$ (line~\ref{line:rec_es2_EBl_propose}) as (1) $p_i$ decides $v$ from $\mathcal{GC}_{su}$, (2) the justification property of $\mathcal{GC}_{su}$ ensures that $v$ was proposed to $\mathcal{GC}_{su}$ by a correct process, and (3) only valid values are proposed to $\mathcal{GC}_{su}$ by correct processes.
    Moreover, all correct processes input to $\mathcal{CB}_{\ell}$ within $2\delta$ time of each other as all correct processes decide from $\mathcal{GC}_{su}$ within $2\delta$ time of each other (as $\mathcal{GC}_{su}$ is executed for a fixed time duration and all correct processes propose to $\mathcal{GC}_{su}$ within $2\delta$ time of each other).

    Suppose $p_i$ inputs a pair $(v, 1)$ to $\mathcal{CB}_\ell$, for some value $v$.
    That implies that $p_i$ has previously decided $(v, 1)$ from $\mathcal{GC}_{su}$.
    The consistency property of $\mathcal{GC}_{su}$ implies that $p_j$ decides $(v, \cdot)$ from $\mathcal{GC}_{su}$ and inputs the same pair to $\mathcal{CB}_{\ell}$.

    \item $\mathcal{GC}_{\ell}$:
    Let us consider two possibilities for $p_i$'s proposal $v$ to $\mathcal{GC}_{\ell}$:
    \begin{compactitem}
        \item Suppose $p_i$ proposes $v$ at line~\ref{line:rec_es2_GCl_propose_after_EBL}. 
        In this case, $v$ is a valid value due to the external validity property of $\mathcal{CB}_{\ell}$.

        \item Suppose $p_i$ proposes $v$ at line~\ref{line:rec_es2_GCl_propose_after_timer}.
        In this case, $v$ is $p_i$'s decision from $\mathcal{GC}_{su}$.
        Therefore, $v$ is valid as (1) it was proposed by a correct process to $\mathcal{GC}_{su}$ (due to the justification property of $\mathcal{GC}_{su}$), and (2) correct processes propose only valid values to $\mathcal{GC}_{su}$.    \end{compactitem}

    Let $p_k$ denote the first correct process to propose to $\mathcal{GC}_{\ell}$; let that happen at some time $\tau$.
    We distinguish two scenarios:
    \begin{compactitem}
        \item Let $p_k$ propose at line~\ref{line:rec_es2_GCl_propose_after_EBL}.
        In this case, the totality property of $\mathcal{CB}_{\ell}$ ensures that all correct processes propose to $\mathcal{CB}_\ell$ by time $\tau + 2\delta$.

        \item Let $p_k$ propose at line~\ref{line:rec_es2_GCl_propose_after_timer}.
        As all correct processes set their timer $\mathit{timer}_0$ within $2\delta$ time of each other, $\mathit{timer}_0$ at any correct process expires by time $\tau + 2\delta$.
    \end{compactitem}

    \item $\mathcal{F}_{\ell}$:
    Process $p_i$ inputs a valid value $v$ to $\mathcal{F}_{\ell}$ as (1) $p_i$ decides $v$ from $\mathcal{GC}_\ell$, (2) the justification property of $\mathcal{GC}_\ell$ ensures that $v$ was proposed by a correct process, and (3) only valid values are proposed to $\mathcal{GC}_\ell$ by correct processes.
    Moreover, all correct processes input to $\mathcal{F}_{\ell}$ within $2\delta$ time of each other, as all correct processes decide from $\mathcal{GC}_\ell$ within $2\delta$ time of each other.
    Finally, the consistency property of $\mathcal{GC}_\ell$ ensures that if a correct process inputs $(v, 1)$ to $\mathcal{F}_{\ell}$, for any value $v$, then all correct processes input $(v, \cdot)$.

    \item $\mathcal{GC}_1$: Process $p_i$ proposes a valid value $v$ as (1) $p_i$ decides $v$ from $\mathcal{GC}_\ell$, (2) the justification property of $\mathcal{GC}_\ell$ ensures that $v$ was proposed by a correct process, and (3) only valid values are proposed to $\mathcal{GC}_\ell$ by correct processes.
    Similarly, all correct processes propose to $\mathcal{GC}_1$ within $2\delta$ time of each other as all correct processes decide from $\mathcal{GC}_\ell$ within $2\delta$ time of each other and all correct processes wait for exactly $3\delta$ time before proposing to $\mathcal{GC}_1$.

    \item $\mathcal{CB}_1$: Process $p_i$ proposes a valid value $v$ to $\mathcal{CB}_1$ as (1) $p_i$ decides $v$ from $\mathcal{GC}_1$, (2) the justification property of $\mathcal{GC}_1$ ensures that $v$ was proposed to $\mathcal{GC}_1$ by a correct process, and (3) only valid values are proposed to $\mathcal{GC}_1$ by correct processes.
    Moreover, all correct processes input to $\mathcal{CB}_1$ within $2\delta$ time of each other, as all correct processes decide from $\mathcal{GC}_1$ within $2\delta$ time of each other.
    Lastly, the consistency property of $\mathcal{GC}_1$ ensures that if a correct process inputs $(v, 1)$ to $\mathcal{CB}_1$, for any value $v$, then all correct processes input $(v, \cdot)$.

    \item $\mathcal{GC}_2$: Let us consider two possibilities for $p_i$'s proposal $v$ to $\mathcal{GC}_{2}$:
    \begin{compactitem}
        \item Suppose $p_i$ proposes $v$ at line~\ref{line:rec_es2_GC2_propose_after_EB}. 
        In this case, $v$ is a valid value due to the external validity property of $\mathcal{CB}_{1}$.

        \item Suppose $p_i$ proposes $v$ at line~\ref{line:rec_es2_GC2_propose_after_timer}.
        In this case, $v$ is $p_i$'s decision from $\mathcal{GC}_1$.
        Thus, $v$ is valid as (1) it was proposed by a correct process to $\mathcal{GC}_1$ (due to the justification property of $\mathcal{GC}_1$), and (2) correct processes propose only valid values to $\mathcal{GC}_1$.
    \end{compactitem}

    Let $p_k$ denote the first correct process to propose to $\mathcal{GC}_{2}$; let that happen at some time $\tau$.
    We distinguish two scenarios:
    \begin{compactitem}
        \item Let $p_k$ propose at line~\ref{line:rec_es2_GC2_propose_after_EB}.
        In this case, the totality property of $\mathcal{CB}_{1}$ ensures that all correct processes propose to $\mathcal{GC}_2$ by time $\tau + 2\delta$.

        \item Let $p_k$ propose at line~\ref{line:rec_es2_GC2_propose_after_timer}.
        As all correct processes set their timer $\mathit{timer}_1$ within $2\delta$ time of each other, $\mathit{timer}_1$ at any correct process expires by time $\tau + 2\delta$.
    \end{compactitem}

    \item $\mathcal{F}_2$: Process $p_i$ inputs a valid value $v$ to $\mathcal{F}_2$ as (1) $p_i$ decides $v$ from $\mathcal{GC}_2$, (2) the justification property of $\mathcal{GC}_2$ ensures that $v$ was proposed by a correct process, and (3) only valid values are proposed to $\mathcal{GC}_2$ by correct processes.
    Moreover, all correct processes input to $\mathcal{F}_2$ within $2\delta$ time of each other, as all correct processes decide from $\mathcal{GC}_2$ within $2\delta$ time of each other.
    Finally, the consistency property of $\mathcal{GC}_2$ ensures that if a correct process inputs $(v, 1)$ to $\mathcal{F}_2$, for any value $v$, then all correct processes input $(v, \cdot)$.

    \item $\mathcal{CB}_2$: Process $p_i$ proposes a valid value $v$ to $\mathcal{CB}_2$ as (1) $p_i$ decides $v$ from $\mathcal{GC}_2$, (2) the justification property of $\mathcal{GC}_2$ ensures that $v$ was proposed to $\mathcal{GC}_2$ by a correct process, and (3) only valid values are proposed to $\mathcal{GC}_2$ by correct processes.
    Moreover, all correct processes input to $\mathcal{CB}_2$ within $2\delta$ time of each other, as all correct processes decide from $\mathcal{GC}_2$ within $2\delta$ time of each other.
    Lastly, the consistency property of $\mathcal{GC}_2$ ensures that if a correct process inputs $(v, 1)$ to $\mathcal{CB}_2$, for any value $v$, then all correct processes input $(v, \cdot)$.
\end{compactitem}

Lastly, we prove that all correct processes decide a valid value within $2\delta$ time of each other.
We first prove that all correct processes decide a valid value. Let us consider all possibilities for $p_i$'s decision $v$:
    \begin{compactitem}
        \item Suppose $p_i$ decides $v$ at line~\ref{line:recursive_early_stopping2_decide_after_first_finisher}.
        The preservation property of $\mathcal{F}_{\ell}$ ensures that $v$ was input to $\mathcal{F}_\ell$.
        As all correct processes input only valid values to $\mathcal{F}_\ell$, $v$ must be valid.
    
        \item Suppose $p_i$ decides $v$ at line~\ref{line:rec_es2_decide_after_EB2}. 
        In this case, $v$ is a valid value due to the external validity property of $\mathcal{CB}_{2}$.

        \item Suppose $p_i$ decides $v$ at line~\ref{line:rec_es2_decide_after_timer2}.
In this case, $v$ is the value decided from $\mathcal{GC}_2$.
Therefore, $v$ is a valid value due to (1) the justification property of $\mathcal{GC}_2$, and (2) the fact that correct processes only propose valid values to $\mathcal{GC}_2$.
    \end{compactitem}

    We now prove all correct processes decide within $2\delta$ time of each other. Let $p_k$ denote the first correct process to decide; let that happen at some time $\tau$.
    We distinguish three scenarios:
    \begin{compactitem}
        \item Let $p_k$ decide at line~\ref{line:recursive_early_stopping2_decide_after_first_finisher}.
        In this case, the totality property of $\mathcal{F}_{\ell}$ ensures that all correct processes decide by time $\tau + 2\delta$.
    
        \item Let $p_k$ propose at line~\ref{line:rec_es2_decide_after_EB2}.
        In this case, the totality property of $\mathcal{CB}_{2}$ ensures that all correct processes decide by time $\tau + 2\delta$.

        \item Let $p_k$ propose at line~\ref{line:rec_es2_decide_after_timer2}.
        As all correct processes set their timer $\mathit{timer}_2$ within $2\delta$ time of each other, $\mathit{timer}_2$ at any correct process expires by time $\tau + 2\delta$.
    \end{compactitem}
    Let us remark that no process stops before every correct process decides, because of the $2\delta$ waiting step, at lines \ref{line:rec_es2_halt1}, \ref{line:rec_es2_halt2}, \ref{line:rec_es2_decide_after_EB2} and \ref{line:rec_es2_decide_after_timer2}.
Thus, the lemma holds.
\end{proof}

The following theorem proves that \name satisfies \strongVal.

\begin{theorem} [Strong unanimity] \label{theorem:recursive_early_stopping_strong_validity}
\name (\Cref{algorithm:recursive_early_stopping_king_2}) satisfies \strongVal. 
\end{theorem}
\begin{proof}
Suppose all correct processes propose the same value $v$; recall that all correct processes do so within $2\delta$ time of each other.
Hence, all correct processes decide $(v, 1)$ from $\mathcal{GC}_{su}$, which implies that all correct processes input $(v, 1)$ to $\mathcal{CB}_\ell$.
The stability property of $\mathcal{CB}_\ell$ ensures that all correct processes propose $v$ to $\mathcal{GC}_{\ell}$, which implies that every correct process decides $(v, 1)$ from $\mathcal{GC}_{\ell}$ (due to the strong unanimity property of $\mathcal{GC}_\ell$).
Thus, all correct processes propose $(v, 1)$ to $\mathcal{F}_{\ell}$.
Let $\tau$ denote the first time a correct process proposes to $\mathcal{F}_{\ell}$.
Lemma~\ref{lemma:big_lemma} proves that all correct processes propose to $\mathcal{F}_{\ell}$ by time $\tau + 2\delta$.
Moreover, the liveness property of $\mathcal{F}_{\ell}$ ensures that all correct processes output $v$ from $\mathcal{F}_{\ell}$ (and decide $v$ from \name) by time $\tau + 3\delta$, which is before any correct process finishes its $3\delta$ waiting (line~\ref{line:rec_es2_wait_1}).
Therefore, \strongVal is ensured.
\end{proof}

Next, we prove the external validity property.

\begin{theorem} [External validity]
\name (\Cref{algorithm:recursive_early_stopping_king_2}) satisfies external validity.
\end{theorem} \label{theorem:recursive_early_stopping_external_validity}
\begin{proof}
The property is ensured due to Lemma~\ref{lemma:big_lemma}.
\end{proof}

We now prove that if process $p_\ell$ is correct, then all correct processes agree on a common value after $\mathcal{F}_{\ell}$ (without executing $\mathcal{GC}_1$).

\begin{lemma} \label{lemma:deciding_immediately_with_a_correct_leader}
 If process $p_\ell$ is correct, correct processes agree on a common value after the first finisher $\mathcal{F}_{\ell}$ at line~\ref{line:recursive_early_stopping2_decide_after_first_finisher} in $O(\delta)$ time (and without executing $\mathcal{GC}_1$).
\end{lemma}
\begin{proof}
    Suppose $p_\ell$ is correct. Then, every correct process outputs the same value $v$ from $\mathcal{CB}_\ell$ (line~\ref{line:rec_es2_EBl_output}) before $\mathit{timer}_0$'s expiration (line~\ref{line:rec_es2_timer0_expires}) due to $\mathcal{CB}_\ell$'s optimistic consensus property. Hence, all correct processes propose $v$ to $\mathcal{GC}_\ell$ within $2\delta$ time of each other (line~\ref{line:rec_es2_GC1_propose}). Due to $\mathcal{GC}_\ell$'s strong unanimity, every correct process outputs $(v,1)$ from $\mathcal{GC}_\ell$ (line~\ref{line:rec_es2_GC1_decide}). This means that all correct processes propose $(v,1)$ to $\mathcal{F}_{\ell}$ within $2\delta$ time of each other (line~\ref{line:rec_es2_fin1_to_finish}). Finally, due to $\mathcal{F}_{\ell}$'s liveness, every correct process outputs $v$ from $\mathcal{F}_{\ell}$ (line~\ref{line:rec_es2_fin1_finish}), decides $v$ and stops at line~\ref{line:recursive_early_stopping2_decide_after_first_finisher} before the end of the $3\delta$-waiting step (line~\ref{line:rec_es2_wait_1}).
\end{proof}

The next lemma proves a connection between (1) $\mathcal{F}_\ell$ and $\mathcal{GC}_\ell$, and (2) $\mathcal{F}_2$ and $\mathcal{GC}_2$. 

\begin{lemma}\label{lemma:deciding_via_finisher_implies_unanimity}
    If a correct process decides some value $v$ after  $\mathcal{F}_{\ell}$ at line~\ref{line:recursive_early_stopping2_decide_after_first_finisher} (resp., $\mathcal{F}_2$ at line~\ref{line:recursive_early_stopping2_decide_after_second_finisher}), then every correct process that decides from $\mathcal{GC}_\ell$ (resp., $\mathcal{GC}_2$) does so with a pair $(v, \cdot)$.
\end{lemma}
\begin{proof}
    Assume a correct process decides $v$ at line~\ref{line:recursive_early_stopping2_decide_after_first_finisher} (resp., line~\ref{line:recursive_early_stopping2_decide_after_second_finisher})  upon getting an output at line~\ref{line:rec_es2_fin1_finish} (resp., line~\ref{line:rec_es2_fin2_finish}) from $\mathcal{F}_{\ell}$ (resp., $\mathcal{F}_2$). By finisher's justification and preservation, some correct process inputs $(v,1)$ to $\mathcal{F}_{\ell}$ (resp., $\mathcal{F}_2$) at line~\ref{line:rec_es2_fin1_to_finish} (resp., line~\ref{line:rec_es2_fin2_to_finish}). By the consistency property of $\mathcal{GC}_\ell$ (resp., $\mathcal{GC}_2$), every correct process that outputs a pair from $\mathcal{GC}_\ell$ (resp., $\mathcal{GC}_2$) at line~\ref{line:rec_es2_GCl_decide} (resp., line~\ref{line:rec_es2_GC2_decide}) does so with a pair $(v, \cdot)$.
\end{proof}

Next, we prove that correct processes decide the same value within $2\delta$ time of each other if a correct process decides after $\mathcal{F}_\ell$.

\begin{lemma}\label{lemma:first_finisher_preserves_agreement}
    If a correct process decides some value $v$ at time $\tau$ after $\mathcal{F}_{\ell}$ at line~\ref{line:recursive_early_stopping2_decide_after_first_finisher}, then every correct process decides $v$ by time $\tau + 2\delta$.
\end{lemma}
\begin{proof}
    If a correct process decides some value $v$ at time $\tau$ after the first finisher $\mathcal{F}_{\ell}$ at line~\ref{line:recursive_early_stopping2_decide_after_first_finisher}, then Lemma~\ref{lemma:deciding_via_finisher_implies_unanimity} implies that every correct process that outputs from $\mathcal{GC}_\ell$ does so with a pair $(v, \cdot)$. Thus, due to $\mathcal{GC}_1$'s strong unanimity, no correct process outputs a pair different from $(v,1)$ from $\mathcal{GC}_1$. Hence, no correct process can decide a value different from $v$, due to strong unanimity of $\mathcal{GC}_2$ and stability of $\mathcal{CB}_{2}$. Moreover, by $\mathcal{F}_{\ell}$'s totality, every correct process decides $v$ by time $\tau + 2\delta$. 
\end{proof}

The following lemma proves that if the first half is ``healthy'', then all correct processes agree on a value.
In the following, we define $\mathcal{L}(x,f_x)$ as the latency of \name, executed among $x$ processes with $f_x$ actual failures.

\begin{lemma}\label{lemma:early_stopping_with_an_healthy_first_half}
Let $f_1$ denote the actual number of failures in $\mathcal{H}_1$ and let $f_1 < |\mathcal{H}_1| / 3$.
Then, all correct processes decide the same value $v$ from \name and they so in $O(\delta) + \mathcal{L}(|\mathcal{H}_1|, f_1)$ time since starting \name (and without executing $\mathcal{CB}_2$).
\end{lemma}
\begin{proof}
If a correct process decides some value $v$ at time $\tau$ after $\mathcal{F}_{\ell}$ at line~\ref{line:recursive_early_stopping2_decide_after_first_finisher}, then every correct process decides $v$ by time $\tau + 2\delta$ (by Lemma~\ref{lemma:first_finisher_preserves_agreement}).
Suppose no correct process decides after $\mathcal{F}_{\ell}$ at line~\ref{line:recursive_early_stopping2_decide_after_first_finisher} by time $\tau \in O(\delta) + \mathcal{L}(|\mathcal{H}_{1}|, f_1)$.
Hence, all correct processes participate in $\mathcal{CB}_{1}$ and they start doing so within $2\delta$ of each other; let all correct processes start $\mathcal{CB}_1$ by some time $\tau'$.
Due to the optimistic consensus property of $\mathcal{CB}_{1}$, all correct processes output (line~\ref{line:rec_es2_EB1_output}) the same value $v'$ from $\mathcal{CB}_{1}$ by time $\tau' + \mathcal{L}(|\mathcal{H}_{1}|, f_1) + 7\delta$. Thus, all correct processes propose $v'$ to $\mathcal{GC}_2$ (line~\ref{line:rec_es2_GC2_propose_after_EB}), which means, due to $\mathcal{GC}_2$'s strong 
unanimity and $\mathcal{F}_2$'s liveness that all correct processes decide (line~\ref{line:recursive_early_stopping2_decide_after_second_finisher}) a common value by time $O(\delta) + \mathcal{L}(|\mathcal{H}_{1}|, f_1)$ before the end of the $3\delta$-waiting step (line~\ref{line:rec_es2_wait2}).

In all cases, correct processes decide a common value by time $O(\delta) + \mathcal{L}(|\mathcal{H}_{1}|, f_1)$ without executing $\mathcal{CB}_{2}$.
Therefore, the lemma holds.
\end{proof}

Finally, we are ready to prove the agreement property of \name.

\begin{theorem} [Agreement] \label{lemma:rec_es_agreement}
\name satisfies agreement.
\end{theorem}
\begin{proof}
If $p_\ell$ is correct or $\mathcal{H}_{1}$ is ``healthy'' (containing less than one-third of faulty processes), the lemma holds due to Lemma~\ref{lemma:deciding_immediately_with_a_correct_leader} or Lemma~\ref{lemma:early_stopping_with_an_healthy_first_half}, respectively. 
Assume $p_\ell$ is faulty and $\mathcal{H}_{1}$ is not healthy. Hence, $\mathcal{H}_{2}$ must be healthy.
If a correct process decides some value $v$ at time $\tau$ after $\mathcal{F}_{\ell}$ at line~\ref{line:recursive_early_stopping2_decide_after_first_finisher}, then every correct process decides $v$ by time $\tau + 2\delta$ (by Lemma~\ref{lemma:first_finisher_preserves_agreement}).
Assume no correct process decides after $\mathcal{F}_{\ell}$ at line~\ref{line:recursive_early_stopping2_decide_after_first_finisher}.

If some correct process decides after $\mathcal{F}_2$ at line~\ref{line:recursive_early_stopping2_decide_after_second_finisher} by time $\tau_2$, then, by Lemma~\ref{lemma:deciding_via_finisher_implies_unanimity}, every correct process outputs $(v,\cdot)$ from $\mathcal{GC}_{2}$ (line~\ref{line:rec_es2_GC2_decide}). Then, due to $\mathcal{CB}_{2}$'s optimistic consensus property, every correct process either decides $v$ at line~\ref{line:rec_es2_decide_after_EB2} (obtained from  $\mathcal{CB}_{2}$ at line~\ref{line:rec_es2_EB2_output}) or first decides $v$ at line~\ref{line:recursive_early_stopping2_decide_after_second_finisher} after the second finisher $\mathcal{F}_2$.

Assume no correct process decides after the second finisher $\mathcal{F}_2$ at line~\ref{line:recursive_early_stopping2_decide_after_second_finisher}. Then, by $\mathcal{CB}_{2}$'s optimistic consensus, all correct processes agree at line~\ref{line:rec_es2_decide_after_EB2} on some value.
\end{proof}

We next prove that \name satisfies integrity.

\begin{theorem} [Integrity]
\name (\Cref{algorithm:recursive_early_stopping_king_2}) satisfies integrity.
\end{theorem}
\begin{proof}
The integrity property follows directly from the algorithm, as no correct process decides more than once.
\end{proof}

Finally, we prove the termination property of \name.

\begin{theorem} [Termination]
\name (\Cref{algorithm:recursive_early_stopping_king_2}) satisfies termination.
\end{theorem}
\begin{proof}
The termination property follows directly from Lemma~\ref{lemma:big_lemma}.
\end{proof}
    
\smallskip
\noindent \textbf{Proof of complexity.}
We start by proving \name's bit complexity.

\begin{theorem} [Bit complexity] \label{theorem:nameES_complexity}
The bit complexity of \name (\Cref{algorithm:recursive_early_stopping_king_2}) is $O\big( (nL + n^2) \log n \big)$.
\end{theorem}
\begin{proof}
To analyze the (total) bit complexity of \name, we focus on the per-process bit complexity of \name with $f$ failures.
(We underline that $f$ is the actual number of failures and it is not known to processes.)
Let us denote by $\mathcal{B}(n, f)$ the per-process bit complexity of \name executed among $n$ processes out of which exactly $f$ processes are faulty.
We express $\mathcal{B}(n, f)$ through three specific cases: (1) $n = 1$, (2) the predetermined leader $p_\ell$ is a correct process, and (3) the predetermined leader $p_\ell$ is faulty.
Moreover, let $f_1$ (resp., $f_2$) denote the actual number of faulty processes in $\mathcal{H}_1$ (resp., $\mathcal{H}_2$).
Note that $f = f_1 + f_2$ if $p_\ell$ is correct and $f = f_1 + f_2 + 1$ if $p_\ell$ is faulty.

The per-process bit complexity verifies the following inequality, where $a$ and $b$ are some positive constants related to the different modules (with $aL + b\log (n) \geq 4\mathcal{B}_{\mathcal{GC}}(n) + 3\mathcal{B}_{\mathcal{CD}}(n) + 2\mathcal{B}_{\mathcal{F}}(n)$, where $\mathcal{B}_{\mathcal{M}}(n)$ denotes the worst-case per-process bit complexity of module $\mathcal{M}$ among $n$ processes):

\begin{equation*}
\mathcal{B}(n,f) \leq
\begin{cases}
0, & \text{if $n=1$,} \\
a L + b  \log n, & \text{if $p_\ell$ is correct,} \\
a L + b  \log n + \max \big( \mathcal{B}(\frac{n}{2}, f_1), \mathcal{B}(\frac{n}{2}, f_2) \big), & \text{otherwise (with } f = 1 + f_1 + f_2 \text{)}.
\end{cases}
\end{equation*}

By induction, we show that, for every integer $n$, $\mathcal{B}(n,f) \leq c (L+n) \log n$ for some sufficiently large positive constant $c \geq \max(2b, a / \log 2)$. The base case trivially holds since $\mathcal{B}(1,0) = 0$. Let $n$ be a strictly positive integer. We assume that the result holds for any integer strictly lower than $n$.
By the induction hypothesis,
$\max \big(\mathcal{B}(\frac{n}{2}, f_1), \mathcal{B}(\frac{n}{2}, f_2)\big) \leq c(L+\frac{n}{2})\log \frac{n}{2} $.
Hence, $\mathcal{B}(n,f) \leq (a-c\log  2+c\log n)L + (b+\frac{c}{2})n\log n$.
Finally, $\mathcal{B}(n,f) \leq c (L+n) \log n$, since $c \geq \mathsf{max}(2b, a/ \log 2)$. The induction holds, which concludes the proof for the per-process bit complexity.
\end{proof}

Lastly, we prove \name's latency complexity, i.e., the fact that \name is early-stopping.

\begin{theorem} [Early-stopping]
\name (\Cref{algorithm:recursive_early_stopping_king_2}) is early-stopping.
\end{theorem}
\begin{proof}
Let us denote by $\mathcal{L}(n, f)$ the latency complexity of \name executed among $n$ processes out of which exactly $f$ processes are faulty.
(We underline that $f$ is the actual number of failures and is unknown to processes.)
Let $f_1$ (resp., $f_2$) denote the actual number of faulty processes in $\mathcal{H}_1$ (resp., $\mathcal{H}_2$).
Note that $f = f_1 + f_2$ if $p_\ell$ is correct and $f = f_1 + f_2 + 1$ if $p_\ell$ is faulty.
Let $\mathcal{L}_{\mathcal{M}}$denote the worst-case latency of module $\mathcal{M}$ given exactly $f$ failures (independent of $n$); this module $\mathcal{M}$ can be graded consensus, committee dissemination and finisher.
Let $a'$, $b'$ and $c'$ be some positive constants verifying:
\begin{equation*}
\begin{cases}
a' = (1/\delta) \cdot \big(2 \cdot 3 \cdot \mathcal{L}_{\mathcal{GC}} + \mathcal{L}_{\mathcal{CD}} + \mathcal{L}_{\mathcal{F}} + 2\big) = (2 \cdot 3 \cdot 7 + 7 + 1 + 2) = 52 \\

b' = (1/\delta) \cdot \big( 2 \cdot 3 \cdot \mathcal{L}_{\mathcal{GC}} + \mathcal{L}_{\mathcal{CD}} + \mathsf{max}(\mathcal{L}_{\mathcal{CD}}, \mathcal{L}_{\mathcal{F}}) \big) = (2 \cdot 3 \cdot 7 + 7 + 7)  =  56  \\
c' = 2a' + b' = 160.
\end{cases}
\end{equation*}

The latency of \name can be expressed as follows:
\begin{equation*}
\mathcal{L}(n,f) \leq
\begin{cases}
a' \delta, & \text{if $p_\ell$ is correct,} 
\\ (a'+b') \delta + \mathcal{L}(\frac{n}{2},f_1) + \mathcal{L}(\frac{n}{2},f_2), & \text{otherwise.}
\end{cases}
\end{equation*}
 
By induction, we show that $\mathcal{L}(n,f) \leq (a'+c'f)\delta$. 
Since  $\mathcal{L}(n,0) \leq a'\delta$, for any $n$,  the base case trivially holds. 
Let $n, f$ be two strictly positive integers.
We assume $\mathcal{L}(n',f') \leq (a'+c'f')\delta$, for any pair $(n',f')$ such that either (1) $n' < n$ and $f' \leq f$, or (2) $n' \leq n$ and $f' < f$, and aim to show it implies the result for the pair $(n,f)$. 

\begin{compactenum}
    \item Let $p_\ell$ be correct.
    In this case, the result immediately holds.

    \item Let $p_\ell$ be faulty. 
    In this case, $f=1+f_1+f_2$. Hence, by the induction hypothesis, the following holds: (1) $\mathcal{L}(\frac{n}{2},f_1) \leq (a'+c'f_1)$, and (2) $\mathcal{L}(\frac{n}{2},f_2) \leq (a'+c'f_2)$.
    Thus, $\mathcal{L}(\frac{n}{2},f_1) + \mathcal{L}(\frac{n}{2},f_2) \leq (2a'-c'+c'f)$.
    This further implies $\mathcal{L}(n,f) \leq (a'+b') \delta + \mathcal{L}(\frac{n}{2},f_1) + \mathcal{L}(\frac{n}{2},f_2) \leq a' \delta + (b'+2a'-c'+c'f)\delta \leq a' \delta + c'f\delta$.
\end{compactenum}
The theorem holds.
\end{proof}

\newpage
\setcounter{tocdepth}{2}
\tableofcontents

\end{document}